\newtheorem{lemma}{Lemma}
\newtheorem{theorem}{Theorem}
\newtheorem{conjecture}{Conjecture}
\renewcommand{\caption}[1]{\singlespacing\hangcaption{#1}\normalspacing}
\newcommand{\x}{x^L}
\newcommand{\y}{y^L}
\newcommand{\matG}{{\bf G}}
\newcommand{\Z}{{\mathds Z}}
\newcommand{\ep}{\epsilon}
\newcommand{\bA}{\mathsf{A}}
\newcommand{\bC}{\mathsf{C}}
\newcommand{\bG}{\mathsf{G}}
\newcommand{\bT}{\mathsf{T}}
\newcommand{\C}{{\mathcal C}}
\newcommand{\E}{{\mathcal E}}
\newcommand{\M}{{\mathcal M}}
\newcommand{\N}{{\mathcal N}}
\newcommand{\T}{{\mathcal T}}
\newcommand{\X}{{\mathcal X}}
\newcommand{\Y}{{\mathcal Y}}
\newcommand{\F}{{\mathbb F}}
\newcommand{\q}[2]{Q_{s_{#1},d_{#2}}}
\newcommand{\eras}{{\varepsilon}}
\newcommand{\peff}{p_{\rm eff}}
\newcommand{\Bernoulli}{{\rm Bernoulli}}
\newcommand{\Geometric}{{\rm Geometric}}
\newcommand{\bc}{{\bf c}}
\newcommand{\bt}{{\bf t}}
\newcommand{\by}{{\bf y}}
\newcommand{\bx}{{\bf x}}
\newcommand{\tpcx}{x^n}
\newtheorem{cor}{Corollary} 
\newcommand{\one}{\mathds 1}
\newcounter{constcount}
\newcommand{\eref}[1]{(\ref{#1})}
\newcounter{numcount}
\newcommand{\eqnum}{\stackrel{(\roman{numcount})}{=}\stepcounter{numcount}}
\newcounter{thmcnt}
  \let\Oldsection\section
\renewcommand{\section}{\stepcounter{thmcnt}\Oldsection}
\newenvironment{lemmarep}[1]{\noindent {\bf Lemma #1.}\begin{it}}{\end{it}}
\newcommand{\Pois}{{\rm Poisson}}
\newcommand{\BSC}{{\rm BSC}}
\newcommand{\BEC}{{\rm BEC}}
\newcommand{\aln}[1]{\begin{align*}#1\end{align*}}
\newcommand{\al}[1]{\begin{align}#1\end{align}}
\def\Item$#1${\item $\displaystyle#1$
   \hfill\refstepcounter{equation}(\theequation)}
\newcommand{\bea}{\begin{eqnarray}}
\newcommand{\eea}{\end{eqnarray}}
\newcommand{\beas}{\begin{eqnarray*}}
\newcommand{\eeas}{\end{eqnarray*}}
\newcommand\ML{ML} 
\newcommand\NL{NL}
\newcommand\Tex{}
\newcommand\PR[2][\Tex]{
\ifthenelse{\equal{#1}{}}{{\mathrm{Pr}}\left(#2\right)}{\ensuremath{{\mathrm{Pr}}_{#1}\left[ #2\right]}}}
\newcommand\Ex{\mathbb E} 
\newcommand\EX[2][\Tex]{
\ifthenelse{\equal{#1}{}}{{\mathbb E}\left[#2\right]}{\ensuremath{{\mathbb E}_{#1}\left[ #2\right]}}}
\newcommand\Var[2][\Tex]{
\ifthenelse{\equal{#1}{}}{{\mathrm{Var}}\left[#2\right]}{\ensuremath{{\mathrm{Var}}_{#1}\left[ #2\right]}}}
\newcommand\defeq{\coloneqq}
\renewcommand\M{M} 
\renewcommand\N{N} 
\newcommand\len{L} 
\renewcommand\q{q} 
\newcommand\vf{\mathbf{f}} 
\newcommand\cNM{\lambda} 
\newcommand{\type}{\bt}
\newcommand\ind[1]{\mathds{1}\left\{#1\right\}}
\newcommand\norm[2][\Tnorm]{\ensuremath{{\left\|#2\right\|}_{#1}}}
\newcommand\Emolseen{1-q_0}
\newcommand\Cbsc{C_{\text{BSC}}} 
\newcommand\Rbsc{R_{\text{BSC}}} 
\newcommand\Rind{R_{\text{index}}}
\title{Information-Theoretic Foundations of DNA Data Storage}
\begin{document}

\begin{center}

{
\Huge Information-Theoretic \\[0.3em]
 Foundations of DNA Data  
 Storage
}

\vspace*{.2in}

{\large{
\begin{tabular}{cccc}
Ilan Shomorony$^\star$ and Reinhard Heckel$^{\ast}$
\end{tabular}
}}

\vspace*{.05in}

\begin{tabular}{c}
$^\star$University of Illinois at Urbana-Champaign, \href{mailto:ilans@illinois.edu}{ilans@illinois.edu} \\
$^\ast$ Technical University of Munich and Rice University, 
\href{mailto:reinhard.heckel@tum.de}{reinhard.heckel@tum.de}
\end{tabular}

\vspace*{.1in}

\textbf{Abstract}
\end{center}

Due to its longevity and enormous information density, DNA is an attractive medium for archival data storage. 
Natural DNA more than 700.000 years old has been recovered, and about 5 grams of DNA can in principle hold a Zetabyte of digital information, orders of magnitude more than what is achieved on conventional storage media. 
Thanks to rapid technological advances, 
DNA storage is becoming practically feasible, as demonstrated by a number of experimental storage systems, making it a promising solution for our society's increasing need of data storage.

While in living things, DNA molecules can consist of millions of nucleotides, due to technological constraints, in practice, data is stored on many short DNA molecules, which are preserved in a DNA pool and cannot be spatially ordered. Moreover, imperfections in sequencing, synthesis, and handling, as well as DNA decay during storage, introduce random noise into the system, making the task of reliably storing and retrieving information in DNA challenging. 

This unique setup raises a natural information-theoretic question: how much information can be reliably stored on and reconstructed from millions of short noisy sequences? The goal of this monograph is to address this question by discussing the fundamental limits of storing information on DNA.
Motivated by current technological constraints on DNA synthesis and sequencing, we propose a  probabilistic channel model that captures  three key distinctive aspects of the DNA storage systems: (1) the data is written onto many short DNA molecules that are stored in an unordered fashion; (2) the molecules are corrupted by noise and (3) the data is read by randomly sampling from the DNA pool. 
Our goal is to investigate the impact of each of these key aspects on the capacity of the DNA storage system.
Rather than focusing on coding-theoretic considerations and computationally efficient encoding and decoding, we aim to build an information-theoretic foundation for the analysis of these channels, developing tools for achievability and converse arguments.

\vspace{0.5cm}
This is a preprint of the following publication: 
Ilan Shomorony and Reinhard Heckel (2022),``Information-Theoretic Foundations of DNA Data Storage'', Foundations and Trends in Communications and Information Theory: Vol. 19, No. 1, pp 1-106. DOI: 10.1561/0100000117.

\newpage


\tableofcontents

\chapter{Introduction}
\label{ch:intro}





In recent years, the number of applications that require or are enabled by digital data storage and the amount of data generated by a variety of technologies have increased dramatically. 
This has spurred significant interest in new storage technologies beyond hard drives, magnetic tapes, and memory chips. 
In this context, DNA---the molecule that carries the genetic instructions of all living organisms---emerged as a promising storage medium. 
DNA has two key advantages over conventional digital storage technologies: extreme longevity and information density.
This
makes DNA an interesting storage medium, particularly for archival storage.

Data on DNA can last very long, if stored appropriately, as nature itself proves. As demonstrated by recently sequenced DNA extracted from a mammoth tooth found in the Siberian permafrost~\citep{Valk_2021}, the information in a DNA molecule can be preserved for more than a million years. In contrast, information on memory chips lasts no more than a few years, and data on hard drives and magnetic tapes lasts no more than a few decades. 
While conventional storage media could be redesigned to preserve data longer, the longevity of DNA is currently unmatched, as illustrated in Figure~\ref{fig:longevity_density}(a).

\begin{figure} 
     \centering
\subfigure[\label{fig:longevity}]{
      \includegraphics[width=0.45\linewidth]{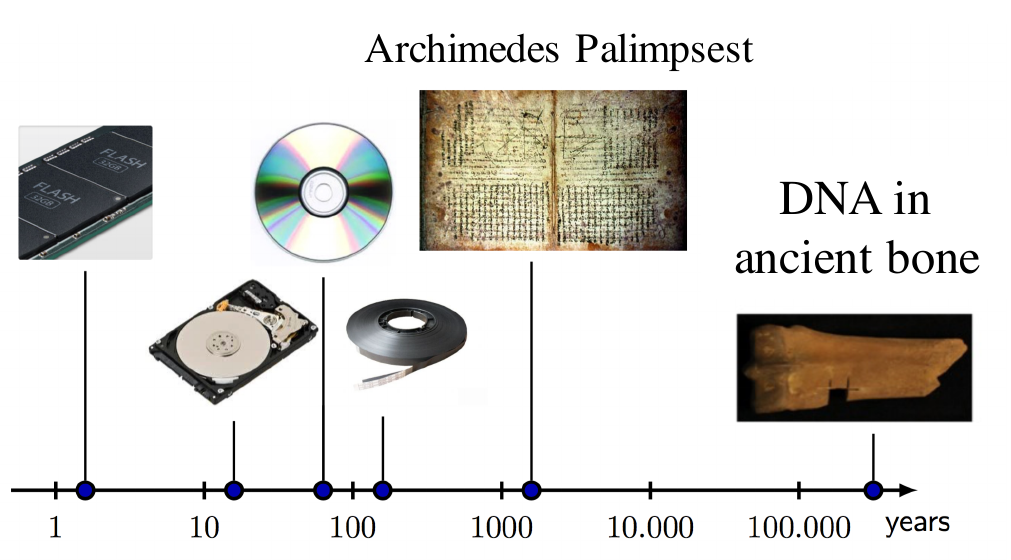}}
\subfigure[\label{fig:density}]{
      \includegraphics[width=0.45\linewidth]{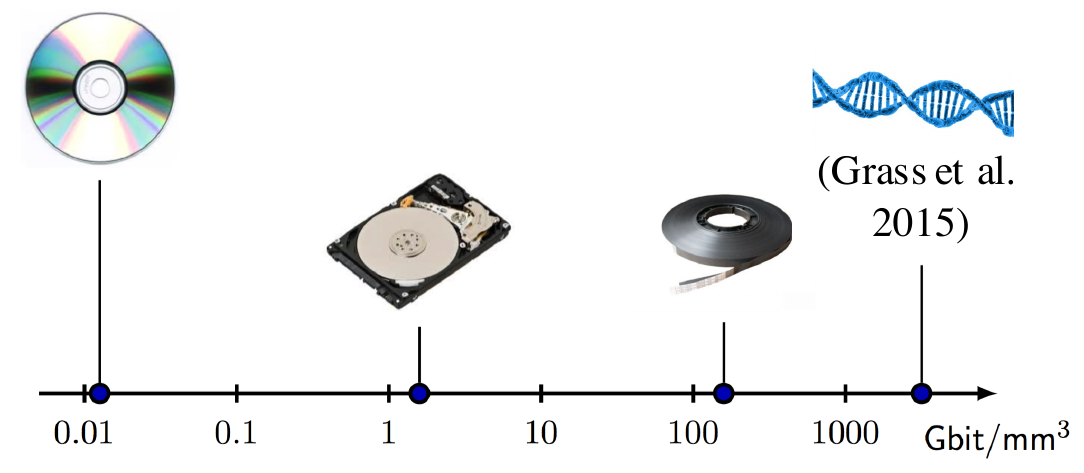}}
     \caption{(a) Longevity of different data storage media. Archimedes Palimpsest containing ``The Methods of Mechanical Theorems'' survived more than 1000 years. Recently, 700,000 year old DNA from ancient horse bones has been successfully sequenced \citep{horse}. (b) Information density of different storage media. The proof-of-concept DNA-based storage system of \citep{grass_robust_2015} achieved an information density that is over one order of magnitude higher that of magnetic tape.
     \label{fig:longevity_density}}
\end{figure}

The information density of DNA is also extremely large. 
Just 5 grams of DNA contain about $4 \cdot 10^{21}$ nucleotides, which in principle could hold $8\cdot 10^{21}$ bits, or one zettabyte. 
In a practical system, the redundancy redundancy required for error-correction coding required to build a reliable system reduces these numbers, 
but we can achieve information densities orders of magnitude larger than the highest information densities achieved on hard drives and tapes, as shown in Figure~\ref{fig:longevity_density}(b).

\section{A brief history of DNA data storage}

Computer scientists and engineers have dreamed of harnessing DNA's storage capabilities already in the 60s~\citep{neiman_fundamental_1964,baum_building_1995}, and in recent years DNA data storage, or more broadly, molecular information storage, developed into an active field of research.
In 2012 and 2013 groups lead by Church~\citep{church_next-generation_2012} and Goldman~\citep{goldman_towards_2013} independently stored about a megabyte of data in DNA. 
Later, \citet{grass_robust_2015} demonstrated that millennia-long storage times are possible by protecting the data both physically and information-theoretically, and designed a robust DNA data storage scheme using error-correcting codes. 
\citet{yazdi_rewritable_2015} showed how to selectively access files, and \citet{erlich_dna_2016} demonstrated that a DNA storage system can achieve very high information densities, close to the absolute maximum of two bits per nucleotide. 
In 2018,  \citet{organick_scaling_2017} scaled up these techniques and stored about 200 megabytes of data.
Together, these and other works 
demonstrated that writing, storing, and retrieving data using DNA as a medium
 is possible with today's technology, 
and achieves information densities and information lifetimes that are far beyond what state-of-the-art tapes and discs achieve. 


DNA is a long molecule made up of four nucleotides (Adenine, Cytosine, Guanine, and Thymine) and, for storage purposes, can be viewed as a string over a four-letter alphabet.
However, 
there are hard technological constraints for writing on DNA and for reading DNA, which need to be considered in the design of a practical DNA-based storage system. 
While in a living cell a DNA molecule may consist of millions of bases 
(the human chromosome 1, for example, is 250 million bases long),
due to practical technological constraints, it is difficult and inefficient to synthesize long strands of DNA. 
For that reason, 
 all recent works that have demonstrated working DNA storage systems stored information on molecules of no longer than $100$-$200$ nucleotides~\citep{church_next-generation_2012,goldman_towards_2013,grass_robust_2015,yazdi_rewritable_2015,erlich_dna_2016,organick_scaling_2017}.

The process of determining the order of nucleotides in a DNA molecule, or DNA sequencing, suffers from similar length constraints.
State-of-the-art sequencing platforms such as Illumina cannot sequence DNA segments longer than a few hundred nucleotides.
While recently developed, so-called third-generation technologies such as
Pacific Biosciences and Oxford Nanopore can provide reads that are several thousand bases long, their error rates and reading costs are significantly higher \citep{van2018third,pacbio_errorRate}.

Due to those constraints in the length of the DNA molecules that can be synthesized, stored and sequenced, a practical DNA-based storage system consists of many \emph{short} DNA molecules stored in an \emph{unordered fashion} in a solution.

Another technological limitation to DNA-based storage comes from the fact that state-of-the-art sequencing technologies rely on the \emph{shotgun} sequencing paradigm.
This corresponds to (randomly) sampling and reading sequences from the DNA pool. 
Furthermore, sequencing is usually preceded by several cycles of Polymerase Chain Reaction (PCR) amplification. 
In each cycle, the amount of each DNA molecule is amplified by a factor between 1.6 and 1.8, and this factor can be sequence-dependent, thus leading to very different concentration of distinct DNA molecules.
Last but not least,
the DNA molecules in a DNA-based storage system are subject to errors such as insertions, deletions, and substitutions of nucleotides at the time of synthesis, during the storage period, and during sequencing.


\section{Overview of existing DNA storage systems}

In the last decade, several research groups have shown that using today's technologies, it is possible to store on the order of megabytes of data reliably.
All systems that stored megabytes of data and demonstrated correct recovery relied on array-based synthesis, where data is stored on a set of many short sequences. 
In Table~\ref{tab:errorprobs}, we list several of these implementations with some of their key parameters in chronological order. 
All systems listed in the table stored a unique index on each sequence to deal with the shuffling character of the channel at decoding, and starting from~\citet{grass_robust_2015}, all systems used outer error-correcting codes to deal with the loss of individual sequences. 
\begin{table}
\small
\begin{center}
\begin{tabular}{l*{6}{c}r}
       & length & number & data stored & error  \\
       & of seqs. & of seqs & (in MB) & correction \\
       \hline
\cite{church_next-generation_2012} & 115 & 54,898 & 0.65 & None \\
\cite{goldman_towards_2013} & 117 & 153,335 & 0.75 & Repetition \\
\cite{grass_robust_2015} & 117 & 4,991 & 0.08 & RS \\
\cite{blawat_forward_2016} & 190 & 900,000 & 22 & RS \\
\cite{bornholt_dna-based_2016} & 120 & 45,652 & 0.15 & RS  \\
\cite{erlich_dna_2016}   & 152 & 72,000 &  2.14 & Fountain\\
\cite{organick_scaling_2017} & 150 & $13.4\cdot 10^9$ & 200.2 & RS \\
\cite{chandak2019improved} & 150 & 13,716 & 0.192 & LDPC \\
Netflix Biohackers, 2020 & 105 &  $3.88\cdot 10^9$ & 63.1 & RS \\
\cite{antkowiak_low_2020}  & 60 & 16,383 & 0.1 & RS\\
\end{tabular}
\end{center}
\caption{
\label{tab:errorprobs}
Parameters of a few DNA storage systems using array-based synthesis, listed chronologically.}
\end{table}

There are alternatives to DNA data storage based on array-based synthesis.
For example, \citet{yazdi_rewritable_2015} used column-based synthesis, where individual sequences are generated one-by-one.
With this approach, the authors successfully stored 0.017 MB on 32 sequences of length 1000 each. 
Another example is the implementation by \citet{Lee_Kalhor_Goela_Bolot_Church_2019}, which stored 18 bytes using enzymatic synthesis. 
\citet{tabatabaei_DNA-punch-cars_2020} avoided synthesis altogether, and stored data in form of nicks at certain positions on the backbone of existing DNA. 
Yet another example is the work by~\citet{yim_robust_direct_2021}, which stored data (about 72 bits) in living cells via CRISPR arrays. 
All these approaches are conceptually interesting alternatives to array-based synthesis, but scaling them to store more than a few bytes of data currently looks challenging.

Array-based synthesis generates many copies of each sequence, by building up each sequence on a different spot of the array.  It is also possible to modify array-based synthesis to  grow sequences at one spot of the array so that a predefined fraction of the sequences contain, say, nucleotide A but others contain, say,  nucleotide G at a given position. \citet{anavy_composite_2019} explored this idea and proposed the notion of  \emph{composite DNA letters} to reduce the number of synthesis cycles. 
However, the use of composite letters comes at a higher sequencing cost to guarantee that  composite letters can be identified from the sequenced DNA molecules, and also comes at a higher decoding complexity.

While the proof-of-concept implementations listed in Table~\ref{tab:errorprobs}   demonstrate that DNA storage can be practical, their overall cost is still an obstacle for them to become practically viable.
For example, using the architecture proposed in \citet{erlich_dna_2016},
the estimated cost of synthesizing 1GB of data was \$3.27 million \citep[Supplementary Material]{erlich_dna_2016}, and the current cost of storing a Megabyte of DNA is around \$500~\citep{antkowiak_low_2020}. 
However, until DNA storage becomes viable for commercial archival storage applications, there are already applications of DNA storage that are not possible with other storage media. For example, \citet{Koch_Gantenbein_Masania_Stark_Erlich_Grass_2020} demonstrated that data stored in DNA can be embedded into any product made of plastic, providing information about the product inside the product.



\section{Information Theory and DNA storage}

DNA-based storage is a fundamentally new way of storing data, due to the way DNA is written, stored, and read. 
The technology is still under development, and details such as error profiles, the exact length of synthesized DNA molecules, and the sequencing throughput are likely to change. 
An information-theoretic perspective and an understanding of the fundamental limits of DNA storage will enable a system design based on 
key conceptual insights and tradeoffs.


\paragraph{Joint design of physical system and coding schemes: } Conceptual advances in terms of how to optimally code for this new storage paradigm can inform biochemists in their development of new synthesis and sequencing technologies for building DNA storage systems.
For example, is it worth developing very expensive technologies to allow long DNA molecules to be synthesized?
Or is it possible to store data at high densities with very short molecules?
An information-theoretic perspective may provide the foundations to answering these questions, enabling the design of an efficient ``physical layer'' for DNA storage systems.

\paragraph{Emerging technologies beyond DNA: } There are many other interesting media for future storage. 
For example, synthetic polymers are also a potential substrate for data storage \citep{pattabiraman2019reconstruction}, in which case
tandem mass spectrometry could be used instead of sequencing for data retrieval \citep{laure2016coding}.
Futhermore, the idea of storing data in quartz glass \citep{anderson2018glass} also
promises to achieve incredibly high information densities.

As new technologies are proposed and compared, it is important to obtain a basic understanding of their capabilities.
In this context, an information-theoretic perspective may allow these emerging approaches to be compared at a more fundamental level, rather than based on specific prototypes.
Moreover, design principles may be transferable between them, and different technologies may be more suitable to specific applications depending on basic tradeoffs between cost, computational complexity, and reading and writing speeds.
In Section~\ref{sec:newtech} we briefly discuss additional storage capacity problems that are motivated by recent technological advances.

\paragraph{Connections with classical information theory problems: }
The growing interest on DNA data storage has sparked renewed interest in classical problems in information theory.
As several DNA sequencing technologies suffer from insertion and deletion errors, new attention has been given to the capacity of insertion/deletion channels and ``sticky'' channels \citep{cheraghchi2019sharp}.
Moreover, many of the topics explored in this monograph will be connected with permutation channels
\citep{ahlswede1987optimal,benjamin1975coding}
and, in Section~\ref{sec:short}, we discuss a connection between DNA storage in the short-molecule regime and discrete-time Poisson channels \citep{lapidoth2008capacity}.


\section{Organization of this monograph}

In Chapter~\ref{ch:model}, we formalize and discuss a general class of channels to model DNA storage systems. 
This general model, called the noisy shuffling-sampling channel, will be the main object of our information-theoretic analysis of DNA-based data storage.
In Chapter~\ref{ch:shuffled}, we start the exploration of the capacity of DNA storage systems by studying a simple noiseless shuffling-sampling channel, where the input sequences are shuffled and sampled before being observed at the channel output.
The capacity expression provides a precise understanding of the storage rate costs that the shuffling and sampling operations incur, and provide intuition on how to develop optimal codes for such channels. 
In the same chapter, we also consider channels that break the sequences at random points and shuffle the resulting pieces.

In Chapter~\ref{ch:noisy}, we study the impact of adding noise to the shuffling-sampling channels. 
Specifically we will study the capacity of channels where the sequences are not only shuffled and randomly sampled, but also contain errors, and we will discuss the impact of different noisy channel models on the overall capacity.

In Chapter~\ref{ch:multi}, we study the multi-draw nature of DNA storage channels. Since multiple copies of each sequence are typically stored in DNA storage systems, they can be sequenced with potentially different noise patterns, which can help in error correction. 
This is captured in our information-theoretic framework by considering multi-draw channels, where each sequence in the DNA library may yield multiple copies with independent noise patterns at the output.
In this setting, a natural approach is to first cluster the sequences and then solve several \emph{trace reconstruction} problems. 
We will study whether such clustering-based schemes are optimal, and we will provide results on the fundamental limits of DNA storage channels with multi-draws.

DNA storage is a relatively new field and many important questions remain unanswered. 
In Chapter~\ref{ch:discussion}, we end the monograph with a collection of important open problems and a discussion of connections 
to existing channel models and classical results in information theory.

\chapter{Channel Model}
\label{ch:model}

As a consequence of the unique way in which data is stored in DNA and retrieved from DNA, studying the fundamental limits of DNA-based data storage requires the development of new channel models.

Figure~\ref{fig:fig1} illustrates the key features of a DNA-based storage system.
The data to be stored, described by a sequence of bits, is first encoded into a set of short strings over the alphabet $\{\bA,\bC,\bG,\bT\}$. The strings have length of typically less than 200 nucleotides, because existing sequencing technologies can only synthesize short sequences. 
These strings are synthesized as actual DNA molecules, via a noisy synthesis process. 
The synthesis process typically generates multiple noisy copies of each DNA molecule, as it is currently not possible to synthesize individual molecules efficiently. 
The resulting pool of DNA molecules is either stored directly or first amplified with Polymerase Chain Reaction (PCR) and then stored. 
Prior to being stored, the DNA molecules are often protected in some way, for example they can be encapsulated in silica~\citep{grass_robust_2015} (see~\citet{Organick_Nguyen_McAmis_Chen_Kohll_Ang_Grass_Ceze_Strauss_2021} for an empirical comparison of a few preservation options).  
In this form, the DNA molecules suffer relatively small amounts of deterioration over long periods of time under appropriate environmental conditions. 

At the time of reading, the DNA is usually first amplified with PCR, effectively replicating each of the molecules a large number of times, and then sequenced.
Sequencing can be thought of as randomly sampling from the mixture. Since multiple copies of each molecule are present due to the amplification step (possibly corrupted by different noise patterns), the same molecule can be observed multiple times at the output (or not at all).
The decoder operates on the set of sequenced strings with the goal of recovering the original bit string that encodes the data.

\begin{figure}[t]
\center
\includegraphics[width=0.97\linewidth]{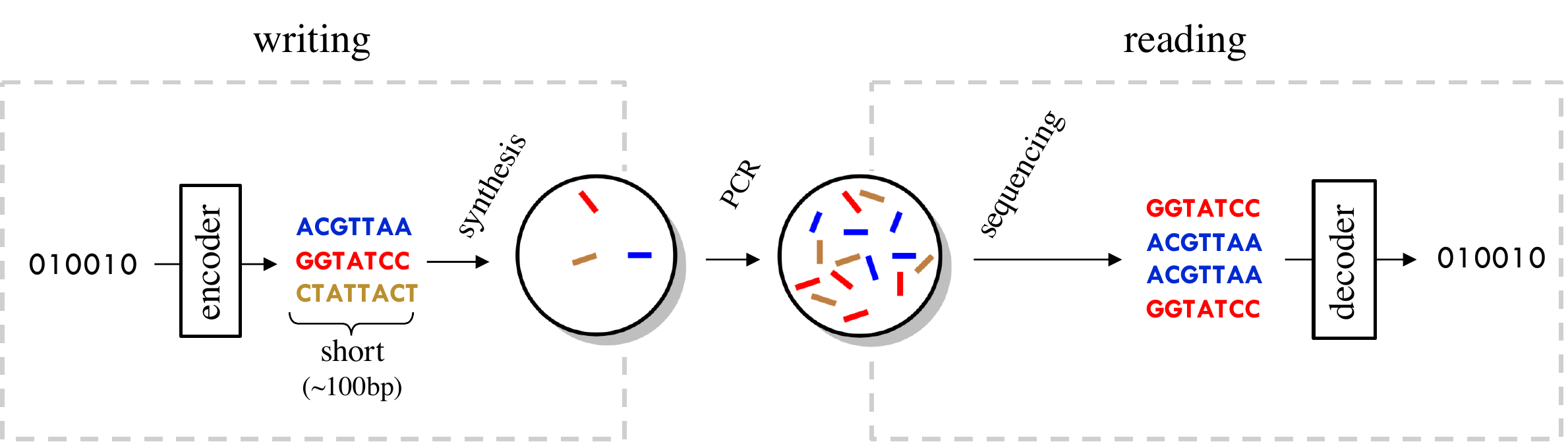}
\caption{
\label{fig:fig1}
Illustration of a DNA-based storage system. 
In the writing phase, a sequence of bits is encoded into a multi-set of short sequences over the alphabet $\{\bA,\bC,\bG,\bT\}$, which are then synthesized into actual DNA molecules and stored.
In the reading phase, PCR is often used to replicate the DNA molecules, followed by sequencing, which randomly samples molecules from the DNA library and reads them.
The decoder is applied to the resulting multi-set of sequences in order to recover the original sequence of bits.
}
\end{figure}

\section{The noisy shuffling-sampling channel model}
\label{sec:motivation_shuffling}


In order to study the fundamental limits of DNA-based data storage, we propose a general channel model, which we call the noisy shuffling-sampling channel.
This channel model captures three key elements of DNA storage:
\begin{itemize}
    \item {\bf Sampling}: Since the reading of a DNA library is done via  sequencing,
    strings are randomly sampled from the pool. Moreover, since synthesis technologies generate multiple copies of each sequence and PCR replicates the molecules in the pool, 
    the same input sequence can be observed multiple times at the output.
    
    \item {\bf Shuffling}: The ordering of the DNA molecules is lost during storage, effectively causing them to be shuffled.
    
    \item {\bf Noise}: DNA molecules are corrupted by noise at the synthesis, storage and sequencing steps.
    The type of error (substitutions, insertions, deletions) varies according to the technologies used.
\end{itemize}
Our general channel model for DNA storage is obtained by concatenating a sampling channel (which creates a random number of copies of each string), a shuffling channel (which shuffles a set of sequences uniformly) and a noisy channel (which corrupts each of the strings).





\begin{figure}[t]
\center
\includegraphics[width=0.93\linewidth]{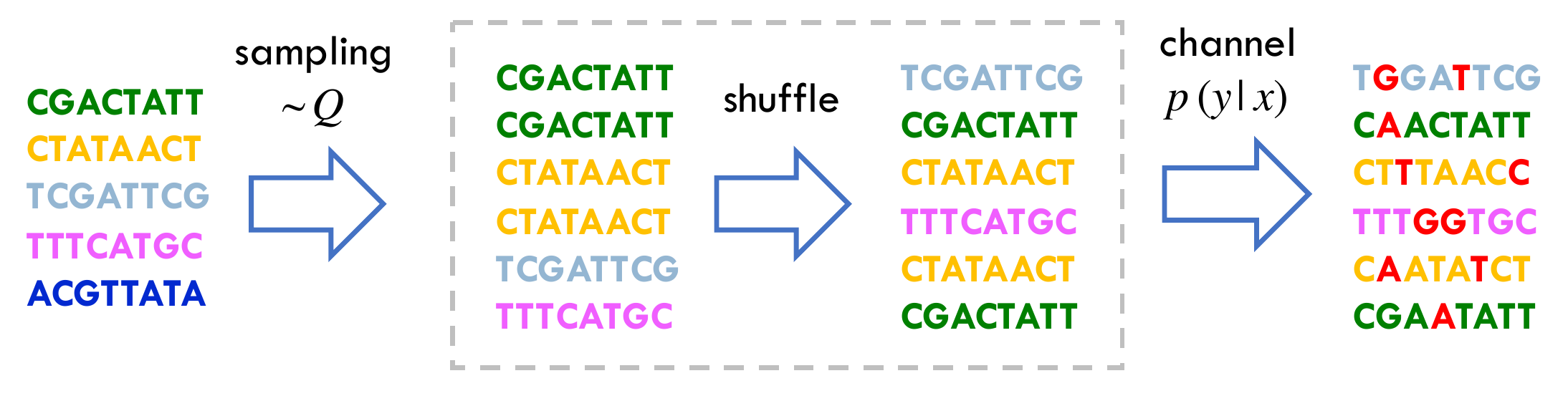}
\caption{
\label{fig:generalchannel}
The general noisy sampling-shuffling channel.
The input are $M$ sequences of length $L$.
The $i$th input sequence is sampled a number $N_i$ of times, where each $N_i$ is distributed according to a distribution $Q$.
The resulting $N$ sequences are shuffled out of order and each one passed through a noisy channel $p(y|x)$. 
}
\end{figure}



We formalize the noisy shuffling-sampling channel as follows.
The input to the channel is a list $[\x_1,\ldots,\x_M]$ of $M$ sequences of length $L$ over an alphabet $\Sigma$.
The channel performs three operations:
\begin{itemize}
\item {\bf Sampling:}  
Each sequence $\x_i$ is sampled a number $\N_i \sim Q$ of times, for some distribution $Q = (q_0,q_1,\ldots)$, 
where $\q_n = \Pr(N_i = n)$ is the probability that $n$ copies of the sequence $\x_i$ are drawn.
The sampling of different strings is assumed to be independent.
We let $N = \sum_{i=1}^\M N_i$ be the total number of resulting strings, 
and we define $\cNM \defeq \Ex[\N] /\M = \Ex[\N_i]$ to be the sequencing \emph{coverage depth} \citep{LanderWaterman,motahari2013information}.

%

\item {\bf Shuffling:} The $\N$ strings are shuffled uniformly at random. 

%


\item {\bf Noise:} Each of the $\N$ strings is independently passed through a discrete memoryless channel $p(y|x)$, producing a list of $N$ output sequences 
 $[\y_1,\ldots,\y_N]$. 
Equivalently, the output of the channel can be taken to be the (unordered) multi-set  $\{\y_1,\ldots,\y_N\}$. 
\end{itemize}

Different specifications of the sampling distribution $Q$ and the noisy channel $p(y|x)$ lead to a rich class of information-theoretic channels whose capacity is in general unknown and nontrivial.
As described throughout this monograph, several important questions related to the joint encoding of information across multiple strings arise in the context of this general channel model.

We point out that, since synthesis, storage and sequencing are all error-prone procedures, an even more general channel model would have another noisy channel prior to the sampling step.
For simplicity, we will focus on the case with a single, final, noisy channel as this is already a good modeling assumption.

\section{Motivation for the channel model and error sources}
\label{sec:error_source}



The channel model in Section~\ref{sec:motivation_shuffling} is motivated by the technologies used to synthesize, process, store, handle, and sequence DNA, and errors arise in all those steps. In this section, we discuss the technologies and sources of errors at these stages.
We refer to~\citet{heckel_channel_2019} and references therein for a more detailed descriptions of the error mechanisms.

\paragraph{Synthesis.} 
Three different array-based synthesis methods are currently used for DNA data storage. Array-based synthesis methods grow several strands simultaneously on an array. One end of the DNA molecule is attached to the array, and the nucleotides are added one by one through the following mechanisms.
The first method is a material deposition or printing-based technology commercialized by the companies Agilent and Twist Biosciences. This method was used for DNA storage by~\citet{organick_scaling_2017,erlich_dna_2016}. The second is electro-chemical, commercialized by CustomArray, and was used for DNA storage by~\citet{grass_robust_2015}. The third is based on light-directed placement of nucleotides~\citep{Singh-Gasson_Green_Yue_Nelson_Blattner_Sussman_Cerrina_1999} and was used for DNA storage by~\citet{antkowiak_low_2020}.

Each of these technologies produces thousands to millions of copies of each sequence. The printing based and electrochemical synthesis technologies produce relatively few errors within the sequences, but the light-directed technology produces many insertion, deletion, and substitution errors within the sequences.
In summary, DNA synthesis produces a pool of DNA sequences, which contain many noisy copies of each sequence.

\paragraph{Processing and storage.}
The DNA pool is either stored as it is, or we may only select a small part of the synthesized DNA pool for storage and dilute it so that the physical redundancy
is relatively small. 
As mentioned previously, the DNA might be protected through encapsulation or by other means. 
Either way, during the processing and storage of the DNA, the DNA sequences are prone to chemical decay, which causes substitution errors and 
occasionally results in the full loss of sequences.

Finally, PCR cycles are often used in a DNA storage system to amplify the DNA. In each PCR cycle, a given sequence is amplified on average by a factor slightly less than two, and through multiple cycles this can lead to a significant imbalance in the sequences in the pool.

\paragraph{Sequencing.} 
The currently most utilized sequencing technology is  commercialized by Illumina. This technology induces relatively few substitution errors within the sequences and even fewer insertion and deletion errors. 
\citet{heckel_channel_2019} estimated the reading error probabilities based on the two-sided reads to be between 0.15\% and 0.4\% and those errors are almost exclusively substitution errors. \citet{Schirmer_DAmore_Ijaz_Hall_Quince_2016} estimated a per-base error probability of 0.2\% for Illumina reads, a very similar number. 

Another popular sequencing technology is nanopore sequencing. Nanopore sequencing has significantly higher error probabilities, and introduces substitution, deletion, and insertion errors. The per-pase error of the Oxford Nanopore Technologies (ONT) MinION sequencer is about 10\%, albeit the exact number depends on the operating mode. The relative amount of deletions, substitutions, and insertions is similar for this technology.



\section{Channel statistics}
\label{sec:distributions}

As described in Section~\ref{sec:motivation_shuffling}, the channel parameters that need to be specified are the sampling distribution $Q$  and the noisy channel $p(y|x)$.
The sampling and noise distributions 
depend on the error sources discussed in Section~\ref{sec:error_source} and are determined by the technologies that are used. 
Here, we discuss the sampling and noise statistics for a few real-world DNA storage systems.

\paragraph{Sampling distribution.} 
The sampling distribution is determined by several different factors.
The first one is the number of physical copies of each sequence initially stored in the DNA pool, known as the physical redundancy (PR).
For example, if the physical redundancy is 100, then the DNA pool consists of $100\M$ DNA molecules, where $M$ is the original number of synthesized sequences.
The second is the number of PCR cycles (PCRC) used.
Each PCR cycle on average multiplies each sequence by a factor slightly less than two.
Furthermore, the sequencing read coverage (i.e., the number of sequencing reads divided by $M$) can be adjusted.

Figure~\ref{fig:dist_molecules} depicts the empirical sampling distribution from \citet{heckel_channel_2019} for four different DNA storage experiments from the literature. 
The read coverage of all four experiments is relatively similar (about 300-500). 
However, they have vastly different physical redundancies (PR), i.e., number of physical sequences in the stored pool per original sequence. 
DNA libraries with smaller physical redundancy typically require more PCR cycles prior to sequencing.
However, each PCR cycle increases the imbalance of the sequences in the dataset, because they may lead to some sequences being amplified more than others.
This leads to a loss of sequences and increases the tail of the distribution. Figure~\ref{fig:dist_molecules} illustrates that the lower the physical redundancy, the more long-tailed is the distribution, and more of the sequences are never seen at the output.
More details on the quantification of the sampling distribution can be found in~\citet{heckel_channel_2019,chen_quantifying_2020}.

\definecolor{DarkBlue}{rgb}{0,0,0.7} 
\definecolor{BrickRed}{RGB}{203,65,84}

\begin{figure}[t]
\begin{center}
\begin{tikzpicture}

\draw[draw,fill=DarkBlue!50,DarkBlue!50] (0,2.5) -- (10.7,2.5) -- (0,3.3);
\draw[draw,fill=BrickRed!50,BrickRed!50] (0,3.5) -- (10.7,3.5) -- (10.7,2.7);
\node[white] at (2,2.8) {Physical Redundancy};
\node[white] at (9.45,3.1) {PCR cycles};

\pgfplotsset{every tick label/.append style={font=\scriptsize}}
\begin{groupplot}[
title style={at={(0.5,-0.5)},anchor=north}, 
         group style={group size=4 by 1, 
         ylabels at=edge left, yticklabels at=edge left,    
         horizontal sep=0.1cm,vertical sep=0.3cm}, xlabel={\small \# reads/seq.}, ylabel={\small frequency},
         width=0.355\textwidth,height=0.34\textwidth,
         y label style={at={(0.3,0.5)}},
]

	\nextgroupplot[title = {\parbox{3.2cm}{\centering (a) \\ \small PR $1.28e7$ \\ PCRC 10 }},yticklabels={,,}] 
	\addplot +[ycomb,DarkBlue!65,mark=none] table[x index=0,y index=1]{./data/erlich_hist.dat};
	\draw (0,0) circle(1.5pt) -- (1.5cm,1.5cm) node[above right]{\scriptsize $0\%$};

	\nextgroupplot[title = {\parbox{3cm}{\centering (b) \\ \small PR 22172 \\  PCRC 22 }},] 
	\addplot +[ycomb,DarkBlue!65,mark=none] table[x index=0,y index=1]{./data/goldmanhist.dat};
	\draw (0,0) circle(1.5pt) -- (1.1cm,1.1cm) node[above right]{\scriptsize $0.0098\%$};

	\nextgroupplot[title = {\parbox{3cm}{\centering (c) \\ \small PR 3898 \\  PCRC 65  }}, xmax=1500,yticklabels={,,}] 
	\addplot +[ycomb,DarkBlue!65,mark=none] table[x index=0,y index=1]{./data/S1hist.dat}; 
     	\draw (0,0) circle(1.5pt) -- (1cm,1cm) node[above right]{\scriptsize $1.6\%$};

	\nextgroupplot[title = {\parbox{3cm}{\centering (d) \\ \small PR 1.2 \\  PCRC 68 }},
	,xmax=1500]
            
	\addplot +[ycomb,DarkBlue!65,mark=none] table[x index=0,y index=1]{./data/GM6hist.dat};
	\draw (0,0) circle(1.5pt) -- (1cm,1cm) node[above right]{\scriptsize $34\%$};

\end{groupplot}          
\end{tikzpicture}
\end{center}
\caption{
\label{fig:dist_molecules}
The empirical distributions (Figure~5 in~\citet{heckel_channel_2019}) of the number of reads of four different datasets from~\citet{goldman_towards_2013,erlich_dna_2016,grass_robust_2015} per each given sequence that has been synthesized, along with the physical redundancy (PR) and PCR cycles (PCRC). 
The data in (a) and (b) is approximately negative binomial distributed, whereas the data in (c) and (d) have a long tail and a peak at zero. 
The percentages in the figure are the fraction of sequences that are never seen at the output. 
The read coverage (number of reads per original sequence) of all datasets is comparable. 
Likely, the difference in distribution of (a) and (b) to (c) and (d) is due to the significantly more cycles of PCR in (a) and (b), while the difference of (a) and (b) is due to  low physical redundancy and dilution. 
}
\end{figure}

\paragraph{Errors within sequences.}
Errors within sequences are due to synthesis, sequencing, and decay of the DNA. 
Table~\ref{tab:errorprobs} contains the estimated error probabilities from four DNA storage experiments in the literature. 
The probabilities indicate the error probability per nucleotide, e.g., a 0.56\% substitution error probability means that roughly one out of 200 nucleotides contains a substitution error. 
In all experiments, the data was stored and read without any artificial aging \citep{grass_robust_2015}. 
Also, in all experiments low-error Illumina sequencing technology was used. Therefore, the errors within sequences in those experiments are to a large extent due to the errors in synthesis. It can be seen that for the established printing-based and electrochemical synthesis technologies, the errors within sequences are relatively small, while the errors for light-directed synthesis are very large. 

Note that in the first three datasets, the majority of sequences that contain insertion and deletion errors can be identified easily by their length being longer or shorter  than the original sequence, and have been removed in the respective experiments. This results in a set of DNA sequences which are almost free of deletion and insertion errors. For the dataset in~\citet{antkowiak_low_2020} this wouldn't make sense, as most sequences contain insertion or deletion errors, and too many sequences would be removed.
\begin{table}[]
\small
\begin{center}
\begin{tabular}{l*{6}{c}r}
Experiment       & Subst. & Del. & Insert. & Length & Synthesis \\
\hline
\cite{grass_robust_2015} & 0.56\% & 0.99\% & 0.09\% &  117 & CustomArr.  \\
\cite{erlich_dna_2016}            & 0.36\% & 0.16\% & 0.18\% & 152 & Twist \\
\cite{goldman_towards_2013}           & 0.08\% & 0.54\% & 0.02\% & 117 & Agilent  \\
\cite{antkowiak_low_2020}     & 2.6\% & 6.2\% & 5.7\% &  60 & light-dir.   \\
\end{tabular}
\end{center}
\caption{
\label{tab:errorprobs}
Error probabilities within sequences in four data storage experiments.
}
\end{table}

For more error statistics, see the papers by~\citet{heckel_channel_2019,erlich_dna_2016,sabary_solqc_2021}.


\section{Definitions and notation}


In Section~\ref{sec:motivation_shuffling}, we formally defined the noisy shuffling-channel, which will be the main object of study of this monograph.
In this section we provide additional definitions and notation required to formally analyze the capacity of this channel.

A code $\C$ for a noisy shuffling-sampling channel
is a set of codewords,  
each of which is a list $[\x_1,\ldots,\x_M]$ of $M$ strings of length $\len$,
 together with a decoding procedure. 
The alphabet $\Sigma$ of practical interest is typically $\{\bA,\bC,\bG,\bT\}$, corresponding to the four nucleotides that compose DNA. 
However, to simplify the exposition we will often focus on the binary case $\Sigma = \{0,1\}$. 
Most of the results can be extended to a quaternary alphabet in a straightforward way, as we briefly discuss in Section~\ref{sec:general}.

In practice, a codeword $[\x_1,\ldots,\x_M] \in \C$, which is a set of $\M$ sequences each of length $\len$, is stored as a physical mixture of $\M$ synthesized DNA molecules. 
Hence, we use the words molecule and sequence interchangeably. 
Our main parameter of interest is the storage rate
\al{
R \defeq \frac{\log|\C|}{\M\len}, \label{eq:Rs}
}
i.e., the number of bits stored per DNA base synthesized.
To formally define the storage capacity, we consider an asymptotic regime where $M \to \infty$.
We let the sequences have length 
\al{
\len \defeq \beta \log M
}
for a fixed parameter $\beta > 0$. 
This is motivated by the practical constraint that the synthesized molecules should be short.
As we will see in the next chapter, if $L$ grows slower than $\log M$,
no positive rate is achievable. 
Moreover, as our results show, $L = \Theta(\log M)$ is the asymptotic regime of interest for this problem.

We say that storage rate $R$ is achievable if there exists a sequence of DNA storage codes $\{\C_\M\}$, each with rate $R$, such that the decoding error probability tends to $0$ as $\M \to \infty$. 
The storage capacity is the supremum over all achievable storage rates.

\paragraph{Additional notation.} 
Throughout this monograph, $\log(\cdot)$ is the logarithm base $2$ and $\ln(\cdot)$ is the natural logarithm.
For a real number $x$, we let $(x)^+ = \max(x,0)$.
We let $\Z_+$ be the set of non-negative integers, and for a positive integer $N$, we let $[1:N] = \{1,\dots,N\}$.
For a vector $\bx = (x_1,\dots,x_d)$, 
we let $\norm[2]{\bx} = \sqrt{\sum_{i=1}^d x_i^2}$ be its $\ell_2$ norm, $\norm[1]{\bx} = \sum_{i=1}^d |x_i|$ its $\ell_1$ norm, and $\norm[\infty]{\bx} = \max_i |x_i|$ its $\ell_{\infty}$ norm.

We say that a random variable $X$ has the $\Bernoulli(p)$ distribution if $\Pr(X=1) = p$ and $\Pr(X=0) = 1-p$.
We say that a  non-negative integer-valued random variable $X$ has the $\Pois(\lambda)$ distribution if $\Pr(X=k) = \frac{e^{-\lambda}\lambda^k}{k!}$, for $k \geq 0$.

\chapter{Shuffling Channels}
\label{ch:shuffled}

As discussed in Chapter~\ref{ch:model}, a distinguishing aspect of DNA-based storage is the fact that the data stored in the DNA library is read out of order, in a shuffled manner.
In this chapter, we focus on the coding and capacity problems that arise in settings where ordering information is lost.

To build intuition, we first briefly discuss a simple shuffling channel which just shuffles the set of input sequences.
We then consider the (noise-free) shuffling-sampling channel, where all reads are noise-free; i.e., the discrete memoryless channel $p(y|x)$ in Figure~\ref{fig:generalchannel} is just the identity channel.
This will enable us to identify the costs and challenges that arise from the unordered nature of DNA-based storage.
Finally, we consider a generalization of the shuffling channel that allows variable-length pieces.
This generalization is motivated by the fact that the stored DNA molecules may also be subject to random breaks.


\section{The shuffling channel}
\label{sec:simpleshuffling}

To begin our information-theoretic analysis of DNA-based data storage, 
we first put aside the noise that corrupts each of the individual DNA molecules and assume that the writing and reading processes are perfect; i.e., every single molecule is synthesized and sequenced correctly.
With these assumptions, the DNA storage channel is described by a shuffling channel, which corresponds to the dashed rectangle in  Figure~\ref{fig:generalchannel}.
The input to a shuffling channel is a list of  $M$ strings of length $\len$ and the output is an unordered, or shuffled, version of these strings.

The capacity of this noise-free shuffling channel
can be shown to be 
\al{
\lim_{M\to \infty} \left( \log |\Sigma| -  \frac{\log M}{L}\right)^+ = \left( \log |\Sigma| - \frac{1}{\beta}\right)^+,
 \label{eq:noisefree}
}
as we argue below.
This expression 
supports the intuition that, when the length $L$ of the DNA molecules is large, the impact of shuffling is small, and we achieve close to the 2 bits/nucleotide that can be achieved for $|\Sigma| = 4$ if there is no shuffling and no noise.
Furthermore, 
equation~\eref{eq:noisefree} shows that 
when $L$ scales logarithmically in $M$, the capacity of the shuffling channel is nontrivial.

The capacity expression in equation~\eref{eq:noisefree} can be proved with a simple counting argument. 
Notice that if we view the input and output of the channel as  multi-sets of $M$ strings of length $\len$, then the channel does not affect the input at all.
Hence, the capacity is simply the logarithm of the number of distinct multi-sets of $M$ strings of length $\len$, divided by the number of nucleotides stored $M\len$.

The operation of counting the number of distinct multi-sets of a given size will be used often throughout this monograph. 
Notice that multi-sets can be equivalently represented by a histogram, which records how many times each element, out of a list of possible elements, appears.
If the number of possible distinct elements is $a$, a multi-set with $b$ elements can be represented by a vector $\type \in \Z_+^a$  with $\|\type \|_1 = b$.


\begin{lemma} \label{lem:histograms}
The number of distinct vectors $\type \in \Z_+^{a}$ with $\|\type \|_1 = b$ is
\aln{
\T[a,b] \defeq {a+b-1 \choose b} < \left( \frac{e(a+b-1)}{b} \right)^b.
}
\end{lemma}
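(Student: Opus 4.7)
The plan is to recognize this as the classical stars-and-bars identity together with a standard binomial upper bound, and to handle the two assertions separately.

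For the exact count, I would set up a bijection between nonnegative integer vectors $\type = (t_1, \ldots, t_a)$ with $t_1 + \cdots + t_a = b$ and binary strings of length $a+b-1$ containing exactly $b$ ones (or equivalently $a-1$ zeros). The bijection is the usual ``stars and bars'' encoding: write $t_1$ ones, then a zero, then $t_2$ ones, then a zero, and so on, ending with $t_a$ ones. This produces a string of $b$ ones and $a-1$ zeros, so the number of such vectors is $\binom{a+b-1}{b}$.

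For the upper bound, I would invoke the standard estimate $\binom{n}{k} \le (en/k)^k$, which follows from $\binom{n}{k} \le n^k/k!$ together with $k! \ge (k/e)^k$ (the latter being an immediate consequence of the Taylor series $e^k \ge k^k/k!$). Applying this with $n = a+b-1$ and $k = b$ gives
\[
\binom{a+b-1}{b} \le \left(\frac{e(a+b-1)}{b}\right)^b,
\]
as claimed.

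Neither step presents a real obstacle; the only mild subtlety is making sure the inequality is strict rather than an equality, which follows because $k! > (k/e)^k$ strictly for $k \ge 1$ (since the Taylor expansion of $e^k$ has additional positive terms beyond $k^k/k!$). The whole argument is short enough that I would present both the bijection and the binomial bound inline without further decomposition.
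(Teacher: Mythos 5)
Your proof is correct and follows essentially the same route as the paper: the same stars-and-bars bijection (runs of ones separated by zeros) yielding $\binom{a+b-1}{b}$, followed by the standard binomial bound. The only difference is that the paper simply cites the upper bound as standard, whereas you spell out its derivation from $\binom{n}{k}\le n^k/k!$ and $k!>(k/e)^k$; that extra detail, including the remark on strictness, is a fine addition but not a different argument.
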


\begin{proof}
Notice that vectors  $\type \in \Z_+^{a}$ with $\|\type \|_1 = b$ are in one-to-one correspondence with 
binary strings containing $(a-1)$ zeros and $b$ ones.
For $\type = (t_1,\ldots,t_a)$,
the corresponding string is 
\al{
\underbrace{1 \, \ldots \, 1}_{t_1} \, 0 \, \underbrace{1 \, \ldots \, 1}_{t_2} \, 0 \, \ldots \, 0 
\underbrace{1 \, \ldots \, 1}_{t_a}.
}
Such a string has $(a-1)$ zeros and $b$ ones,
and distinct strings with $(a-1)$ zeros and $b$ ones correspond to distinct vectors $\type$. 
The number of distinct strings of this form is
\aln{
\frac{(a-1+b)!}{(a-1)!\, b!} = {a+b-1 \choose b}.
}
Finally, the upper bound in the statement of the lemma is a standard bound for binomial coefficients.
\end{proof}

Since the input to the shuffling channel is a multi-set of $M$ strings of length $L = \beta \log M$, the capacity of the shuffling channel is given by
\aln{
\lim_{M\to \infty} \frac{1}{ML} \log \T[|\Sigma|^L,M].
}
Using Lemma~\ref{lem:histograms} and the bounds $k \log (n/k) \leq \log {n\choose k} < k \log (en/k)$, it follows that the capacity is given by
\al{
\lim_{M\to \infty} \frac{1}{ML} \log {{|\Sigma|^{L}+M-1}\choose{M}} 
& = \lim_{M \to \infty} \frac{1}{L} \log \left(\frac{|\Sigma|^L+M-1}{M}\right) \nonumber \\
& = \lim_{M \to \infty} \frac{1}{L} \log \left(\frac{|\Sigma|^L}{M} + 1\right) \nonumber \\
& = \lim_{M \to \infty} \left( \log |\Sigma| - \frac{\log M}{L} \right)^+ \nonumber \\
& = (\log |\Sigma| - 1/\beta)^+, \label{eq:countingconverse}
}
as we wanted to show.
In particular, if $\beta \leq \frac{1}{\log|\Sigma|}$, the capacity is zero.
Notice that this was not obvious a priori since, even for $\beta \leq \frac{1}{\log|\Sigma|}$, one still has a large number $M$ of length-$L$ strings with $L \to \infty$ as $M \to \infty$, and it might have been possible to encode data on them at a positive rate.




\section{Capacity of the shuffling-sampling channel}
\label{sec:noisefree}

The simple analysis in Section~\ref{sec:simpleshuffling} shows that the shuffling aspect of DNA storage  effectively  causes the capacity to be reduced by $1/\beta$. 
In this section we analyze the impact of also bringing sampling into the picture.
The main result of this section generalizes the simple capacity expression in (\ref{eq:noisefree}) and settles the capacity of the noise-free shuffling-sampling channel, for a general sampling distribution $Q$.
As described in Chapter~\ref{ch:model}, the input to the channel are $M$ sequences of length $L = \beta \log M$.
The channel samples the $i$th sequence $N_i \sim Q$ times, and shuffles all sampled sequences.
Notice that $q_0$ is the probability that zero copies of $\x_i$ are drawn; i.e., $q_0$ is the expected fraction of input sequences never seen at the output. 
We focus on the case $\Sigma = \{0,1\}$ for simplicity of exposition, but the extension to general $\Sigma$ is straightforward, as discussed in Section~\ref{sec:general}.

\begin{theorem} \label{thm:noisefree}
The capacity of the shuffling-sampling channel is
\al{
C  & = (1-q_0) \left(1 - 1/\beta \right). \label{eq:storagecap}
}
In particular, if $\beta \leq 1$, no positive rate is achievable.
\end{theorem}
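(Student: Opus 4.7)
The plan is to handle achievability and converse separately, mirroring the counting-based analysis of the pure shuffling channel in Section~\ref{sec:simpleshuffling}, but with the sampling distribution reducing the effective number of available output ``slots'' by a factor of $(1 - q_0)$.

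For achievability, I would use indexing combined with an outer erasure code. Split each length-$L$ sequence into a prefix of length $\log M$ carrying the sequence's index $i \in [1:M]$ and a suffix of length $(\beta - 1)\log M$ carrying payload bits. The $M$ payloads form the codeword symbols of an outer near-MDS erasure code tuned so it can decode from any $(1 - q_0 - \varepsilon)M$ received symbols. The decoder keeps one representative per distinct string in the output multi-set, reads the index, declares missing indices as erased, and runs the outer decoder. Since $\sum_i \mathds{1}\{N_i = 0\}$ concentrates around $q_0 M$ by the weak law of large numbers, the erasure count is at most $(q_0 + \varepsilon)M$ with high probability and decoding succeeds. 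This attains rate $(1-q_0-\varepsilon)(1-1/\beta)$, which approaches the claimed capacity as $\varepsilon \to 0$.

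For the converse, the key idea is that the output $Y$ reveals little more about the message $W$ than the support $S(Y) = \{y : y \text{ appears in } Y\}$. Decompose $Y$ into $(S(Y), N(Y))$, with $N(Y)$ recording the multiplicities. Assuming the codewords have $M$ distinct entries (which is no loss of rate since $2^L = M^\beta \gg M$), given $S(Y)$ the multiplicity vector $N(Y)$ consists of i.i.d.\ draws from $Q$ conditioned on $\{N_i \geq 1\}$, and is independent of $W$, so $I(W;Y) = I(W;S(Y)) \leq H(S(Y))$. By concentration, $|S(Y)| = \sum_i \mathds{1}\{N_i \geq 1\}$ is at most $M(1-q_0+\delta)$ with high probability for any $\delta > 0$, yielding (after controlling the atypical tail) $H(S(Y)) \leq \log \binom{2^L}{M(1-q_0+\delta)} + o(ML)$. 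Combining with Fano's inequality, dividing by $ML$, applying $\log\binom{n}{k} \leq k\log(en/k)$, and sending $\delta \to 0$ produces the matching upper bound $(1-q_0)(1-1/\beta)$. The ``no positive rate'' claim for $\beta \leq 1$ is then immediate, since $(1-q_0)(1-1/\beta) \leq 0$ and capacity is non-negative.

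The main obstacle is quantifying the concentration sharply enough that the contributions from atypical events---large deviations of $|S(Y)|$, heavy tails of $Q$, and possible coincidences among the $x_i$'s---are genuinely $o(ML)$ at the critical scale $L = \beta \log M$, where the leading term and the $\log M / L$ correction are of comparable magnitude. A smaller technical issue is that the outer erasure code in the achievability argument needs to work over an alphabet of size $M^{\beta - 1}$, which is smaller than $M$ when $\beta < 2$ and hence cannot be a vanilla Reed--Solomon code; this can be patched by grouping a constant number of payloads into a single super-symbol before applying an MDS construction.
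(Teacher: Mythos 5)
Your achievability argument is the same index-based scheme with an outer erasure code that the paper uses, and the alphabet-size patch you flag (grouping payloads into super-symbols when $\beta < 2$) is a reasonable way to fill the gap the paper leaves implicit. So that half is fine.

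The converse, however, has a genuine gap: you assume ``without loss of rate'' that codewords have $M$ distinct entries, and your key equality $I(W;Y) = I(W;S(Y))$ rests entirely on that assumption. If some codeword uses a string $s$ in two positions $x_i = x_j = s$, then the observed multiplicity of $s$ is $N_i + N_j$ rather than a single draw from $Q$ conditioned on being positive, so the multiplicity vector $N(Y)$ \emph{does} carry information about $W$ given $S(Y)$ and the conditional independence fails. The fact that $2^L \gg M$ only shows the encoder \emph{could} choose distinct entries; it does not show that codes exploiting repeated entries cannot beat $(1-q_0)(1-1/\beta)$ --- ruling that out is precisely what a converse must do, and asserting it is circular. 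The paper avoids this by introducing a genie-aided channel (Section~\ref{sec:converseCounting}): before sampling, the genie appends a distinct random tag $z_i$ to each $\x_i$, making the tagged molecules $(\x_i,z_i)$ always distinct regardless of the code. Since the original channel output is a function of the genie's output, any converse for the genie-aided channel carries over. The genie's output reduces (as a sufficient statistic) to a frequency vector $\vf \in \Z_+^{M^\beta}$ with $\norm[1]{\vf} \leq M$, whose entries can exceed one precisely because of repeated inputs, and the paper then bounds $H(\vf)$ by counting histograms via Lemma~\ref{lem:histograms} together with a concentration bound on $\norm[1]{\vf}$ (Lemma~\ref{lem:conc}). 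Your counting of supports $S(Y)$ via $\binom{2^L}{\cdot}$ would give the same asymptotics as the paper's histogram count, so once you replace your distinctness assumption with the genie argument (or supply an independent proof that distinct-entry codes are rate-optimal), the rest of your outline goes through.
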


%




The capacity expression in equation~\eqref{eq:storagecap} can be intuitively understood through the achievability argument.
A storage rate of $R = (1-q_0) \left(1 - 1/\beta \right)$ can be easily achieved by prefixing all the molecules with a distinct tag or index, which effectively converts the channel to a block-erasure channel, as illustrated in Figure~\ref{fig:index}.
\begin{figure}[t]
\center
\includegraphics[width=0.92\linewidth]{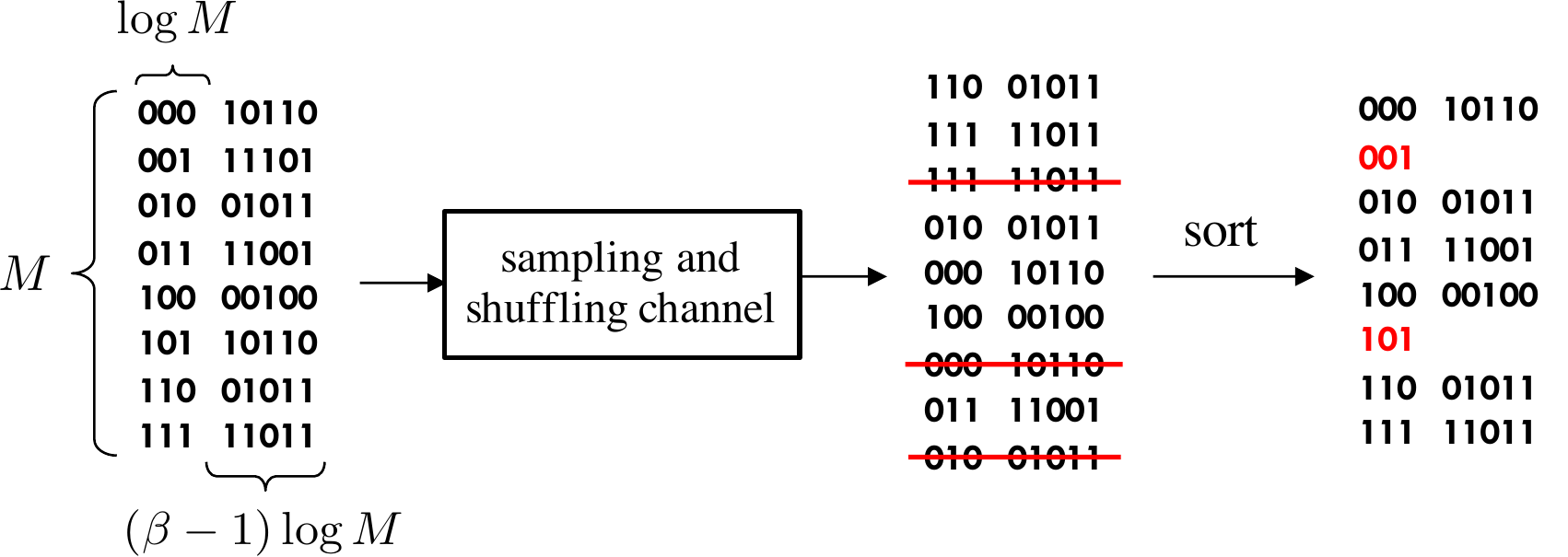}
\caption{
\label{fig:index}
Index-based scheme for the shuffling-sampling channel. 
All $M$ input sequences are prefixed with a unique index of length $\log M$.
At the output of the channel, the decoder uses the indices to remove duplicates and sort the molecules.
Notice that the missing indices (001 and 101) can be thought of as block erasures.
Hence, this scheme effectively creates a block-erasure channel, where blocks of size $L$ are erased with probability $q_0$.
}
\end{figure}

More precisely, we use the first $\log \M$ bits of each molecule to encode a distinct index. 
Then we have $\len - \log \M = (\beta - 1) \log M$ symbols left per molecule to encode data. 
The decoder can use the indices to remove duplicates and sort the molecules that are sampled. 
This effectively creates an erasure channel, where molecule $i$ is erased if it is not drawn (i.e., $\N_i=0$) which occurs with probability $q_0$. 
Since the expected number of erasures is 
\aln{
\EX{\frac{1}{\M} \sum_{i=1}^\M \ind{\N_i = 0}} = q_0,
} 
we achieve storage rate 
\al{ \label{eq:indexingachievable}
\frac{(1-q_0)M(L-\log M)}{ML}=(1-q_0)(1-1/\beta).}
The surprising aspect of Theorem~\ref{thm:noisefree} is that this simple index-based scheme is optimal.
It is also worth noting that the capacity expression only depends on the sampling distribution $Q$ through the parameter $q_0$, i.e., the fraction of sequences that is not seen at the output.



In order to gain intuition on a practical implication of this theorem, suppose that each sequence is drawn according to a Poisson distribution with mean $\cNM$, so that in expectation $E[N] = \cNM M$ sequences in total are drawn and $\cNM$ can be thought of as the sequencing coverage depth. 
The probability that a sequence is never drawn is $e^{- \cNM}$ and the capacity is
\al{
C = (1-e^{-\lambda})(1-1/\beta).
}
This suggests that practical systems should not operate at a very high coverage depth $\lambda$, as high coverage depth significantly increases the time and cost of reading, but only provides little storage gains, according to the capacity expression. 
Notice that, in order to guarantee that all $M$ sequences are observed at least once, we need $\N = \Omega(\M \log \M)$ \citep{LanderWaterman,motahari2013information}.
When $\M$ is large, it is wasteful to operate in this regime, 
as this only gives a marginally larger storage capacity, but the sequencing costs can be exorbitant. 

The result in Theorem~\ref{thm:noisefree} is flexible to allow different sampling models.
In particular, one can consider separating the PCR amplification performed on each synthesized molecule from the sequencing step.
Since one cannot control the PCR amplification factor precisely, it is reasonable to assume that a molecule $\x$ is first randomly amplified and a total of $A \geq 0$ copies are stored.
If we consider a Poisson sampling model for the sequencing step, the effective coverage depth is $\lambda/\EX{A}$ (since we are actually sampling from $M \EX{A}$ molecules).
In this case, the probability that none of the copies of $\x$ is sampled at the output is $\EX{(e^{-\lambda / \EX{A}})^A} = \EX{(e^{(-\lambda / \EX{A})A}}$.
This can be recognized as the moment-generating function of $A$ evaluated at $-\lambda/ \EX{A}$.
In particular, when PCR is also modeled as a Poisson random variable with mean $\EX{A}=\alpha$, 
$\EX{e^{\theta A}}= e^{\alpha (e^\theta -1)}$, and
the capacity of the resulting noise-free shuffling-sampling channel is
\al{
C = \left(1-e^{-\alpha(1-e^{-\lambda/\alpha})}\right)(1-1/\beta).
}
The capacity can be similarly computed based on other sampling distributions (such as the Negative Binomial distribution observed in Section~\ref{sec:distributions}).

%


\subsection{
\label{sec:motivationconverse}
Motivation for converse}


One can attempt to prove a converse by following the approach in Section~\ref{sec:simpleshuffling}.
Notice that one can view the noise-free shuffling-sampling channel as a channel where
the encoder chooses a histogram over the $2^L$ sequences of length $L$
and the decoder observes a noisy version of this histogram where the frequencies are perturbed according to $Q$.
From this angle, the question becomes ``how many histograms $\type \in \Z_+^{2^L}$ with $\|\type \|_1 = M$ can be reliably decoded?''

Following the counting argument in Section~\ref{sec:simpleshuffling}, the total number of distinct histograms $\type \in \Z_+^{2^L}$ with $\|\type \|_1 = M$ is $\T[2^L,M]$.
Since this is the total number of histograms that can be used as an input, \eref{eq:countingconverse} implies that a bound to the shuffling-sampling channel capacity is
\al{
C \leq 1- 1/\beta.  \label{eq:simplebound2}
}
Therefore, if we had a shuffling-sampling channel  where the decoder observed exactly the $M$ stored molecules (i.e., $q_0 = 0$ and $q_1 = 1$), the index-based approach would be optimal from a rate standpoint.

A simple outer bound that involves $q_0$ can be obtained by considering a genie that provides the decoder with the ``true'' index of 
each sampled molecule.
In other words, $[\x_1,\ldots,\x_\M]$ are the stored molecules, and the decoder observes $[\y_1,\ldots,\y_N]$ and the mapping $\sigma \colon \{1,\ldots, N\} \to \{1,\ldots, M \}$ so that
$\y_j = \x_{\sigma(j)}$.
This converts the channel into an erasure channel with block-erasure probability $\q_0$, which yields
\al{
C \leq 1 - \q_0. \label{eq:simplebound1}
}
It is intuitive that the bound~\eqref{eq:simplebound1} should not be achievable, as the decoder in general cannot sort the molecules and create an effective erasure channel.
However, it is not clear a priori  whether prefixing every molecule with an index is 
optimal.

Combining (\ref{eq:simplebound2}) and (\ref{eq:simplebound1}) implies that $C \leq \min(1-q_0,1-1/\beta)$, but there is still a gap to the achievable rate in (\ref{eq:indexingachievable}).
The converse presented in the next section utilizes a more careful genie that does not give us the permutation $\sigma$ (but something weaker) to show
that $C \leq (1-q_0)(1-1/\beta)$,
implying the optimality of index-based coding approaches.
\subsection{Converse}
\label{sec:converseCounting}

\newcommand\set{\mathrm{set}}

\newcommand\Nsampled{{\tilde N}}

Let $[\x_1,\ldots,\x_M]$ be the $\M$ length-$\len$ molecules written into the channel 
and $[\y_1,\ldots,\y_N]$ be the length-$\len$ molecules observed by the decoder. 
Notice that, whenever the channel output is such that $\y_i = \y_j$ for $i \ne j$, the decoder cannot determine whether both $\y_i$ and $\y_j$ were sampled from the same molecule $\x_\ell$ or from two different molecules that obey $\x_\ell = \x_k, \ell \neq k$. 
In order to derive the converse, we consider a genie-aided channel that removes this ambiguity. 
As illustrated in Figure~\ref{fig:geniechannel}, before sampling the $\N$ molecules, the genie-aided channel appends a unique index of length $\log \M$ to each molecule $\x_i$, which results in the set of tagged molecules $\{(\x_i,z_i)\}_{i=1}^\M$. 
We emphasize that the indices $z_i$ are all unique, and are chosen randomly and independently of the input sequences $\{\x_i\}_{i=1}^\M$. 
Notice that, in contrast to the naive genie discussed in Section~\ref{sec:motivationconverse}, this genie does \emph{not} reveal 
the index $i$ of the molecule $\x_i$ from which $\y_\ell$ was sampled.
Therefore, the channel is not reduced to an erasure channel, 
and intuitively the indices are only useful for the decoder to determine whether two equal samples $\y_\ell = \y_k$ came from the same molecule or from distinct molecules. 

\begin{figure}
\vspace{0mm}
	\center
       \includegraphics[width=0.84\linewidth]{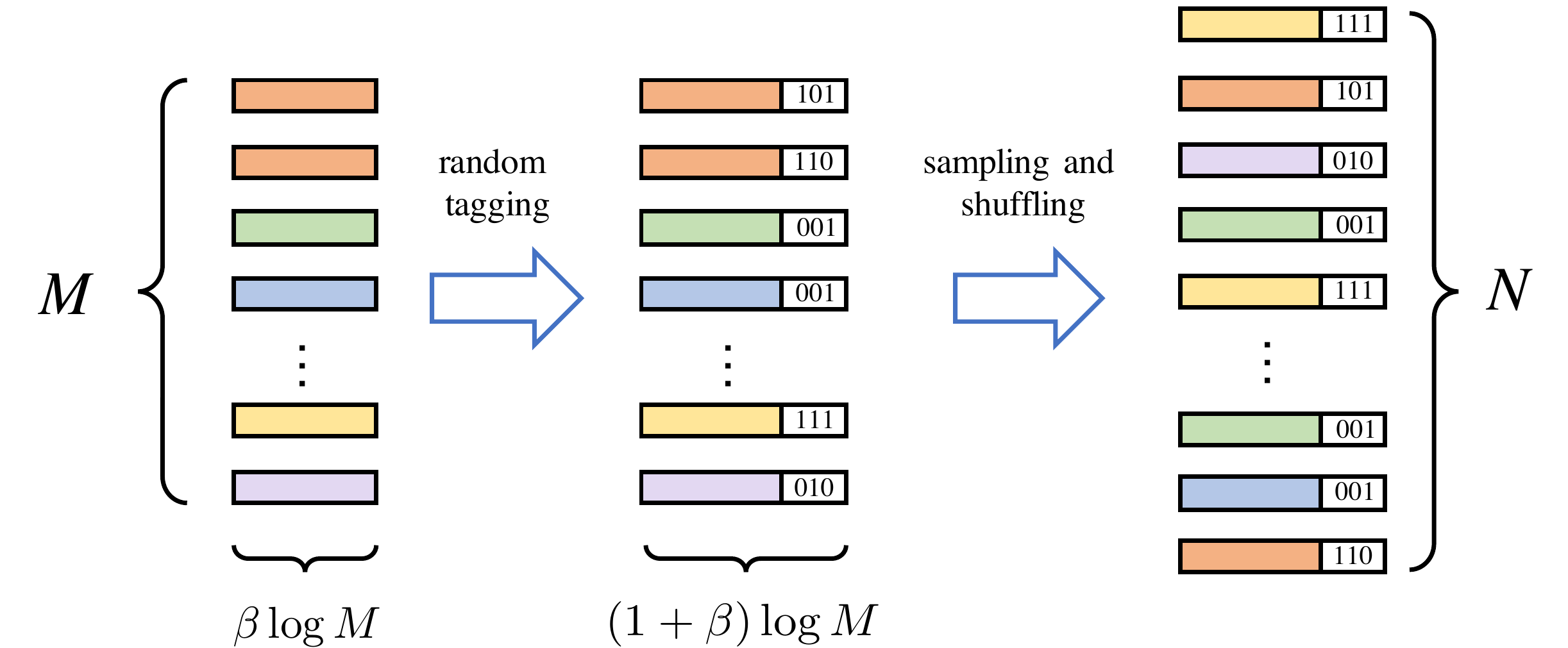}
       \caption{Genie-aided channel for converse for the shuffling-sampling channel. 
       The genie appends a random (but unique) index to each of the input sequences $\x_1,\dots,\x_M$.
       This allows the decoder to identify which outputs sequences originated from the same input sequences, and remove them.
       \label{fig:geniechannel}}
\end{figure}

The output of the genie-aided channel, denoted by $\{(\y_i,z_{\sigma(i)})\}_{i = 1}^\N$, is then obtained 
by sampling from the set of tagged molecules $\{(\x_i,z_i)\}_{i=1}^\M$, in the same way as the original channel samples the original molecules.
The mapping $\sigma \colon [1:N] \to [1:M]$ is such that $\y_i$ was sampled from $\x_{\sigma(i)}$.
Notice that the actual mapping $\sigma$ is not revealed to the decoder.

It is clear that any storage rate  achievable in the original channel can be achieved on the genie-aided channel, as the decoder can simply discard the indices, or stated differently, the output of the original channel can be obtained from the output of the genie-aided channel. 

Notice that $\{(\y_i,z_{\sigma(i)})\}_{i = 1}^\N$ is in general a multi-set.
We will let $\set( \{(\y_i,z_{\sigma(i)})\}_{i = 1}^\N )$ be the set obtained from $\{(\y_i,z_{\sigma(i)})\}_{i = 1}^\N$ by removing any duplicates. 
Then $\set( \{(y_i,z_{\sigma(i)})\}_{i = 1}^\N )$ is a sufficient statistic for $\{\x_i\}_{i=1}^\M$ since all tagged molecules are distinct objects, and sampling the same tagged molecule $(\x_i,z_i)$ does not yield additional information on $\{\x_i\}_{i=1}^\M$. 
More formally, conditioned on $\set( \{(\y_i,z_{\sigma(i)})\}_{i = 1}^\N )$, $\{\x_i\}_{i=1}^\M$ is independent of the genie's channel output $\{(\y_i,z_{\sigma(i)})\}_{i = 1}^\N$. 

Next, we define the frequency vector $\vf \in \Z_+^{\M^{\beta}}$ 
(note that $|\Sigma^{\len}| = 2^{\beta \log \M} = \M^\beta$) that is 
obtained from $\set( \{(\y_i,z_{\sigma(i)})\}_{i = 1}^\N )$ as follows.
The entry of $\vf$ corresponding to the molecule $\y \in \Sigma^{\len}$ is given by 
\aln{
\vf[\y] 
\defeq \left| \left\{ (\y_j,z_{\sigma(j)}) \in \set( \{(\y_i,z_{\sigma(i)})\}_{i = 1}^\N )
\colon  \y_j = \y \right\} \right|. 
}
The frequency vector $\vf$ is essentially a histogram that counts the number of occurrences of $\y$ in the set of tagged molecules $\{(\y_i,z_{\sigma(i)})\}_{i = 1}^\N$.
Notice that the entries of $\vf$ can take values greater than one, because at the input we can choose to use the same molecule for multiple $\x_i$.

Since $\set( \{(\y_i,z_{\sigma(i)})\}_{i = 1}^\N )$ is a sufficient statistic for $\{\x_i\}_{i=1}^\M$
and the tags added by the genie were chosen at random and independently of $\{\x_i\}_{i=1}^\M$, it follows that $\vf$ is also a sufficient statistic for $\{\x_i\}_{i=1}^\M$. 
Hence, we can view the (random) frequency vector $\vf$ as the output of the channel without any loss. 
Notice that $|\set( \{(\y_i,z_{\sigma(i)})\}_{i = 1}^\N )| =  \| \vf \|_1$, 
and we have $\| \vf \|_1 \leq M$ and 
$\EX{ \| \vf \|_1 / \M } = 1 - \q_0$. 
The following lemma asserts that $\norm[1]{\vf}$ does not exceed its expectation by much.

\begin{lemma} \label{lem:conc}
For any $\delta > 0$, the frequency vector $\vf$ at the output of the genie-aided channel satisfies
\aln{
\PR{
 \frac{\| \vf \|_1}{\M} >
1 - \q_0
 + \delta 
 } 
 \to 0, \text{ as } \M \to \infty. 
}
\end{lemma}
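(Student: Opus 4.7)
The plan is to identify $\|\vf\|_1$ as a sum of i.i.d.\ indicators with mean $1-q_0$ and conclude by a standard concentration argument.

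First, I would unpack the definition of $\vf$ to get a clean combinatorial identity. By definition, $\|\vf\|_1 = \sum_{\y \in \Sigma^L} \vf[\y] = |\set(\{(\y_i, z_{\sigma(i)})\}_{i=1}^N)|$, i.e., the number of distinct tagged output molecules. The key observation is that the genie's tags $z_1,\dots,z_M$ are all \emph{distinct}, and every output sample $\y_j$ inherits the tag of the input molecule it was sampled from. Hence two output tagged pairs $(\y_j, z_{\sigma(j)})$ and $(\y_k, z_{\sigma(k)})$ are equal if and only if $\sigma(j) = \sigma(k)$, i.e., they originated from the same input molecule. Consequently,
\[
\|\vf\|_1 \;=\; \bigl| \{ i \in [1:M] : N_i \geq 1 \} \bigr| \;=\; \sum_{i=1}^M \ind{N_i \geq 1}.
\]

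Second, since the sampling counts $N_1,\dots,N_M$ are i.i.d.\ with $N_i \sim Q$, the summands $\ind{N_i \geq 1}$ are i.i.d.\ Bernoulli random variables with $\PR{N_i \geq 1} = 1 - q_0$. Therefore
\[
\EX{\|\vf\|_1 / M} \;=\; 1 - q_0,
\]
and the claim reduces to showing that the empirical average of $M$ i.i.d.\ bounded random variables concentrates around its mean, which is immediate from any one-sided concentration inequality. Concretely, Hoeffding's inequality gives
\[
\PR{\frac{\|\vf\|_1}{M} > 1 - q_0 + \delta} \;\leq\; \exp(-2M\delta^2),
\]
which tends to $0$ as $M \to \infty$.

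There is no real obstacle here; the only subtle step is the first one, namely verifying that the genie's unique tags are precisely what reduces $\|\vf\|_1$ to a count of sampled input indices (rather than, say, a count that could overshoot because distinct inputs happen to share the same sequence $\y$). Once that reduction is in hand, the rest is a textbook application of the weak law of large numbers or Hoeffding's bound.
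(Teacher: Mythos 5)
Your proof is correct and takes essentially the same approach as the paper: reduce $\|\vf\|_1$ to $\sum_{i=1}^M \ind{N_i \geq 1}$ and apply Hoeffding's inequality to these i.i.d.\ Bernoulli$(1-q_0)$ indicators. The only difference is that you spell out in more detail why the genie's distinct tags yield this identity, which the paper states without elaboration.
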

\begin{proof}
Note that the number of distinct fragments drawn is 
\[
\frac{\norm[1]{\vf}}{\M} = \frac{1}{\M} \sum_{i=1}^\M \ind{ \N_i > 0 }.
\]
Since $\ind{ \N_i > 0 }$ are independent random variables with expectation $1-\q_0$, 
Hoeffding's inequality yields
\[
\PR{ \frac{\norm[1]{\vf}}{\M} \geq (1-\q_0)
+ \delta} 
\leq e^{-2\M \delta^2},
\]
which concludes the proof. \end{proof}

We remark that an analogue of Lemma~\ref{lem:conc} can be proved for the sampling-with-replacement model, as described in \citet{DNAStorageIT}.


We now append the coordinate
$
f_0 = ( 1-q_0 + \delta) \M - \| \vf \|_1
$
to the beginning of $\vf$ to construct $\vf' = (f_0,\vf)$.
Notice that when $\| \vf \|_1 \leq (1-q_0 + \delta)\M$ (which by Lemma~\ref{lem:conc} happens with high probability), we have $\| \vf' \|_1 = ( 1-q_0 + \delta)\M$.
This construction of $\vf'$ will allow us to utilize Lemma~\ref{lem:histograms} below.
Fix $\delta > 0 $, and define the event 
\begin{align}
\label{eq:defeventE}
    \E =  \{ \| \vf \|_1 > (1-q_0 + \delta)\M \}
\end{align} 
with indicator function $\one_\E$.
By Lemma~\ref{lem:conc}, $\PR{\E} \to 0$ as $\M \to \infty$. 
Consider a sequence of codes $\{\C_\M\}$ with rate $R$ and vanishing error probability.
Let $W$ be the message to be encoded, chosen uniformly at random from $\{1,\ldots,2^{\M \len R}\}$.
From Fano's inequality we have
\al{
\M \len R_s  
&= 
H(W)
=
I(W; \vf') + H(W| \vf') \nonumber \\
&\leq H(\vf') + 1 + P_e \M \len R_s,
}
where $P_e$ is the probability of a decoding error, which by assumption goes to zero as $M \to \infty$. 
We can then
upper bound the achievable storage rate $R$ as
\al{
\M \len R (1-P_e)  
&\leq H (\vf' )  +1 \nonumber \\
&\leq  H\left( \vf', \one_\E \right) +1 \nonumber \\
&\leq \PR{ \E } H\left( \left. \vf'\, \right|  \E  \right) + \PR{ \bar \E } H \left( \left. \vf' \, \right|  \bar \E  \right) + H(\one_\E) + 1.
\label{eq:rsbound}
}
Note that the vector $\vf'$ above has dimension 
$M^\beta + 1$
and, given the event $\bar \E$ occurs, $\|\vf'\|_1 = (\Emolseen + \delta)\M$, and we have $H(\vf'|\bar \E) 
\leq \log \T[M^\beta + 1, (\Emolseen + \delta)\M]$,
where $\T[a,b]$ is the number of vectors $x \in \Z_+^a$ with $\| x \|_1 = b$.
From Lemma~\ref{lem:histograms}, 
\aln{
\log \T & [\M^\beta+1, (\Emolseen + \delta)\M] \\
& \leq (\Emolseen + \delta)\M \log \left( e + \frac{e M^{\beta-1}}{(\Emolseen + \delta)} \right) \\ 
& \leq (\Emolseen + \delta)\M \log \left( \alpha M^{\beta-1} \right) \\
& \leq (\Emolseen + \delta)\M  [ (\beta-1) \log \M + \log \alpha ],
}
where $\alpha$ is a positive constant. 
Moreover, we notice that $\vf'$ is a function of $\vf$, which is a vector in $\Z_+^{M^\beta}$ with $\|\vf\|_1 \leq M$. 
Next, we define $\vf'' = (f_0,\vf)$, 
where $f_0 = M - \|\vf\|_1$ so that $\|\vf''\|_1 = M$ and we can apply Lemma~\ref{lem:histograms}.
We note that
\aln{
H(\vf'|\E) & = H(\vf''| \E) \leq 
\log \T [\M^\beta+1,\M]  \\ 
& \leq \M \log \left( \frac{e(M+M^\beta}{M}\right) \\
& \leq \M ( (\beta -1) \log \M + \log \alpha' ),
}
where $\alpha'$ is another positive constant.
Dividing (\ref{eq:rsbound}) by $\M\len$ and applying the bounds above yields
\aln{
R (1 - P_e) &\leq \Pr( \E ) \frac{ \M [(\beta -1) \log \M + \log \alpha' ]}{\M \len} \\
& + \frac{( \Emolseen + \delta)\M  [ (\beta-1) \log \M + \log \alpha ]}{\M \len} + \frac{2}{\M\len} \\
& \leq \Pr( \E ) \left( \frac{\beta - 1}{\beta} +\frac{\log \alpha'}{\beta \log M}  \right) \\ 
&  + (\Emolseen + \delta) \left(1-\frac{1}{\beta}  + \frac{\log \alpha}{\beta \log \M}\right)  + \frac{2}{\M\len}.
}
Finally, letting $\M \to \infty$ yields
\aln{
R \leq ( \Emolseen + \delta ) \left(1- 1/\beta \right),
}
since $\Pr(\E) \to 0$ by Lemma~\ref{lem:conc}.
Since $\delta > 0$ can be chosen arbitrarily small, this concludes the converse proof.

\section{Storage-sequencing tradeoff} 
\label{sec:tradeoff}

Most studies on DNA-based storage emphasize the storage rate (or storage density) as the main figure of merit, while sequencing  costs are often disregarded.
This is due to the fact that current costs of sequencing technologies are orders of magnitude lower than synthesis costs.
However, it is still important to understand, for a given storage rate, how much sequencing is required for reliable decoding, as this determines the time and cost required for retrieving the data.

From this perspective, it makes sense to analyze the tradeoff between storage rates and the amount of sequencing required for reliable recovery.
To do this, we can consider, in addition to the storage rate, the \emph{recovery rate}, defined as the number of bits recovered per DNA base sequenced,
\al{
R_r \defeq \frac{\log|\C|}{\N\len}. \label{eq:Rr}
}
In a practical setting, one can control the amount of sequencing performed, typically specified in terms of the coverage depth $N/M$.
If we consider the error-free shuffling-sampling channel from Section~\ref{sec:noisefree}, in the case
where $Q$ is a Poisson distribution with mean $\cNM$, then $\cNM = N/M$ is the coverage depth, and one would like to choose a value of $\lambda$ that achieves a good trade-off between storage rate and recovery rate.

If we let $R_s$ be the storage rate (previously just $R$, see \eqref{eq:Rs}), from Theorem~\ref{thm:noisefree} and 
the fact that $R_s = \cNM R_r$, the $(R_s,R_r)$ feasibility region can be fully characterized.

\begin{cor}
For the error-free shuffling-sampling channel with $Q \sim {\Pois}(\cNM)$, rates $(R_s,R_r)$ are achievable if and only if, for some $\lambda > 0$,
\al{
R_s  & \leq (1-e^{-\cNM})\left(1 - 1/\beta \right),\nonumber \\
R_r & \leq \frac{1-e^{-\cNM}}{\cNM}\left(1 - 1/\beta \right). 
\nonumber
}
\end{cor}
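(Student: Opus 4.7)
The plan is to reduce this corollary directly to Theorem~\ref{thm:noisefree} by combining it with the deterministic relation $R_s = (N/M) R_r$ that follows immediately from the definitions $R_s = \log|\C|/(M\len)$ and $R_r = \log|\C|/(N\len)$, together with the concentration of $N/M$ around $\cNM$ under Poisson sampling. The entire argument is parametric: each $\cNM > 0$ corresponds to a point on the boundary of the feasibility region.

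First I would observe that for $Q \sim \Pois(\cNM)$, the probability that a given input sequence is never drawn is $q_0 = e^{-\cNM}$. Substituting into Theorem~\ref{thm:noisefree} gives that the storage capacity is $(1-e^{-\cNM})(1 - 1/\beta)$, which is exactly the first inequality. Next, I would bridge $R_s$ and $R_r$ by showing that $N/M \to \cNM$ in probability as $M \to \infty$; this follows because $N = \sum_{i=1}^M N_i$ is a sum of i.i.d.\ $\Pois(\cNM)$ random variables, so a standard Chernoff bound for Poisson sums (or Chebyshev, since $\Var{N} = \cNM M$) gives $\Pr(|N/M - \cNM| > \delta) \to 0$. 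Consequently $R_r = R_s \cdot M/N \to R_s/\cNM$ on the typical set, and the bound on $R_r$ follows from the bound on $R_s$ by dividing by $\cNM$.

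For the achievability direction at a given $\cNM$, I would invoke the index-based scheme described after Theorem~\ref{thm:noisefree} (which attains storage rate arbitrarily close to $(1-e^{-\cNM})(1-1/\beta)$); restricting the decoder to succeed only on the typical event $\{|N/M - \cNM|\leq \delta\}$ incurs vanishing additional error, and on this event $R_r$ converges to $\frac{1-e^{-\cNM}}{\cNM}(1-1/\beta)$. For the converse, suppose $(R_s, R_r)$ is achievable with the channel operating at Poisson mean $\cNM$; the storage-rate bound is immediate from Theorem~\ref{thm:noisefree}, and the relation $R_r = R_s/\cNM$ (in the limit) converts it into the recovery-rate bound.

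The only mild technical point is that $N$ is random, so $R_r = \log|\C|/(N\len)$ is a random quantity and must be interpreted in an appropriate asymptotic sense. This is handled cleanly by the concentration of $N/M$ shown above, so I do not expect any substantive obstacle beyond carefully stating what ``achievable rate pair'' means. Everything else follows from Theorem~\ref{thm:noisefree} by a one-line parametric rewriting.
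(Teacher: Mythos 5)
Your proposal is correct and follows the same route as the paper, which simply remarks that the corollary follows from Theorem~\ref{thm:noisefree} (with $q_0 = e^{-\lambda}$) and the relation $R_s = \lambda R_r$. You fill in the Poisson concentration step that makes $N/M \to \lambda$ rigorous, which the paper treats as implicit.
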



This region is illustrated in Figure~\ref{fig:region}.
This tradeoff suggests that a good operating point is achieved by not trying to maximize the storage rate (which technically requires $\cNM \to \infty$).
Instead, by using some modest coverage depth $\cNM=1,2,3$, most of the storage rate ($63 \%,86\%,95\%$, respectively) can be achieved.
This is in contrast to what has been done in practical DNA storage systems that have been developed thus far, where the decoding phase utilizes very deep sequencing.

\begin{figure}
	\center
       \includegraphics[width=6cm]{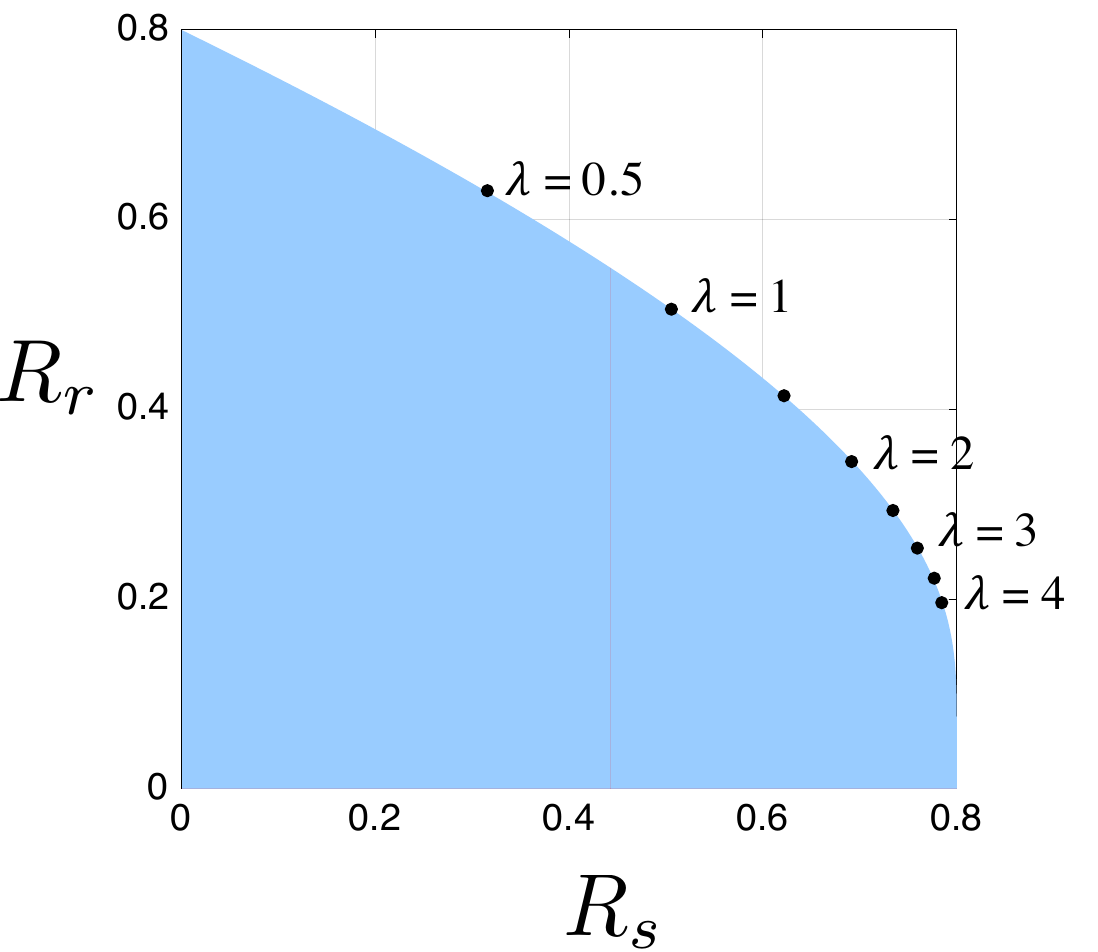} 
       \caption{$(R_s,R_r)$ feasibility region for $\beta = 5$. \label{fig:region}}
\end{figure}

To be concrete, suppose we are interested in minimizing the overall cost of storing data on DNA. 
Synthesis costs
can be several orders of magnitude higher than sequencing costs.
Suppose that per-base synthesis costs are $\alpha$ times larger than per-base sequencing costs.
Then, if our goal is to minimize the cost for synthesizing and sequencing a given number of bits, the overall cost is proportional to 
\aln{
\frac{q}{R_s} + \frac{1}{R_r} = \frac{q+\cNM}{(1-e^{-\cNM})(1-1/\beta)}.
}
This quantity can be minimized over $\cNM$, yielding the optimal cost per bit.
For example, if $q=100$, $\cNM = 4.7$ and if  $q=10000$, $\cNM = 9.2$. 
This suggests that it is possible to achieve reliable DNA storage that minimizes overall costs using only
moderate coverage depths.
We point out that, in practice, one may be interested in optimizing other quantities such as reading time or considering a scenario where the data is read more than once.

\section{DNA breaks and variable-length pieces}


During synthesis, storage, and sequencing, the DNA molecules in a DNA storage system are subject to random breaks.
While a careful handling of the DNA library can minimize the occurrence of such breaks, from a system design point of view, it is interesting to study the impact that such breaks may have on the overall system capacity.
Such an understanding can enable system designs that combat DNA degradation through coding, and that therefore do not require a tight control of the temperature of the DNA library. 
Furthermore, in current implementations of DNA storage systems, the data is read via high-throughput sequencing, which is typically preceded by physical fragmentation of the DNA with techniques like \emph{sonication},
which utilizes sound vibrations to fragment the DNA in random locations~\citep{pomraning2012library}.

A basic initial model to study breaks in DNA molecules is the \emph{torn-paper channel}, proposed by \citet{tpc-globecom}.
In this setting, the channel input is a length-$n$ binary sequence $\tpcx$ that is torn into pieces by the channel.
A tearing point between any two consecutive symbols in $\tpcx$ is assumed to occur with a fixed probability $p_n \in (0,1)$ and independently from all other possible tearing locations.
As a result, the lengths $N_1,N_2,\ldots$ of each of the fragments produced by the channel has a Geometric$(p_n)$ distribution.
The channel output is a shuffled list of these pieces or, equivalently, an unordered set containing these pieces.

\begin{figure}
\centering
\includegraphics[width=0.7\linewidth]{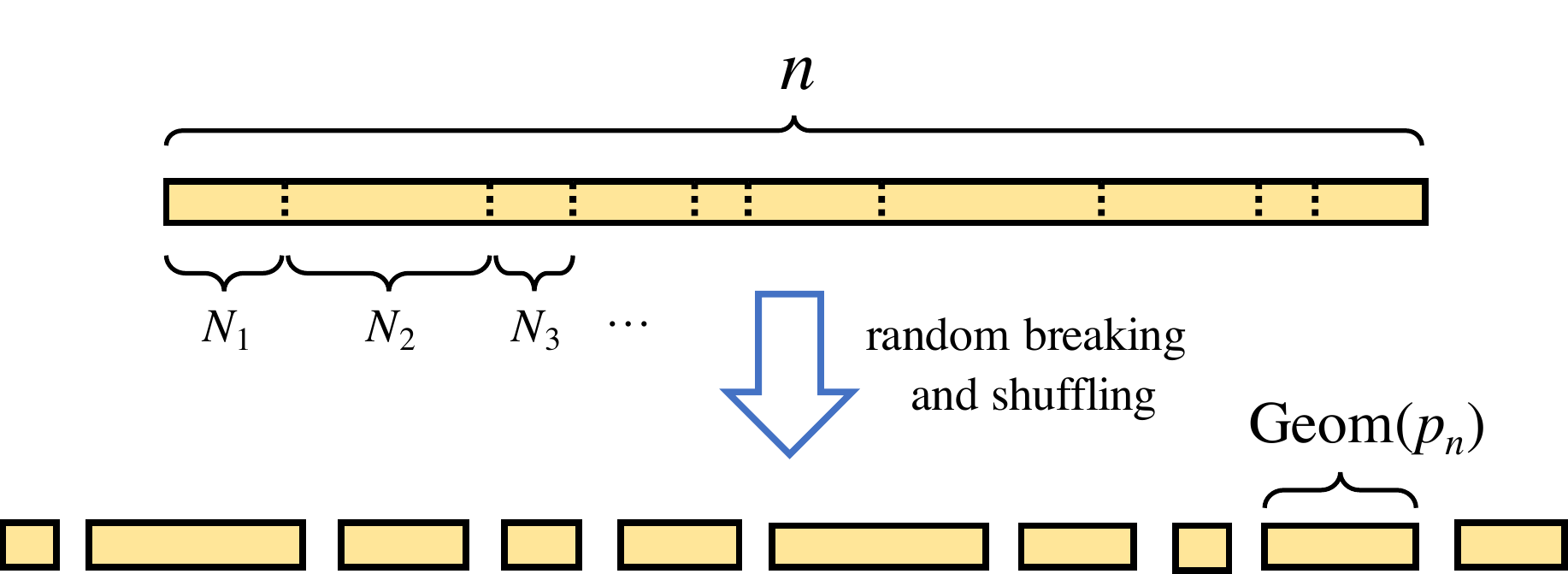}
\vspace{4mm}
\caption{In the torn-paper channel, a single (binary) sequence of length $n$ is sent through the channel.
The channel tears the input sequence into pieces of random sizes.
If tearing points occur according to an i.i.d.~Bernoulli process, as considered in \citet{tpc-globecom}, then the fragment sizes will follow a Geometric$(p_n)$ distribution.
\label{fig:variable}
}
\end{figure}

As it turns out, characterizing the capacity of this channel is non-trivial even in this noise-free setting.
To build some intuition, notice that the expected fragment length is $E[N_i]=1/p_n$.
Suppose that, instead of random-length fragments, we have fragments of a deterministic length $1/p_n$.
In this case, the channel breaks the length-$n$ sequence $\tpcx$ into $n p_n$ pieces of equal length.
Notice that since, in this case, the tearing locations are known a priori, the channel is identical to the error-free shuffling channel discussed in Section~\ref{sec:noisefree}, with every piece being sampled exactly once.
In the notation from Section~\ref{sec:noisefree}, the capacity of an error-free shuffling channel where each piece is drawn exactly once is $1- 1/\beta$, where $\beta$ is the ratio between the logarithm of the number of pieces and the piece length.
This implies that, for the torn-paper channel with pieces of a deterministic size $1/p_n$, the capacity is
\aln{
\lim_{n \to \infty} 1 - \frac{ \log (n p_n) }{1/p_n}
= \lim_{n \to \infty} 1 -  p_n \log n - p_n \log p_n.
}
We see that the regime of interest (where the capacity is non-trivial) is when $p_n$ scales as $1/\log n$,
in which case the capacity reduces to
\al{
C_{\text{deterministic length}} = (1 - \lim_{n\to\infty} p_n \log n)^+.
\label{eq:capshuffling}
}
Mimicking the notation from previous sections, it is reasonable to let 
\aln{
\beta \defeq \lim_{n \to \infty} \frac{1/p_n}{\log(np_n)}
= \lim_{n \to \infty} \frac{1}{p_n \log n},
}
where the last equality assumes the regime $p_n \sim 1/\log n$.
The quantity $\beta$ plays the same role of a ``normalized fragment length''  (normalized by the logarithm of the number of pieces) as it did in previous sections. 
From \eref{eq:capshuffling}, we now have that 
the capacity for the case of deterministic fragment lengths becomes 
\al{
\left(1-\frac{1}{\beta}\right)^+,
}
analogous to the noise-free shuffling channel considered in Section~\ref{sec:motivation_shuffling}.

It is not clear a priori whether the capacity of the torn-paper channel with random fragment lengths should be higher or lower than $(1-1/\beta)^+$.
The fact that the tearing points are not known to the encoder makes it challenging to place a unique index in each fragment, suggesting that the torn-paper channel is ``harder'' and should have a lower capacity.
However, this intuition is incorrect and the capacity of the torn-paper channel with ${\rm Geometric}(p_n)$-length fragments is in fact higher than $(1-1/\beta)^+$.
More precisely, we 
have the following result.
\begin{theorem} \label{thm:captpc}
The capacity of the torn-paper channel is 
\al{\label{eq:captpc}
C_{\rm TPC} = e^{-1/\beta},
}
where $\beta = \lim_{n \to \infty} \frac{1}{p_n \log n}$.
\end{theorem}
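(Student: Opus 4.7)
The plan is to prove both inequalities by a random-coding argument for achievability and a length-sensitive counting argument for the converse, each exploiting the observation that a uniform random codeword $\tpcx\in\{0,1\}^n$ typically has all of its length-$K$ substrings distinct for $K=(1+\delta)\log n$. Consequently, any fragment of length at least $K$ has its starting position in $\tpcx$ uniquely determined by its first $K$ symbols, so the ``index'' per long fragment is paid for by bits already contained in the fragment, rather than by extra prefix bits. This is precisely what elevates $C_{\rm TPC}$ above the naive $(1-1/\beta)^+$ rate obtainable by explicit per-fragment indexing.

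For achievability, fix $\delta>0$ and draw a codebook $\{\tpcx(1),\dots,\tpcx(2^{nR})\}$ of i.i.d.\ uniform length-$n$ binary strings. The decoder returns the unique $\hat m$ such that every received fragment of length $\ge K$ is a substring of $\tpcx(\hat m)$. For an incorrect codeword $\tpcx(m')$ (independent of the true one), a union bound over the $\le n^J$ ways of placing the $J:=\sum_i\ind{N_i\ge K}$ long fragments inside $\tpcx(m')$ yields
\[
P\bigl(\tpcx(m')\text{ consistent}\,\big|\,Y\bigr)\;\le\;n^{J}\cdot 2^{-\sum_i N_i\,\ind{N_i\ge K}}.
\]
Using $N_i\stackrel{\text{iid}}{\sim}\Geometric(p_n)$ and the law of large numbers,
\begin{align*}
\frac{J\log n}{n}\;\to\;\frac{e^{-(1+\delta)/\beta}}{\beta},\qquad
\frac{1}{n}\sum_i N_i\,\ind{N_i\ge K}\;\to\;\Bigl(1+\tfrac{1+\delta}{\beta}\Bigr)e^{-(1+\delta)/\beta},
\end{align*}
so the exponent $nR+J\log n-\sum_i N_i\ind{N_i\ge K}$ concentrates around $n\bigl(R-e^{-(1+\delta)/\beta}\bigr)$ and the error probability vanishes for any $R<e^{-(1+\delta)/\beta}$; taking $\delta\downarrow 0$ covers all rates below $e^{-1/\beta}$.

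For the converse, view $Y$ as a frequency vector indexed by binary strings and condition on the empirical length profile $(K_\ell)_\ell$. Applying Lemma~\ref{lem:histograms} length-by-length gives
\[
H(Y)\;\le\;\sum_\ell E\!\left[\log T[2^\ell,K_\ell]\right]+o(n).
\]
For lengths $\ell\ge K$ one typically has $K_\ell\ll 2^\ell$, and using $K_\ell\approx np_n^2(1-p_n)^{\ell-1}$ (so $\log K_\ell=\log n+o(\log n)$) yields
\[
\sum_{\ell\ge K}E\bigl[K_\ell(\ell-\log K_\ell)\bigr]
= E\Bigl[\sum_i N_i\ind{N_i\ge K}\Bigr]-(\log n)\,E[J]+o(n)
\;\to\; n\,e^{-1/\beta},
\]
while for lengths $\ell<K$ the saturated bound $\log T[2^\ell,K_\ell]\le 2^\ell\log(eK_\ell/2^\ell)$ combined with $K_\ell\le np_n$ contributes only $O(2^K/\log n)=o(n)$. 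Combining with Fano's inequality then gives the matching bound $R\le e^{-1/\beta}$.

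The main obstacle is the converse bookkeeping: showing that short fragments contribute only $o(n)$ to $H(Y)$ despite potentially large multiplicities requires a careful split between ``saturated'' length bins ($K_\ell\gtrsim 2^\ell$) and ``sparse'' bins ($K_\ell\ll 2^\ell$), together with a typical-set argument controlling the empirical length profile in the tails of the Geometric distribution. A secondary obstacle is the achievability concentration step: since $N_i\ind{N_i\ge K}$ has tail of order $\log n$, a Chebyshev-type bound using $\mathrm{Var}(N_i\ind{N_i\ge K})=O((\log n)^2)$ suffices to make the error probability vanish, but a Bernstein-style tail bound would be needed for an exponentially small error probability.
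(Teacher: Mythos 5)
The paper does not actually prove Theorem~\ref{thm:captpc} in the text---it defers to the cited reference for the proof and only supplies the informal ``coverage minus reordering cost'' heuristic in the surrounding discussion. Your high-level plan (random coding that lets the first $\approx\log n$ bits of each long fragment act as an implicit index, plus a histogram-counting converse) is consistent with that heuristic: your limits $\tfrac{1}{n}\sum_i N_i\ind{N_i\ge K}\to(1+\tfrac{1+\delta}{\beta})e^{-(1+\delta)/\beta}$ and $\tfrac{J\log n}{n}\to\beta^{-1}e^{-(1+\delta)/\beta}$ are exactly the coverage and reordering cost the paper describes. However, both halves of your proposal have genuine gaps as written.

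\textbf{Converse.} You split the length profile at $K=(1+\delta)\log n$ and claim that the ``saturated'' bound $\log T[2^\ell,K_\ell]\le 2^\ell\log(eK_\ell/2^\ell)$ makes the $\ell<K$ contribution $O(2^K/\log n)=o(n)$. This fails for two reasons. First, that inequality is an approximation of $\log T[a,b]\le (a-1)\log\frac{e(a+b-1)}{a-1}$ that is only valid when $K_\ell\gg 2^\ell$; the crossover $K_\ell\approx 2^\ell$ occurs near $\ell^*\approx\log n-2\log\log n$, and for $\ell^*<\ell<K$ your expression is negative (hence not a valid upper bound on an entropy), while the correct bound in that regime is $\Theta(2^\ell)$. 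Second, even granting the bound, $2^K=n^{1+\delta}$, so $O(2^K/\log n)$ is $\omega(n)$, not $o(n)$. The fix is to use a \emph{different} threshold for the converse: split at $\ell^*=\log n - c\log\log n$ with $c>1$. Then $\sum_{\ell<\ell^*}\log T[2^\ell,K_\ell]=O\bigl(2^{\ell^*}\log\log n\bigr)=o(n)$ via the genuine saturated bound, and the sparse bound $K_\ell(\ell-\log K_\ell+O(1))$ must be applied for \emph{all} $\ell\ge\ell^*$, including the transitional region where $\ell-\log K_\ell=\Theta(\log\log n)$ rather than $\Theta(\log n)$. That sum still tends to $ne^{-1/\beta}$ because the $\log\log n$ corrections wash out, but the bookkeeping must be done there, not dismissed.

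\textbf{Achievability.} The union bound $P(\tpcx(m')\text{ consistent})\le n^{J}2^{-\sum_i N_i\ind{N_i\ge K}}$ silently assumes the $J$ placements are disjoint: for a placement with union $U$ the match probability is $2^{-|U|}$, which can greatly exceed $2^{-\sum N_i}$ when fragments are placed on top of one another. This is repairable---for instance by redefining the decoder to demand a \emph{disjoint} embedding of the long fragments (which the true codeword always admits), or by first conditioning on the event that $\tpcx(m')$ has distinct length-$K$ windows so each fragment has at most one feasible position---but as stated it is a gap, not a routine step. Finally, a minor arithmetic slip: the exponent actually concentrates around $n\bigl(R-(1+\delta/\beta)e^{-(1+\delta)/\beta}\bigr)$, not $n\bigl(R-e^{-(1+\delta)/\beta}\bigr)$. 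Since $(1+\delta/\beta)e^{-(1+\delta)/\beta}\le e^{-1/\beta}$ and both tend to $e^{-1/\beta}$ as $\delta\downarrow 0$, this does not change the conclusion, but it should be stated correctly.
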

The comparison between $C_{\rm TPC}$ and (\ref{eq:capshuffling}) is shown in Figure~\ref{fig:curves}.
Intuitively, this boost in capacity comes from the tail of the geometric distribution, which guarantees that a fraction of the fragments will be significantly larger than the mean $E[N_i] = 1/p_n$.
This allows the capacity to be positive even for $\beta \leq 1$, in which case the capacity of the deterministic-tearing case in (\ref{eq:capshuffling}) becomes $0$.
We refer to \citet{tpc-globecom} for the proof of Theorem~\ref{thm:captpc}.

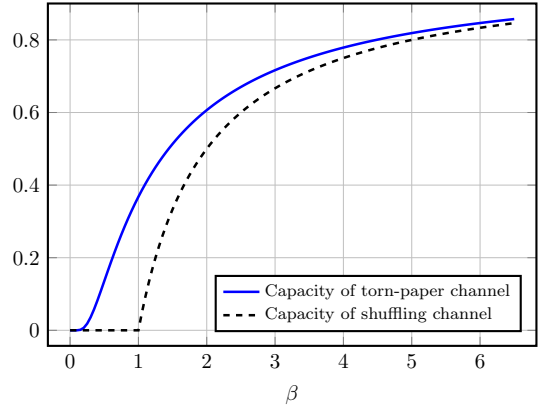
\begin{figure}
\centering 
\begin{tikzpicture}
[scale=0.8]

\begin{axis}[enlargelimits=0.05,xlabel=$\beta$,line width=1pt,grid=major,
legend cell align={left}, legend style={font=\footnotesize,at={(0.97,0.04)},anchor=south east},
width=0.8\textwidth,
height=0.6\textwidth,
]
\addplot[very thick, name path=f,color=blue,mark=none,draw=blue!] table[x index=0,y index=2]{data/tpc.dat};   
\addlegendentry{Capacity of torn-paper channel}

\addplot[dashed, very thick, name path=g,color=black,mark=none,draw=black!] table[x index=0,y index=1]{data/tpc.dat};   
\addlegendentry{Capacity of shuffling channel}


\end{axis}
\end{tikzpicture}  
\vspace{0mm}
\caption{\label{fig:curves} 
Comparison between the capacity of the torn-paper channel $C = e^{-1/\beta}$ and the capacity of the shuffling channel with fragments of fixed length $1/p_n$.
}
\end{figure}

\paragraph{General torn-paper channel capacity.}
The capacity expression of the torn-paper channel can be generalized to the case of arbitrary fragment length distribution.
Moreover, it can be generalized to allow some of the fragments to be lost and not be observed at the output.

Consider a torn-paper channel that breaks the length-$n$ sequence $\tpcx$ into pieces of lengths $N_1,N_2,\dots$, following a common distribution.
In addition, pieces of length $\ell$ are deleted according to a length-dependent deletion probability $d(\ell)$.
As shown by \citet{tpc-isit}, the capacity of this channel is described by the informal expression
\al{ \label{eq:covreor}
C_{\text{general TPC}} = \text{coverage} - \text{reordering-cost}.
}
Here, ``coverage'' refers to the fraction of $\tpcx$ that is covered by  pieces of length at least $\log n$ that are 
observed at the channel output, and ``reordering cost'' refers to the fraction of bits that would need to be dedicated to indices in order to ``unshuffle'' the pieces longer than $\log n$.

In the case of the error-free shuffling channel where each piece is observed exactly once, as long as the pieces have length at least $\log n$, the coverage is $1$, and the reordering cost is the fraction of bits needed for adding $\log(np_n)$ indexing bits to each of the $np_n$ fragments; i.e.,
\aln{
\lim_{n\to\infty} \frac{n p_n \log(n p_n)}{n} = \lim_{n\to\infty} p_n \log n = \frac{1}{\beta},
}
yielding the capacity formula $(1-1/\beta)^+$.
In the case of piece lengths distributed as $\Geometric(p_n)$, using the Exponential approximation to a Geometric random variable, the coverage by pieces of length at least $\log n$ can be shown to be $(1+1/\beta)e^{-1/\beta}$, and the reordering cost can be shown to be $\beta^{-1} e^{-1/\beta}$, yielding the capacity expression $e^{-1/\beta}$.

The recipe in (\ref{eq:covreor}) can be used to derive other closed-form capacity expressions for different piece-length distributions and deletion probability functions $d(\ell)$.
As shown in \citet{tpc-isit}, the capacity depends on $d(\ell)$ through the asymptotic behavior of $d(\ell)$, which is captured by \aln{
\hat d(\xi) \defeq \lim_{n \to \infty} d( \xi \log n) \log n.
}
Table~\ref{tab:table1} shows several examples of capacity expressions for torn-paper channel with specific choices of fragment length distribution and deletion probability $\hat d(\xi)$.

\begin{table}[h]
  \begin{center}
    \caption{Capacity for different torn-paper channels}
    \label{tab:table1}
    \begin{tabular}{c|c|c} 
      $N_i$ & $\hat{d}(\xi)$ & \text{Capacity}\\
      \hline
      \Geometric($p_n$) & $0$ &  $e^{-1/\beta}$\\
      \Geometric($p_n$) & $\epsilon$ & $(1- \epsilon)e^{-1/\beta}$\\
       \Geometric($p_n$) & $e^{-\gamma\xi}$ & $e^{-1/\beta}\left(1 - \frac{\beta^{-2}e^{-\gamma}}{(\beta^{-1} + \gamma)^2}\right)$\\
      $\text{Unif}[0:\gamma\log{n}]$, $\gamma \geq 1$ & $0$ &$\left((\gamma - 1)/\gamma\right)^2$ \\
       \text{fixed} $\ell_n$, $\ell_n \geq \log{n}$ & $0$ & $(1-1/\beta)^+$
      \vspace{3mm}
    \end{tabular}
  \end{center}
\end{table}

\chapter{Noisy Shuffling Channels}
\label{ch:noisy}

In Chapter~\ref{ch:shuffled}, we studied the impact of two key aspects of DNA storage systems: the natural shuffling that occurs due to the fact that molecules are stored in an unordered fashion, and the output sampling that occurs due to the sequencing operation.
In this chapter, we study the additional effect of errors within the stored sequences.

In order to focus on the impact of noise, we consider a simpler sampling distribution $Q$ than the general one considered in Section~\ref{sec:noisefree}.
Specifically, we focus on a ``single-draw'' setting, where each sequence is drawn either once or not at all (but never multiple times).
The main result in this chapter characterizes the capacity of the noisy shuffling-sampling channel in Figure~\ref{fig:generalchannel} when $Q$ is a  $\Bernoulli(1-q)$ distribution and the noisy channel is a binary symmetric channel with crossover probability $p$.
We then discuss extensions to the binary erasure channel and to more general channels.
In all these cases, the capacity expression, at least in some parameter regime, is shown to be
\al{ \label{eq:generalcap}
(1-q)(C_{\text{noisy}} - 1/\beta)^+,
}
where $C_{\text{noisy}}$ is the capacity of the noisy channel $p(y|x)$ in Figure~\ref{fig:generalchannel}, leading to the conjecture that (\ref{eq:generalcap}) holds for general discrete memoryless channels (see Section~\ref{sec:general}).

\section{Noisy shuffling-sampling channel with single draws}
\label{sec:bsc_thm}

 In this section we study the capacity of the noisy shuffling-sampling channel where the sampling distribution is $\text{Bernoulli}(1-q)$ for a fixed parameter $q$.
 Hence 
 we have $\Pr(N_i= 0) = q$ and $\Pr(N_i=1)=1-q$, for $i = 1,\ldots,M$, and the number of output sequences $N$ satisfies $E[N] = (1-q)M$.
Moreover, we will assume that the molecules are each corrupted by a BSC with crossover probability $p$.
We refer to this channel as the BSC shuffling-sampling channel.


As in the error-free shuffling-sampling channel considered in Section~\ref{sec:noisefree}, we consider an index-based coding scheme. 
As we will show, for a large set of parameters $p$ and $\beta$, this scheme is capacity-optimal.
More precisely, we describe a scheme based on an outer and an inner code and argue that it achieves a rate arbitrary close to
\al{
\Rind = (1-q)(C_{\rm BSC} - 1/\beta), \label{eq:lower}
}
where $C_{\rm BSC} = 1 - H(p)$ is the capacity of a BSC with crossover probability $p$.
This scheme is depicted in Figure~\ref{fig:innerouter}.
As the outer code, we take an erasure-correcting code with block length $\M$ and rate $(1-q)$, where each symbol is itself a binary string of length
$\len(1-H(p) - 1/\beta-\epsilon) \approx L C_{\rm BSC} - \log M$, for some small $\epsilon>0$.
As inner code, we take a code designed for a BSC with codewords of length $\len$ and rate $R_{\text{BSC}} = 1-H(p) - \ep \approx C_{\rm BSC}$.
We first encode the data using the outer code, which yields $\M$ symbols given as binary strings of length 
\aln{
\len(1-H(p) - 1/\beta-\epsilon)
= L R_{\text{BSC}} - \log M.
}
We take each symbol, add a unique binary index of length $\log \M$ and encode the resulting sequence using the BSC code, which yields $M$ length-$L$ sequences.
With this scheme, we encode a total of $(1-q)M(\len \Rbsc - \log M)$ data bits, with a data rate of 
\al{
\frac{(1-q)\M \left( \len \Rbsc - \log \M\right)}{\M \len} = (1-q)(\Rbsc - 1/\beta). \label{eq:schemerate}
}
Since $\ep > 0$ can be chosen arbitrarily small,
this scheme achieves a rate arbitrarily close to the rate given in~\eqref{eq:lower}, as claimed.
For simplicity, in this short argument we did not take into account that the inner codeword is decoded in error with a vanishing probability; we refer to \citet{DNAStorageIT} for a more formal achievability argument taking this into account.

\begin{figure}[t]
	\center
       \includegraphics[width=0.75\linewidth]{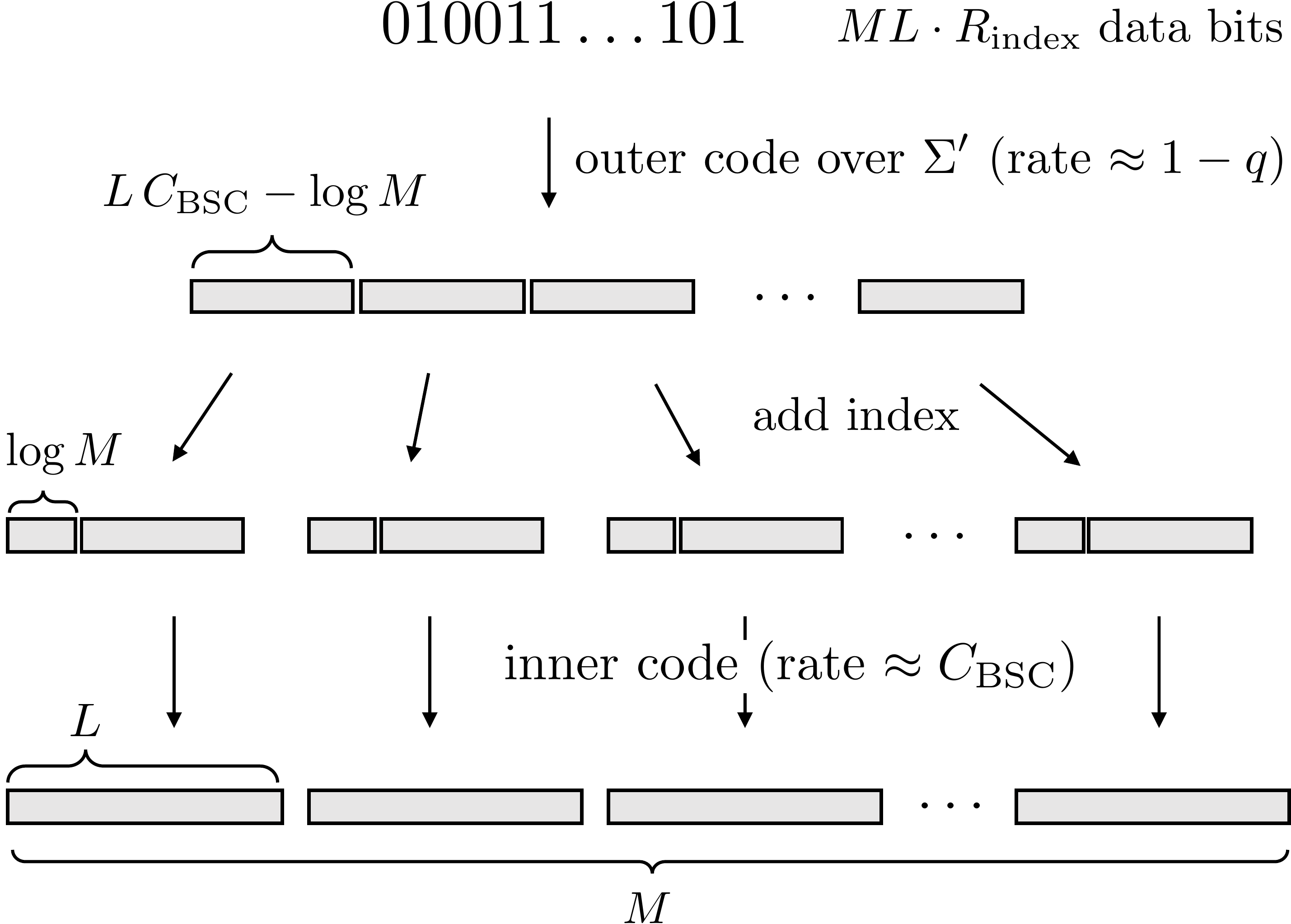} 
       \vspace{2mm}
      \caption{Index-based encoding for the BSC shuffling-sampling channel. First, $ML \cdot R_{\rm index}$ data bits are encoded using an erasure outer with rate $1-q$, where each codeword comprises $M$ symbols from an alphabet $\Sigma'$ with 
      $\log |\Sigma'| = L C_{\rm BSC} - \log M$ (or equivalently, each symbol is a binary string of length $L C_{\rm BSC} - \log M$).
      A unique $\log M$-bit index is added to each symbol, and a capacity-achieving BSC code is applied to each length-$L C_{BSC}$ binary string, producing $M$ length-$L$ binary strings.
      \label{fig:innerouter}}
\end{figure}

On the other hand, the result from Section~\ref{sec:noisefree}, with $Q \sim {\rm Ber}(1-q)$ implies that $C \leq (1-q)(1-1/\beta)$, since the error-free shuffling-sampling channel cannot be worse than the noisy shuffling-sampling channel.
Furthermore, a simple genie-aided argument where the decoder observes the shuffling map can be used to establish that $C \leq (1-q)\Cbsc$, where $\Cbsc = 1-H(p)$ is the capacity of a BSC with crossover probability $p$.
Hence, a capacity upper bound is given by
\al{
C \leq (1-q) \min\left[ 1- H(p), 1-1/\beta \right]. \label{eq:upper}
}
Our main result improves on the upper bound in (\ref{eq:upper}), and establishes that for parameters $(p,\beta)$ in a certain regime, the lower bound in equation~\eqref{eq:lower} is the capacity.

%
%

\begin{theorem}
\label{thm:bsc}
For the BSC shuffling-sampling channel,
%
%
\al{
C = (1-q)(1 - H(p) - 1/\beta), \label{eq:capacity}
}
as long as $p<1/4$ and $1-H(2p)- 2/\beta > 0$.
Moreover, if $\beta \leq 1$, the capacity is $C = 0$.
\end{theorem}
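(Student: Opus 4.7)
My plan splits into achievability, the degenerate case $\beta \leq 1$, and the substantive converse. The achievability bound is essentially already established by the index-based construction of Figure~\ref{fig:innerouter}, whose rate computation in~\eqref{eq:schemerate} matches the claimed capacity; the standard argument to handle the vanishing inner BSC decoding error goes through as in \citet{DNAStorageIT}. The case $\beta \leq 1$ follows immediately from Theorem~\ref{thm:noisefree}: the BSC shuffling-sampling channel is dominated by its noise-free counterpart (replace the BSC noise by the identity), so $C \leq (1-q)(1-1/\beta) \leq 0$.

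For the nontrivial converse, the plan is to emulate the counting argument of Section~\ref{sec:converseCounting}, but with a new twist that extracts the noise contribution $H(p)$ on top of the shuffling contribution $1/\beta$. Let $W$ denote the uniform message and let $Y$ denote the output multi-set $\{\y_1,\dots,\y_N\}$. Fano's inequality combined with the decomposition $I(W;Y)=H(Y)-H(Y\mid W)$ gives $MLR \leq H(Y)-H(Y\mid W)+o(ML)$. For $H(Y)$, I would apply the same multi-set counting as in Section~\ref{sec:converseCounting} (Lemma~\ref{lem:histograms}) together with the Bernoulli concentration of Lemma~\ref{lem:conc}, now applied to the noisy tagged outputs; this yields $H(Y)\leq (1-q)ML(1-1/\beta)+o(ML)$. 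The bulk of the work is the matching lower bound $H(Y\mid W)\geq MH(q)+(1-q)MLH(p)+o(ML)$. I would decompose
\[
H(Y\mid W) = H(S,\{e_i\}_{i\in S}\mid W) - H(S,\{e_i\}\mid Y,W),
\]
where $S$ is the (Bernoulli) sampled index set and $\{e_i\}$ are the BSC noise vectors. The first term equals $MH(q)+NLH(p)$ by independence, so the job reduces to showing that, given the codebook and the noisy outputs, the pair (sampling, noise) is essentially determined, i.e., $H(S,\{e_i\}\mid Y,W)=o(ML)$. Crucially, the $\log N!$ terms that canceled in the naive attempt do \emph{not} cancel here: they are absorbed into $H(Y)$ on one side and into this ``noisy matching'' term on the other.

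The hard part will be this noisy matching step, which is where the hypothesis $p<1/4$ and $1-H(2p)-2/\beta>0$ enters. A standard Bhattacharyya bound gives pairwise ML confusion probability $(2\sqrt{p(1-p)})^d$ for two codewords at Hamming distance $d$, so a union bound over the $O(M^2)$ codeword pairs and the $N$ outputs controls the matching error provided the code has pairwise distances of order at least $2pL$. The hypothesis $p<1/4$ guarantees $2p<1/2$ (so distance $2pL$ is a feasible packing constraint), while $1-H(2p)-2/\beta>0$ is exactly the Gilbert--Varshamov-type condition that an $M$-sized code of length $L=\beta\log M$ with pairwise distance $2pL$ is asymptotically compatible with the rate $1/\beta = \log M/L$. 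Concretely, I would argue that any sequence of codes achieving vanishing decoding error on the noisy shuffling-sampling channel can be expurgated to have minimum pairwise distance exceeding $2pL$ while losing at most a vanishing fraction of codewords (and hence no rate in the limit); on the expurgated code the union bound shows $H(S,\{e_i\}\mid Y,W)=o(ML)$. Assembling the three estimates gives $R \leq (1-q)(1-H(p)-1/\beta)$, which is the desired converse. The subtle point in this last step is making the expurgation quantitative enough that the $o(1)$ rate loss does not exceed the slack between the hypothesis and the target bound; I would use a standard double-counting argument on the pairs $(i,j)$ with $d_H(\x_i,\x_j)\leq 2pL$, analogous to the expurgated bound in classical channel coding.
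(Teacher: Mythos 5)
Your achievability argument and the $\beta\leq 1$ degenerate case match the paper, and the opening moves of your converse are a valid decomposition. The issue is in the crux step.

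You decompose $I(W;Y) = H(Y) - H(Y\mid W)$, bound $H(Y)\leq (1-q)ML(1-1/\beta)+o(ML)$ by multi-set counting, and then try to pin down $H(Y\mid W)$ from below by showing $H(S,\{e_i\}\mid Y,W)=o(ML)$. The algebra is internally consistent and would give the claimed bound. However, the claim $H(S,\{e_i\}\mid Y,W)=o(ML)$ is \emph{not} true for an arbitrary reliable code, and this is exactly where the paper does something structurally different. Once $(S,Y,W)$ are given, determining $\{e_i\}$ still requires knowing \emph{which} output string corresponds to which sampled input string; that matching entropy is on the order of $\log N! \approx (1-q)M\log M = (1-q)ML/\beta$, which is $\Theta(ML)$, unless the input strings $\x_1,\dots,\x_M$ within the codeword are pairwise well-separated. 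A codebook can be reliable and still have, for every codeword, many input strings that are identical or within Hamming distance $2pL$ of each other; for such codes your term is not $o(ML)$. (An extreme illustration: if all $\x_i$ are equal, $H(S,\{e_i\}\mid Y,W)\approx (1-q)M\log M$.)

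The proposed fix via expurgation does not repair this. Expurgation removes codewords from $\C$, but the property you need — pairwise separation of the $M$ sequences \emph{inside} each codeword — is a property of individual codewords, not of the codebook as a whole. If every codeword violates the property (which is entirely possible for a reliable code), removing a vanishing fraction of codewords removes nothing useful, and removing all of them kills the rate. There is no double-counting over $(i,j)$ that converts this into a codebook-level expurgation. What the paper does instead (Lemma~\ref{lem1}) is prove the \emph{joint} bound $H(Y^{NL}) + H(S^N\mid X^{ML},Y^{NL}) \leq (1-q)ML + o(ML)$ — equivalent, after the $\log N!$ bookkeeping, to $H(Y_{\rm ms}) + H(S,\{e_i\}\mid Y,W)\leq (1-q)ML(1-1/\beta)+o(ML)$. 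This joint bound survives precisely because the two terms trade off: when the $\x_i$ are clustered and the matching entropy is large, $H(Y)$ is correspondingly small, as explained in Section~\ref{sec:intuition}. Splitting the bound into two separate, individually-tight inequalities, as you propose, discards this tension and is where the argument fails. Your identification of $1-H(2p)-2/\beta>0$ as a Gilbert--Varshamov-flavored threshold and $p<1/4$ as the point where $2p<1/2$ is a correct reading of where these hypotheses come from, but the mechanism by which they are used has to be the joint trade-off, not a codebook expurgation.
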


The set of parameters $(p,\beta)$ such that $1-H(2p)- 2/\beta > 0$ and $p<1/4$ is the blue region in Figure~\ref{fig:capacity}.
In particular, (\ref{eq:capacity}) holds if $p \leq 0.1$ and $\beta \geq 6.4$, or if $p \leq 0.01$ and $\beta \geq 2.35$.

%
%
%
%

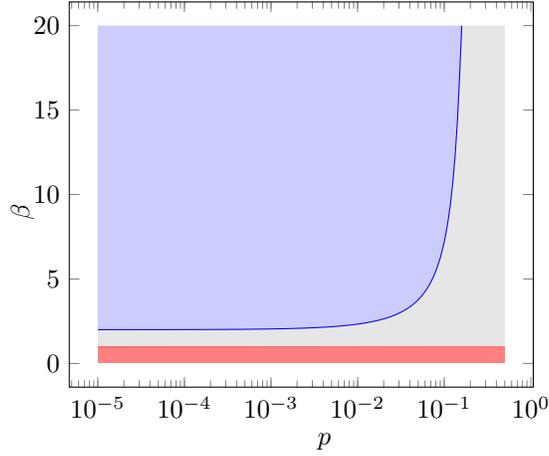
\begin{figure}[t]
\centering 
\begin{tikzpicture}[scale=0.9]

    \begin{semilogxaxis}[xlabel=$p$,ylabel=$\beta$,
    enlargelimits=0.07]
\addplot[name path=f,color=blue,mark=none,draw=blue!] table[x index=0,y index=1]{./curve.dat};   
   
   \path[name path=axis] (axis cs:0.00001,20) -- (axis cs:0.1,20);
   \addplot[blue!20] fill between[of=f and axis];

   \path[name path=axis2] (axis cs:0.00001,0.9) -- (axis cs:0.5,1) -- (axis cs:0.5,20) -- (axis cs:0.1,20);
   \addplot[gray!20] fill between[of=f and axis2];    
   
   \addplot[name path=f, fill=red, fill opacity=0.5, draw=none, mark=none]
coordinates {
    (0.00001, 0)
    (0.00001, 1)
    (0.5, 1)
    (0.5, 0)
};
      
\end{semilogxaxis}
\end{tikzpicture}  
\vspace{1mm}
\caption{\label{fig:capacity}
Parameter regions for which the capacity is characterized.
The capacity in the blue region is given by $C = (1-q)(1 - H(p) - 1/\beta)$, and the capacity in the red region (i.e., for $\beta < 1$) is $0$.
In the gray region, it is still unknown.}
\vspace{-2mm}
\end{figure} 

\subsection{Converse}
\label{sec:converse}

To derive the converse, we view the input to the channel as a binary string of length $\M \len$, denoted by
\aln{
X^{\ML} = \left[ X_1^\len, X_2^\len, \ldots , X_\M^\len \right] 
\in \{0,1\}^{\M\len}
}
or, equivalently, $\M$ strings of length $\len$ concatenated to form a single string of length $\M \len$.
Similarly, the output of the channel is
\aln{
Y^{\NL} = \left[ Y_1^\len, Y_2^\len, \ldots, Y_\N^\len \right] 
\in \{0,1\}^{\N\len},
}
where $N = \sum_i N_i$.
It is useful to define a vector $S^N \in \{1,\ldots,\M\}^\N$ indicating the input string from which each output string was sampled. 
Furthermore, we let $Z^{\NL}= \left[ Z_1^\len, \ldots, Z_\N^\len \right]$ be the random binary error pattern created by the BSC on the $N$ non-deleted strings.
We can now define the input-output relationship
\al{
Y_k^\len &= X^{\len}_{S(k)} \oplus Z^{\len}_{k}, 
\quad
\text{for $k=1,\ldots,N$},
\label{eq:inputoutput}
}
where $\oplus$ indicates elementwise modulo $2$ addition. 
Note that the $N_i$'s are fully determined by the vector $S^\N$ since $N_i = | \{ i \colon S(k) = i\} |$.
Also note that, since $Q \sim {\rm Ber}(1-q)$, $N \leq M$ with probability $1$.

Consider a sequence of codes for the BSC shuffling-sampling channel with rate $R$ and vanishing error probability.
Let 
$
X^{\ML} = \left[ X_1^\len, \ldots, X_\M^\len \right]
$
be the input to the channel when we choose one of the $2^{\ML R}$ codewords from one such code uniformly at random, and 
$
Y^{\NL} = \left[ Y_1^\len, \ldots, Y_\M^\len \right]
$
be the corresponding output.
From Fano's inequality we have that 
$H(X^{\ML}|Y^{\ML}) \leq 1 + P_{e,M} ML \leq ML \ep_M$,
where $P_{e,M}$ is the decoding error probability (of the code indexed by $M$) and $\{\ep_M\}$ is a sequence such that $\ep_M \to 0$ as $M \to \infty$.
Thus,
\begin{align*}
MLR 
&= H\left(X^{\ML}\right) \leq I\left(X^{\ML};Y^{\NL}\right) + \ML \epsilon_{\M},
\end{align*}
where $\epsilon_{\M} \to 0$ as $\M \to \infty$.
Then,
\al{
ML(R-\epsilon_{\M}) & = H\left(Y^{\NL}\right) -  H\left(Y^{\NL}| X^{\ML} \right) \nonumber \\
& = H\left(Y^{\NL}\right) -  H\left(S^N,Z^{\NL},Y^{\NL} | X^{\ML}\right)
\nonumber \\ & \quad \quad 
+ H\left(S^N,Z^{\NL}|X^{\ML},Y^{\NL}\right) \nonumber \\ 
& = H\left(Y^{\NL}\right) -  H\left(S^N,Z^{\NL},Y^{\NL} | X^{\ML}\right)
\nonumber \\ & \quad \quad 
 + H\left(S^N|X^{\ML},Y^{\NL}\right) \label{eq:bound1}
}
The last equality follows by noticing that, given $(S^N,X^{\ML},Y^{\NL})$, 
one can compute
$
Z^{\len}_{k} = Y_k^\len \oplus X^{\len}_{S(k)} 
$
for $1 \leq k \leq N$, 
and thus we have $H\left(Z^{\NL}|X^{\ML},Y^{\NL},S^N\right) = 0$.
Since $N$ is a function of $S^N$, and $S^N$ and $Z^{NL}$ are independent of $X^{ML}$, the second term in (\ref{eq:bound1}) can be expanded as
\al{
& H\left(S^N,Z^{\NL},Y^{\NL} | X^{\ML}\right) \nonumber \\
& = H\left(S^N | X^{\ML} \right)  + H\left(Z^{\NL}| S^N, X^{\ML} \right) 
+ H\left(Y^{\NL} | X^{\ML}, S^N,Z^{\NL}\right)  \nonumber\\
& \eqnum H\left(S^N, N \right) + H\left(Z^{\NL}| S^N, N \right) 
+ H\left(Y^{\NL} | X^{\ML}, S^N,Z^{\NL}\right) \nonumber\\
&\eqnum H(N) + H\left(S^N |N  \right) + H\left(Z^{\NL}| N \right) 
\nonumber\\
\quad &\eqnum H(N) + \sum_{n=1}^M \Pr(N = n) \left[ \log \frac{M!}{(M-n)!} + nL H(p) \right] \nonumber \\
\quad &\eqnum  \sum_{n=1}^M \Pr(N = n)  \left(n \log M  + n L H(p) \right) + o(ML) \nonumber \\
\quad & = \Ex[N] \M \left(  \log \M + \len H(p) \right) + o(\ML) \nonumber \\
\quad & = (1-q)\left[ \M \log \M + \ML H(p) \right] + o(\ML). \label{eq:bound2}
}
In $(i)$, we used the facts that $S^N$ is independent of $X^{ML}$, $N$ is a function of $S^N$, and $Z^{NL}$ is independent of $X^{ML}$ given $S^N$.
Notice that $X^{NL}$ is only dependent on $S^N$ through $N$ (which is a random variable).
For $(ii)$ we used that $H\left(Y^{\NL} | X^{\ML}, S^N,Z^{\NL}\right) = 0$ since $Y^{\NL}$ is determined by $X^{\ML}, S^N,Z^{\NL}$, 
and $(iii)$ follows from the fact that, given $N=n$, $S^N$ is chosen uniformly at random from all vectors in $\{1,\ldots,M\}^n$ with distinct elements.
For $(iv)$, we used the fact that, from Stirling's approximation,
\aln{
\log \frac{M!}{(M-n)!} & = M\log M - (M-n) \log (M-n) +  o(ML) \\
& = M\log M - (M-n) \log M \\ 
& \quad \quad  + (M-n) \log \frac{M}{M-n} +  o(ML) \\
& = n \log M + (M-n) \log \frac{M}{M-n} +  o(ML),
}
and, by Jensen's inequality,
\aln{
0 & \leq \sum_{n>0} \Pr(N = n) (M-n) \log \frac{M}{M-n}  \\ 
& \leq (M- \Ex[N]) \log \frac{M}{(M- \Ex[N])} \nonumber \\
& = (1-q) M \log 1/q = o(ML).
}
%
%
%
%
In order to finish the converse, we need to jointly bound the first and third terms in equation~(\ref{eq:bound1}).
This is summarized in a lemma.
\begin{lemma} \label{lem1}
If $\beta$ and $p < 1/4$ satisfy
\al{
1-H(2p)- 2/\beta > 0, \label{eq:condition}
}
then it holds that 
\aln{
H\left(Y^{\NL}\right) + H\left(S^N|X^{\ML},Y^{\NL} \right) \leq (1-q)\ML + o(\ML).
}
\end{lemma}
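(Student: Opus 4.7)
The plan is to handle the two entropy terms separately. The first term $H(Y^{NL})$ is handled by a trivial chain-rule bound: since $N \leq M$,
\[
H(Y^{NL}) \leq H(N) + H(Y^{NL}\mid N) \leq \log(M+1) + \Ex[N]L = (1-q)ML + o(ML),
\]
using that conditional on $N = n$ the output has $nL$ bits, so $H(Y^{NL}\mid N = n) \leq nL$, together with $\Ex[N] = (1-q)M$. This already saturates the $(1-q)ML$ budget on the right-hand side of the lemma, so the task reduces to establishing $H(S^N \mid X^{ML}, Y^{NL}) = o(ML)$, and it is precisely here that the hypothesis $1-H(2p)-2/\beta > 0$ enters.

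The guiding intuition is that, typically, a nearest-neighbor decoder can recover $S^N$ from $(X^{ML}, Y^{NL})$: since $Y^L_k = X^L_{S(k)} \oplus Z^L_k$ with $\norm[1]{Z^L_k} \approx pL$, whenever any two distinct input strings lie at Hamming distance greater than $2pL$, the correct source $X^L_{S(k)}$ is the unique input within $pL(1+\eta)$ of $Y^L_k$. I would introduce a ``good event'' $G$ requiring (i) $\norm[1]{Z^L_k} \leq pL(1+\eta)$ for every $k$ and (ii) every pair $(X^L_i,X^L_j)$ with $i \ne j$ satisfies $d_H(X^L_i,X^L_j) > 2pL(1+\eta)$. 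On $G$, the mapping $S^N$ is determined by $(X^{ML}, Y^{NL})$, so $H(S^N \mid X^{ML}, Y^{NL}, G) = 0$.

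The decomposition
\[
H(S^N \mid X^{ML}, Y^{NL}) \leq H(\one_G) + \Pr(G^c)\log \frac{M!}{(M-N)!} \leq 1 + \Pr(G^c)\,M\log M,
\]
combined with $M\log M = ML/\beta = \Theta(ML)$, reduces the problem to showing $\Pr(G^c) = o(1)$. The noise-typicality part is handled by a Chernoff bound on each $\norm[1]{Z^L_k}$ together with a union bound over the $N \leq M$ noise patterns. For the pair-separation part, a union bound over pairs yields an expected number of bad pairs at most $\binom{M}{2}\,2^{-L(1-H(2p(1+\eta)))} = O(M^{2-\beta(1-H(2p(1+\eta)))})$, which is $o(1)$ for $\eta > 0$ sufficiently small, exactly because $1-H(2p)-2/\beta > 0$.

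The main obstacle I anticipate is justifying the per-pair Hamming-ball probability $2^{-L(1-H(2p(1+\eta)))}$: it is immediate when the input strings are uniformly random and independent but does not hold automatically for the arbitrary deterministic codebook behind $X^{ML}$. I expect to handle this by augmenting the encoder with a capacity-preserving randomization---for example, XOR-ing each input string with a shared random dither known to both encoder and decoder---so that each $X^L_i$ becomes marginally uniform while neither the BSC statistics nor the shuffling/sampling structure is altered, reducing the analysis to an effectively random-codebook setting where the pair-separation union bound applies cleanly. Alternatively, one can bound $H(S^N \mid X^{ML}, Y^{NL})$ by summing the per-index ambiguities $\log|\{j: d_H(X^L_j, X^L_{S(k)}) \leq 2pL(1+\eta)\}|$ and argue that their sum is $o(ML)$ in expectation via a moment bound that only uses the union-bound estimate on the number of bad pairs.
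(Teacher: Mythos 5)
Your proposal is built on bounding the two terms separately — saturating the budget with the trivial estimate $H(Y^{NL})\le (1-q)ML+o(ML)$ and then trying to show $H(S^N\mid X^{ML},Y^{NL})=o(ML)$ — and you correctly flag the obstacle: the pair-separation union bound needs the codebook to look random. Unfortunately this is not a repairable technicality; it is where the approach breaks. The conditional-permutation entropy is \emph{not} in general $o(ML)$. Take a codeword in which all $M$ input strings $X^L_1,\dots,X^L_M$ are identical (or a codebook that places positive probability on such a configuration). Then $H(S^N\mid X^{ML},Y^{NL})=H(S^N)\approx (1-q)M\log M=(1-q)ML/\beta=\Theta(ML)$, so the sum you are trying to bound is not controlled by your decomposition. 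The lemma nevertheless holds in this example, because in the same configuration $H(Y^{NL})$ collapses to roughly $(1-q)ML\,H(p)$; the slack in the first term exactly absorbs the blow-up in the second. This is precisely the \emph{tradeoff} the paper's Section~\ref{sec:intuition} describes — a codebook that creates ambiguity in $S^N$ (clustered inputs as in Figure~\ref{fig:points}(b)) necessarily shrinks $H(Y^{NL})$, and conversely — and the paper's proof (deferred to \citet{DNAStorageIT}) is a genuinely joint bound on the two quantities. Any attempt to prove each term separately against the fixed budget $(1-q)ML$ will fail on clustered codebooks.

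Your proposed fixes do not close this gap. A \emph{shared} dither $D^L$ XOR-ed onto every $X^L_i$ leaves every pairwise Hamming distance $d_H(X^L_i,X^L_j)$ unchanged, so it provides no separation between close codeword strings; the marginal uniformity of each $X^L_i$ is irrelevant because what matters is the joint (pairwise) geometry of the $M$ strings within a codeword. Independent per-string dithers $D_i$ would randomize the pairwise distances, but then you are analyzing a different channel input $X'^L_i=X^L_i\oplus D_i$ whose pairwise structure bears no relation to the actual $X^{ML}$ appearing in the lemma, and the reduction is not entropy-preserving. The moment-bound alternative has the same defect: the count of bad pairs is $\Theta(M^2)$ for the clustered codebook, so the sum of per-index ambiguities is not $o(ML)$ in expectation. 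To actually prove the lemma one must couple the two terms, for instance by partitioning codewords (or pairs of indices) according to whether they are well-separated, showing that the well-separated part contributes $o(ML)$ to $H(S^N\mid X^{ML},Y^{NL})$ exactly as in your good-event argument, and showing that the ill-separated part forces a compensating deficit in $H(Y^{NL})$ below $(1-q)ML$. The condition $1-H(2p)-2/\beta>0$ is what makes the two contributions balance, and its factor-of-two appearances (both in $H(2p)$ and in $2/\beta$) are a footprint of this pairwise bookkeeping, which your single-term analysis cannot reproduce.
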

The parameter regime $(p,\beta)$ for which~\eqref{eq:condition} holds is the regime in which our capacity expression holds, illustrated in Figure~\ref{fig:capacity}. 
Combining (\ref{eq:bound1}), (\ref{eq:bound2}) and Lemma \ref{lem1}, we have
\aln{
& ML(R-\epsilon_{\M}) \leq (1-q)\left( \ML  - \ML H(p) - M \log \M \right) +  o(\ML).
}
Dividing by $\ML$ and letting $\M \to \infty$ yields the converse.

\subsection{Intuition for Lemma~\ref{lem1}}
\label{sec:intuition}

Rather than providing a full proof of Lemma~\ref{lem1}, here we discuss the intuition behind it, and refer to \citet{DNAStorageIT} for the complete proof.
To discuss the intuition for Lemma~\ref{lem1}, let us focus on the case $q=0$; i.e., none of the molecules are lost at the output.
In this case, $N=M$, and $S^M$ is chosen uniformly at random from all permutations of $[1,\dots,M]$.
If we naively bound each term separately, we obtain 
\aln{
H\left(Y^{ML}\right) + H\left(S^N|X^{\ML},Y^{ML}\right) \leq \ML + \M \log \M. 
}
However, intuitively, the bound $H\left(S^M|X^{\ML},Y^{ML}\right) \leq \M \log \M$ is too loose because, as we argue below, 
if the entropy term $H\left(Y^{ML}\right)$ is large then we expect $H\left(S^M|X^{\ML},Y^{\NL}\right)$ to be small and vice versa.

To see this, first note that given $X^{\ML} = x^{\ML}$ and $Y^{ML} = y^{ML}$, one can estimate the permutation $S$ that maps each output string to the corresponding input string, $S^M$, by finding, for each $y_i^\len$, the $x_j^\len$ that is closest to it and setting $S(i) = j$.
This is a good estimate if no other $x_k^\len$ is close to $x_j^\len$. 
There are two regimes, illustrated in Figure~\ref{fig:points}, one where $S^N$ can be estimated well and one where it cannot.
\begin{figure}[t] 
	\center
       \includegraphics[width=0.65\linewidth]{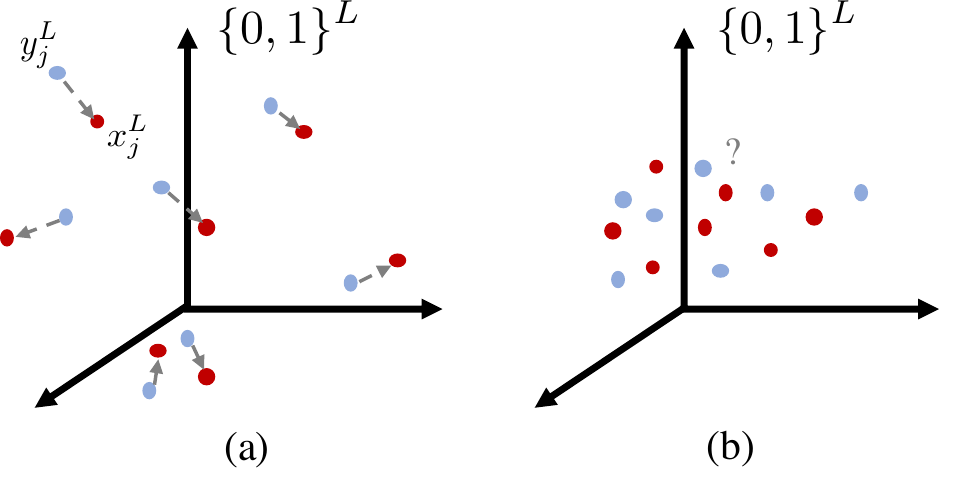} 
              \vspace{-3mm}
       \caption{Two opposite scenarios for estimating $S^N$ from $\left(X^{\ML},Y^{\NL}\right)$.\label{fig:points}}
\vspace{-2mm}
\end{figure}
In the first regime, the strings $x_1^\len,\ldots,x_\M^\len$ are all sufficiently distant from each other (in the Hamming sense).
Hence, the maximum likelihood estimate of $S^N$ given $X^{\ML} = x^{\ML}$ and $Y^{\NL} = y^{\ML}$ is ``close'' to the truth and we expect 
$H\left(S^N|X^{\ML} = x^{\ML},Y^{\NL} = y^{\ML}\right)$ to be small.
In the second regime, illustrated in Fig.~\ref{fig:points}(b), many of the sequences $x_1^\len,\ldots,x_\M^\len$ are close to each other.
So we have less information about $S^N$, and $H\left(S^N|X^{\ML} = x^{\ML},Y^{\NL} = y^{\ML}\right)$ may be large.

On the other hand, the term $H\left(Y^{\NL}\right)$ is maximized if the sequences $\{X_i^\len\}$ are independent and if their values are uniformly distributed in $\{0,1\}^\len$.
Hence, in order for $H\left(Y^{\NL}\right)$ to be large, we expect to be in the regime in Fig.~\ref{fig:points}(a) instead of the regime of Fig.~\ref{fig:points}(b).
This leads to a tradeoff of the terms $H\left(Y^{\NL}\right)$ and $H\left(S^N|X^{\ML},Y^{\NL}\right)$, which we exploit to prove Lemma~\ref{lem1}.

\section{Different noise models}

Given that the capacity expression for the noisy shuffling-sampling channel given in Theorem~\ref{thm:bsc} is $(1-q)(\Cbsc -1/\beta)$, where $\Cbsc = 1-H(p)$ is the capacity of a BSC,
it is natural to ask whether for a different noisy channel with capacity $C_{\text{noisy}}$, the corresponding noisy shuffling-sampling channel has capacity $(1-q)(C_{\text{noisy}}-1/\beta)$.
Notice that, when the sampling distribution $Q$ is $\Bernoulli(1-q)$, the index-based achievability scheme described in Section~\ref{sec:bsc_thm} can be extended in a straightforward way to achieve any rate below $(1-q)(C_{\text{noisy}}-1/\beta)$.
Based on this, 
it is natural to conjecture that the capacity of a noisy shuffling-sampling channel with sampling distribution  $\Bernoulli(1-q)$ and noisy channel with capacity $C_{\text{noisy}}$ is given by
\al{ \label{eq:conjec}
(1-q)(C_{\text{noisy}}-1/\beta)^+.
}
In Chapter~\ref{ch:discussion}, we present an even more general conjecture than this one.
Since achieving the rate in (\ref{eq:conjec}) is straightforward, the challenging technical question is whether the converse argument in Section~\ref{sec:converse} can be generalized.
Next we discuss specific noisy channel cases where this conjecture can be proved.


\subsection{BEC Shuffling-Sampling Channel} \label{sec:bec}

Consider the setting studied in Section~\ref{sec:bsc_thm} except that, instead of a BSC, we have a binary erasure channel (BEC) with erasure probability $p$.
While erasures are not observed in sequencing data in practice, they can be practically motivated by the fact that sequencing technologies often provide a quality score for each base  \citep{bokulich2013quality}.
Such score can in principle be thresholded to identify reliable base calls and unreliable ones, which could then be treated as erasures.

As shown in \citet{shin2020capacity}, ideas similar to those used in the converse in the BSC case in Section~\ref{sec:converse} can be used in the BEC case as well.
While the result in \citet{shin2020capacity} is stated for the case of perfect sampling ($q_1 = 1$), it can be generalized to the case of $\Bernoulli(1-q)$ sampling using the ideas from Section~\ref{sec:converse}, yielding the following result.
\begin{theorem}
\label{thm:bec}
The capacity of the BEC shuffling-sampling channel is
\al{
C = (1-q)(1 - p - 1/\beta), \label{eq:beccapacity}
}
as long as $1-2p-  2/\beta > 0$.
If $\beta \leq 1$, then the capacity is $C = 0$.
\end{theorem}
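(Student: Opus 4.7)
The plan is to mirror, step for step, the BSC argument of Section~\ref{sec:converse} while exploiting the two features that are special to a BEC: the erasure pattern is observable from the output, and the maximum output entropy per bit is $H(p) + (1-p)$ rather than $1$.

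\textbf{Achievability and the degenerate regime.} The inner/outer construction of Section~\ref{sec:bsc_thm} and Figure~\ref{fig:innerouter} transfers verbatim, with a capacity-achieving BEC code (rate $1-p-\epsilon$) replacing the BSC code and the outer erasure code of rate $1-q$ absorbing the unsampled sequences; unique $\log M$-bit indices allow the decoder to sort and de-duplicate as before. This reaches any rate below $(1-q)(1-p-1/\beta)$. For the $\beta \leq 1$ claim, I would simply stochastically dominate the BEC shuffling-sampling channel by its noise-free counterpart (by treating erased bits as uniform guesses) and invoke Theorem~\ref{thm:noisefree} with $q_0 = q$, which forces $C\leq (1-q)(1-1/\beta)^+ = 0$.

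\textbf{Converse setup.} Let $X^{ML}$, $Y^{NL}$, $S^N$ be as in Section~\ref{sec:converse}, and let $E^{NL}\in\{0,1\}^{NL}$ record the i.i.d.\ $\Bernoulli(p)$ erasure pattern, so that $Y_k^L$ equals $X_{S(k)}^L$ on the unerased positions and is the erasure symbol elsewhere. The decisive observation is $H(E^{NL}\mid Y^{NL}) = 0$, since erasure locations are plainly visible at the output. Applying Fano's inequality and the same chain-rule manipulation as in (\ref{eq:bound1})--(\ref{eq:bound2}), with the pair $(S^N, E^{NL})$ playing the role of $(S^N, Z^{NL})$, yields
\begin{align*}
H(Y^{NL}\mid X^{ML})
&= H(S^N) + H(E^{NL}\mid S^N) - H(S^N\mid X^{ML}, Y^{NL})\\
&= (1-q)M\log M + (1-q)MLH(p) - H(S^N\mid X^{ML}, Y^{NL}) + o(ML),
\end{align*}
where the Stirling estimate for $H(S^N)$ is identical to step (iv) of (\ref{eq:bound2}) and $H(E^{NL}\mid S^N) = E[N]LH(p)$ because erasures are i.i.d.\ given $N$. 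Therefore
\[
MLR \leq H(Y^{NL}) + H(S^N\mid X^{ML}, Y^{NL}) - (1-q)M\log M - (1-q)MLH(p) + o(ML).
\]

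\textbf{The BEC analogue of Lemma~\ref{lem1} and the main obstacle.} What remains is the BEC counterpart of Lemma~\ref{lem1}, namely
\[
H(Y^{NL}) + H(S^N\mid X^{ML}, Y^{NL}) \leq (1-q)ML\bigl[H(p) + (1-p)\bigr] + o(ML),
\]
under the hypothesis $1 - 2p - 2/\beta > 0$; substituting this bound produces exactly $R \leq (1-q)(1-p-1/\beta)$ after dividing by $ML$ and letting $M\to\infty$. The deterministic splitting $H(Y^{NL}) \leq H(E^{NL}) + H(Y^{NL}\mid E^{NL}) \leq (1-q)ML[H(p)+(1-p)]+o(ML)$ is the easy half; the content of the lemma is the tradeoff from Section~\ref{sec:intuition}: whenever $H(Y^{NL})$ approaches this maximum, the codewords $X_i^L$ must be spread out in Hamming sense and the ML estimator of $S^N$ from $(X^{ML}, Y^{NL})$ — match each $Y_k^L$ to the unique $X_i^L$ consistent with its $\sim(1-p)L$ unerased positions — succeeds with high probability, forcing $H(S^N\mid X^{ML}, Y^{NL}) = o(ML)$. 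Quantifying this tradeoff rigorously for an \emph{arbitrary} codebook (not just random i.i.d.\ inputs) is where I expect the main technical effort to lie; the parameter $1-2p-2/\beta$ is the BEC margin for which the pair-confusion bound $\binom{M}{2}\cdot 2^{-(1-2p)L}$ is summable to zero, which is exactly the analog of the pair-confusion bound $\binom{M}{2}\cdot 2^{-L(1-H(2p))}$ that underlies the BSC condition $1 - H(2p) - 2/\beta > 0$. I would follow the proof technique of \citet{shin2020capacity} for $q=0$ and graft on the Bernoulli$(1-q)$ sampling exactly as done in Section~\ref{sec:converse}, using the concentration of $N/M$ around $1-q$ (an analogue of Lemma~\ref{lem:conc}) to control the deletions.
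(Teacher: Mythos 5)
Your proposal is correct and follows the same route the paper sketches: the paper does not give a self-contained proof of Theorem~\ref{thm:bec} but defers to \citet{shin2020capacity} for the key lemma and to the Section~\ref{sec:converse} machinery for handling $\Bernoulli(1-q)$ sampling, which is precisely the program you lay out. Your decomposition $H(Y^{NL}\mid X^{ML}) = H(S^N)+H(E^{NL}\mid S^N)-H(S^N\mid X^{ML},Y^{NL})$ (valid because $E^{NL}$ is determined by $Y^{NL}$ and $Y^{NL}$ is determined by $(X^{ML},S^N,E^{NL})$) is exactly the BEC analogue of~\eqref{eq:bound1}--\eqref{eq:bound2}, the $(1-q)ML\bigl[H(p)+(1-p)\bigr]$ bound is the correct replacement for the $(1-q)ML$ in Lemma~\ref{lem1}, and the $\beta\le 1$ claim follows from data processing against Theorem~\ref{thm:noisefree} as you indicate; the only part you do not carry out in full is the joint $H(Y^{NL})$-vs-$H(S^N\mid X^{ML},Y^{NL})$ tradeoff, which the paper also leaves to \citet{shin2020capacity}.
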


Notice that, similar to the BSC case, the theorem only holds for a specific regime of $p$ and $\beta$.
However, the simplicity of the the erasure setting allows us to establish the converse for a larger set of parameters.
In particular, while Theorem~\ref{thm:bsc} holds for $p<1/4$ and $1-H(2p)-1/\beta > 0$, Theorem~\ref{thm:bec} holds for the strictly larger regime $p<1/2$ and $1-2p-1/\beta > 0$.

\subsection{General Noisy Channels}
\label{sec:generalchannels}

Given that (\ref{eq:conjec}) holds (for some parameter regime) for both the BSC and the BEC, it is natural to attempt to generalize the result to more general discrete memoryless channels.
To understand the challenges of this general setting, consider the simpler case of ideal sampling, i.e., 
$N_i=1$ with probability $1$ for $i=1,...,M$.
Suppose we have an arbitrary, not necessarily memoryless, channel $p(y^L|x^L)$ that maps length-$L$ input strings to length-$L$ output strings (which may not be memoryless) and capacity $C_{\text{noisy}}$ (which requires the channel to be defined for $L \to \infty$).
Consider the corresponding noisy shuffling  channel.
The index-based scheme achieves any rate $R < C_{\text{noisy}}-1/\beta$.
However, extending the proof in Section~\ref{sec:converse} 
to establish $C_{\text{noisy}}-1/\beta$ as the capacity is challenging.
Following similar steps to those in (\ref{eq:bound1}), 
\aln{
& ML(R-\epsilon_{\M}) = I\left(X^{\ML};Y^{\ML}\right) \\
& =
H\left(Y^{\NL}\right) - H\left(S^M,Y^{\ML}|X^{\ML}\right)  + H\left(S^M|X^{\ML},Y^{\ML}\right) \\
& =
H\left(Y^{\NL}\right)-H\left(Y^{\ML}|X^{\ML},S^M\right) - H\left(S^M\right)  
  + H\left(S^M|X^{\ML},Y^{\ML}\right) \\
& =
I\left(X^{\ML},S^M;Y^{\ML}\right)  - H\left(S^M\right)  
  + H\left(S^M|X^{\ML},Y^{\ML}\right).
}
Since $H(S^M) = M \log M + o(ML) = ML/\beta + o(ML)$, an outer bound to the noisy shuffling channel capacity in this general case is
\al{
C \leq 
\lim_{M\to \infty} \sup_{p(x^{ML})} &  \frac{I\left(X^{\ML},S^M;Y^{\ML}\right)}{ML}  + \frac{ 
   H\left(S^M |X^{\ML},Y^{\ML}  \right)}{ML}
  - 1/\beta. \label{eq:sup}
}
The main challenge in establishing a general converse is the optimization over distributions of the channel input $X^{ML}$.
Intuitively, for ``well-behaved'' channels, choosing the input distribution $p(x^{ML})$ that maximizes the first term, $I\left(X^{\ML},S^M;Y^{\ML}\right)$, causes the output strings $Y^{L}_i$, $i=1,\dots,M$, to be spread out in the output space, making the second term, $H\left(S^M |X^{\ML},Y^{\ML}  \right)$, small.
In Section~\ref{sec:converse}, we explained how the tension between these two terms can be exploited to jointly bound them.
The same idea is used in \citet{shin2020capacity} in the case of the BEC.
In both cases, this joint bounding allows us to show that (\ref{eq:sup}) is optimized by choosing $p(x^{ML})$ to be $ML$ i.i.d.~$\Bernoulli(1/2)$ random variables.

Notice that, if we know that the optimal distribution in (\ref{eq:sup}) satisfies  $p(x^{ML}) = p(x^L)\times \dots \times p(x^L)$ (i.e., independently encoding each of the input strings with the same $p(x^L)$), then the first term in the optimization becomes
\aln{
\frac{I\left(X^{\ML},S^M;Y^{\ML}\right)}{ML} & = 
\frac{H\left(Y^{\ML}\right)-\left(Y^{\ML}|X^{\ML},S^M\right)}{ML} 
\\
& = 
\frac{\sum_{i=1}^M H\left(Y_i^{L}\right)-\left(Y_i^{L}|X_{S(i)}^L\right)}{ML}
\\ 
& 
= \frac{I(X^L;Y^L)}{L},
}
and by choosing the distribution $p(x^L)$ that achieves the capacity of the noisy channel $p(y^L|x^L)$, this term becomes $C_{\text{noisy}}$.
However, establishing that this is the optimal input distribution for general channels remains an open question.
We refer to Section~\ref{sec:merhav} for additional discussion on extensions to general discrete memoryless channels, but in the multi-draw setting.

\section{Index-based coding and independent decoding}
\label{sec:index}

All achievability schemes discussed in Chapters~\ref{ch:shuffled}~and~\ref{ch:noisy} are based on the same approach, which is illustrated in Figure~\ref{fig:innerouter}.
In particular, we highlight two key elements of the achievability schemes:
\begin{enumerate}
    \item \emph{Index-based coding:} \, Distinct indices are placed in each stored sequence (prior to the encoding with an error-correcting code).
    At the decoder, the indices are used to order the input sequences and to identify missing ones.
    \item \emph{Independent decoding: } \, Each of the stored sequences is a codeword from an error-correcting code, and is independently decoded at the output (possibly followed by the decoding of an outer code).
\end{enumerate}
Notice that both index-based coding and independent decoding are highly desirable from a computational standpoint.
However, from a capacity standpoint it is surprising that optimal codes exhibit these desirable features.


\paragraph{Is index-based coding always optimal?}

A key question that naturally arises from the results in Chapters~\ref{ch:shuffled}~and~\ref{ch:noisy} is whether index-based coding is always optimal.
In the parameter regime in the gray region of Figure~\ref{fig:curves}, the capacity remains unknown and it is possible to achieve rates higher than the $(1-q)(1-H(p)-1/\beta)$ that is achieved by index-based schemes.
Notice that, this region corresponds to a high-noise short-molecule regime, where rates are expected to be low.
In a short-molecule regime, it may be reasonable that dedicating a significant fraction of each molecule for an index is wasteful from a data rate standpoint.
Furthermore, in the high-noise regime, a significant amount of redundancy needs to be added to guarantee that the indices are decoded error-free.
Hence, it may be possible to devise a scheme that uses less redundancy for indices, but the decoder only decodes the indices approximately, which may be enough to order the sequences once all indices are considered jointly.
Furthermore, as we will show in the next chapter, once we move away from the single-draw setting, and allow multiple copies of each input sequence to be observed at the output with independent noise patterns, index-based schemes are no longer optimal, and ``global''  decoding schemes are needed to achieve the noisy shuffling-sampling channel capacity.

A similar question can be posed regarding the optimality of independent decoding.
Such a question was in fact studied in detail in the context of the \emph{bee identification problem} \citep{bee}.

\subsection{Independent decoding and the bee identification problem}
\label{sec:bee}


The bee-identification problem is motivated by the task where one observes a noisy picture of a massive number of bees, uniquely labeled with barcodes, and wishes to simultaneously identify all bees.
This can be modeled as a problem in which 
a single list of $M$ strings of length $L$, $[\x_1,...,\x_M]$, is passed through a noisy shuffling channel (\citet{bee} consider the BSC shuffling channel described in Section~\ref{sec:bsc_thm}).
The goal is to find the correct matching $\sigma$ between the output sequences $[\y_1,...,\y_M]$ and the input sequences; i.e., finding the shuffling induced by the channel on the set of strings, as illustrated in Figure~\ref{fig:bee}.

\begin{figure}[t]
     \centering
     \hspace{-10mm}
      \includegraphics[width=0.5\linewidth]{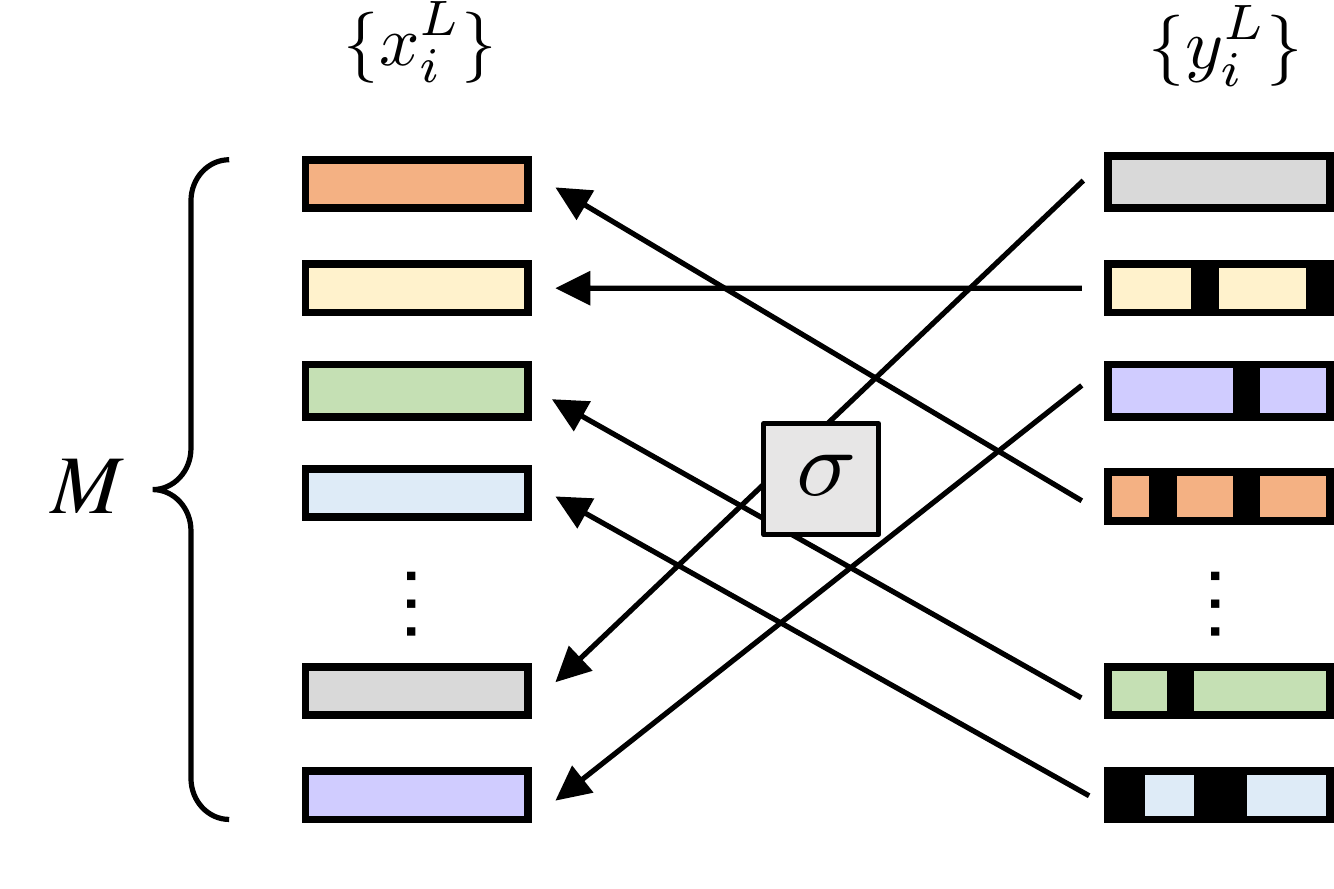}
     \caption{In the bee identification problem, we observe a set of output sequences $\{\y_1,...,\y_M\}$ and want to find a mapping $\sigma$ such that $\x_{\sigma(i)}$ was the input sequence that resulted in $\y_i$ after the noisy shuffling channel.
     \label{fig:bee}}
\end{figure}

A key contribution of \citet{bee} is to find upper and lower bounds for the error exponent of the bee identification problem under two decoding approaches: independent decoding and joint decoding.
In independent decoding, only the $i$th output sequence $\y_i$ is used to determine $\sigma(i)$, while in joint decoding, the entire set of output sequences $\{\y_1,...,\y_M\}$ is used to determine $\sigma$.
They show that the error exponent for joint decoding is strictly larger than the error exponent for independent decoding.
They also compare the error exponents of random code ensembles and typical random codes and show that typical random codes provide a significant improvement.
Since then, follow-up works have refined these error exponents, considered different scenarios in terms of what constitutes an error when recovering $\sigma$ and whether all bees are observed at the output
\citep{bee2,tamir_bee}, and have proposed efficient ways to perform the joint decoding~\citep{kiah_efficient_2021}.

In the context of DNA-based storage, the goal is not to recover the permutation $\sigma$, but rather to decode the message encoded by the shuffled set of sequences.
However, all capacity-achieving schemes discussed in Chapters~\ref{ch:shuffled}~and~\ref{ch:noisy} place unique indices on every sequence, with the goal of allowing the decoder to recover the correct permutation $\sigma$ (which will then allow the message to be recovered).
Hence, the results from \citet{bee} suggest that, also in the context of DNA storage, joint decoding can improve the error exponent with respect to independent decoding.

Nevertheless, it is shown in \citet{bee2} that, in the setting with absentee bees (i.e., when some bees are not observed at the output), the advantage of joint decoding with respect to independent decoding disappears, and the error exponents achieved are the same.
This suggests that the advantage of joint decoding may depend on the unrealistic assumption that all sequences are observed at the output.
As discussed throughout this monograph, this is not realistic based on current DNA data storage prototypes, suggesting that the potential gains of joint decoding may be minor, particularly given the significant computational costs of joint decoding.

\section{Designing practical codes for DNA data storage}
\label{sec:noisyadditional}

In this chapter, we investigated the fundamental limits of noisy shuffling channels.
While we considered standard noisy channels such as the BSC and the BEC, the design of practical codes for DNA storage systems must be tailored to the technologies being employed.
Several recent works have focused on designing error-correcting codes for the specific types of errors that may arise during writing, storing and reading data on DNA.
Some important aspects addressed in the recent literature include DNA synthesis constraints such as sequence composition  \citep{kiah_codes_2016,yazdi_rewritable_2015,erlich_dna_2016,kovavcevic2019runlength}, the asymmetric nature of the DNA sequencing error channel \citep{gabrys_asymmetric_2015}, 
the need for codes that correct insertions and deletions \citep{sala_insertions_2016,kovavcevic2018asymptotically,chee2019linear,kiah2020coding,chrisnata2020optimal},
the need for codes that avoid homopolymers (consecutive repeated nucleotides) as they are conducive to sequencing errors \citep{erlich_dna_2016,kovavcevic2019runlength,kiah_codes_2016}, 
and the need for codes with balanced GC-content (i.e., roughly the same number of $\bG$/$\bC$ as $\bA$/$\bT$) \citep{chee2019efficient}.
An LDPC-based joint design for the inner and outer codes needed for DNA data storage was proposed in \citet{chandak2019improved}.

The shuffling aspect of DNA storage systems has also been addressed from a coding-theoretic perspective.
In particular, the problem of coding over sets (or multi-sets) has been studied \citep{kovavcevic2018codes,lenz2019codingoversets,song2020sequence,sima2019coding,wei_schwartz_2021,song_sequence_2020,Sima_Bruck_2021,Sima_Raviv_Bruck_2020}, 
with the goal of characterizing basic properties such as minimum distance and maximum size of codes in the space of multi-sets, and with the goal of providing bounds on code parameters. 
We point out that the shuffling channel is also connected with a \emph{permutation channel} that permutes the symbols in a message \citep{ahlswede1987optimal,benjamin1975coding}, which recently received renewed attention due to the emergence of DNA-based data storage~\citep{makur2018information}.

The problem of designing indices that allow efficient retrieval of data stored on DNA has also been investigated \citep{lenz2019codingoversets}. 
In particular, \citet{lenz2019anchor} explores the idea of adding 
``anchor sequences'' to the indices, with the goal of reducing the total amount of redundancy needed for the indices.
Furthermore, the careful design of indices that allow \emph{random access} via DNA \emph{hybridization} has also been explored \citep{yazdi_rewritable_2015,yazdi_portable_2017,organick_scaling_2017,chee2019efficient}.

\chapter[Multi-draw Channels]{Multi-draw Channels: Clustering Output Sequences}
\label{ch:multi}

In Chapter~\ref{ch:noisy}, we studied noisy shuffling-sampling channels with single draws.
More precisely, each sequence in the DNA library can either be drawn once or not at all.
However, as shown in Figure~\ref{fig:generalchannel}, a more complete model for DNA-based storage systems  should incorporate the fact that each DNA sequence can be sequenced multiple times.
This multi-draw setting arises because synthesis technologies generate multiple copies in the first place, and PCR used at the time of sequencing multiplies the number of copies further, which results in 
a large number of copies of each DNA molecule at the output.
\begin{figure}
     \centering{
      \includegraphics[width=0.8\linewidth]{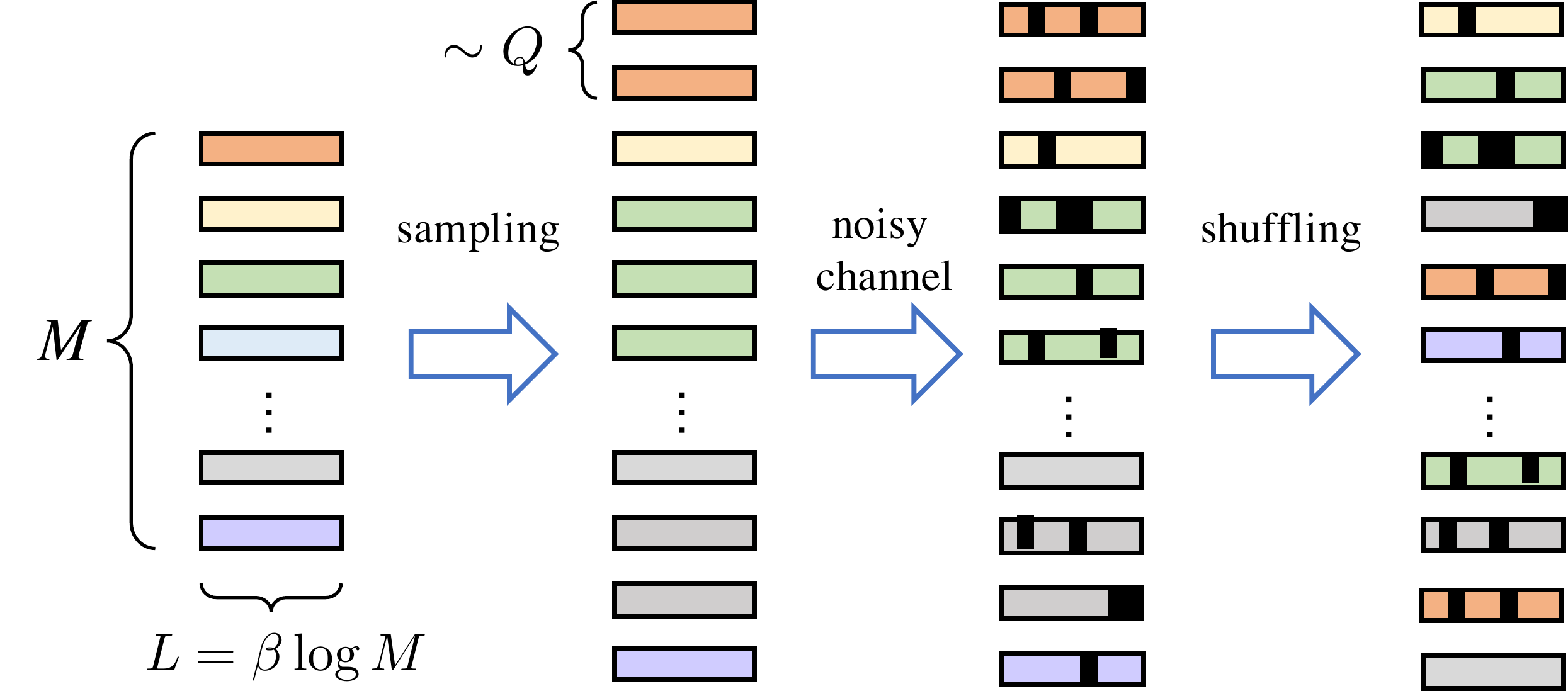}}
      \vspace{2mm}
     \caption{When the sampling distribution $Q$ (see Figure~\ref{fig:generalchannel}) allows more than one copy of each input sequence to be created (e.g., $Q$ is $\Pois(\lambda)$), the output sequences can intuitively be clustered according to the input sequence of origin.
     \label{fig:clustering}}
\end{figure}
As illustrated in Figure~\ref{fig:clustering}, at the output of the DNA storage channel, one may thus observe several different noisy copies of each of the input sequences, and a ``clustering problem'' arises from the need to identify which of these sequences correspond to the same input sequence.

In the multi-draw setting, establishing the capacity of noisy shuffling-sampling channels is significantly more challenging.
Intuitively, the presence of multiple noisy copies of the same string should increase the capacity.
More precisely, if we are able to correctly cluster the output sequences, it is possible to combine the sequences in each cluster and perform error correction, effectively reducing the error rate of the system.
The following questions arise. 
Is it still optimal to utilize an index-based scheme?
Is it information-theoretically optimal to first cluster the output sequences and then attempt to decode the message?

We begin our discussion on this problem by focusing on the setting where each sequence is passed through a Binary Erasure Channel (BEC), in Section~\ref{sec:multidraw}. 
It turns out that erasures are easier to treat than subsitutions or insertions and deletions, and this simplicity enables us to gain basic insights into the nature of the clustering problem that arises in the multi-draw context. 
In addition, we discuss connections with the trace reconstruction problem in Section~\ref{sec:trace}, and coding-theoretic considerations in Section~\ref{sec:multiadditional}.

\section{Noisy shuffling-sampling channel with multi-draws}
\label{sec:multidraw}

As a first step to studying the capacity of a noisy shuffling-sampling channel with multi-draws, we consider a noise model described by a binary erasure channel (BEC), studied by~\cite{levick2021achieving}.
As it turns out, the simplicity of the BEC (relative to the binary symmetric channel (BSC)) allows us to utilize a linear coding scheme to achieve the capacity of the system.
Moreover, the achievability proofs are intuitive and based on simple equation-counting arguments.

We consider the general DNA storage channel illustrated in Figure~\ref{fig:generalchannel}, where the sampling distribution $Q$ is arbitrary, and the memoryless channel $p(y|x)$ is a BEC, i.e., each symbol is erased with probability $p$ (leading to an erasure symbol $\eras$).
Recall that the capacity of the BEC is $1-p$ and, as discussed in Section~\ref{sec:bec}, the capacity of the BEC shuffling-sampling channel with single draws (with a probability $q$ of not seeing a sequence) is given by
\al{
C_{\text{BEC, single-draw}} = (1-q) (1-p -1/\beta).
}

Now consider instead an arbitrary sampling distribution $Q$, with $q_n = \Pr(N_i = n)$.
Notice that, in expectation, a fraction $q_0$ of the input sequences is not sampled, and thus we expect a total of $(1-q_0) M$ output clusters.

Let us first suppose we have a genie-aided channel that provides us with the correct clustering of the output strings.
In this case, it is intuitive that the optimal decoding scheme starts by combining all the sequences in each cluster into a single ``consensus'' sequence.
The consensus sequence (of length $L$) is obtained by taking, for the $i$th position, the $i$th symbol of any sequence in the cluster that is not erased.
For an output cluster with $n$ sequences, the resulting effective erasure rate is $p^n$.
Therefore, after the consensus step, in expectation we have $(1-q_0)M$ output sequences and, for $n=1,2,\ldots$, we expect that $q_n M$ of them have an erasure rate of $p^n$.

Notice that the resulting effective channel after the consensus step is similar to the BEC shuffling-sampling channel discussed in Section~\ref{sec:bec} but, instead of a single erasure probability $p$ across all sequences drawn at the output, the $(1-q_0)M$ output strings experience an average erasure probability of
\al{ \label{eq:peff}
\peff
\defeq
E[p^{N_1}| N_1 \geq 1]
=
\frac{\sum_{n=1}^\infty q_n p^n}{1-q_0}
=
\frac{\sum_{n=1}^\infty q_n p^n}{\sum_{n=1}^\infty q_n}
}
Since the capacity of this genie-aided channel (which has $(1-q_0)M$ output sequences) is larger than the capacity of the BEC shuffling-sampling channel, we can use the converse approach from Section~\ref{sec:converse} (see also \citet{shin2020capacity}) to show that the capacity of the BEC shuffling-sampling channel satisfies
\al{\label{eq:multiconverse}
C_{\text{BEC, multi-draw}} \leq (1-q_0) \left( 1 - \peff - 1/\beta \right),
}
as long as $p$ and $\beta$ satisfy $1-2p - 2/\beta > 0$. 
We point out that, when the converse approach from Section~\ref{sec:converse} is  applied in this multi-draw context, the condition $1-2p - 2/\beta > 0$ depends on $p$, and not $\peff$, but a more careful analysis may be able to relax this condition.

The main result in this section shows that, for $\beta$ large enough, this upper bound can be achieved.

\begin{theorem} \label{thm:multicapacity}
The capacity of the BEC shuffling-sampling channel is
\al{\label{eq:multicapacity}
C_{\text{BEC, multi-draw}} = (1-q_0) \left( 1 - \peff - 1/\beta \right),
}
as long as $1-2p - 2/\beta > 0$. 
\end{theorem}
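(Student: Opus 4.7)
The converse direction of (\ref{eq:multicapacity}) has already been argued around (\ref{eq:multiconverse}): revealing the clustering via a genie reduces the channel to the single-draw BEC shuffling-sampling channel on $(1-q_0)M$ post-consensus strings with effective per-symbol erasure rate $\peff$, and applying Theorem~\ref{thm:bec} to this reduced channel yields the bound $(1-q_0)(1-\peff-1/\beta)$ under the stated condition $1-2p-2/\beta>0$. My plan therefore focuses on exhibiting a matching achievability by means of a random (linear) coding argument; the linearity of BEC noise makes a clean equation-counting union bound possible.

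The construction is as follows. For each message $w\in[2^{MLR}]$, independently draw $M$ sequences $\x_1(w),\ldots,\x_M(w)$ uniformly at random from $\{0,1\}^L$ (equivalently, one may take each bit to be an independent uniformly random linear function of the $K=MLR$ data bits). Given the output multi-set and erasure patterns, the decoder first groups output copies by pairwise compatibility---two copies are in the same cluster iff they agree at every position where both are non-erased---and then outputs the unique $\hat w$ admitting an injection $\pi$ from the clusters into $[M]$ such that for every cluster $c$ the sequence $\x_{\pi(c)}(\hat w)$ agrees with the consensus of $c$ at every position non-erased in at least one copy of $c$.

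The error analysis breaks into three pieces. First, under BEC noise, copies of the same input never conflict at jointly non-erased positions, so false splits cannot occur, and false merges between copies of two distinct uniformly random inputs occur with probability $(1-(1-p)^2/2)^L = M^{-\Theta(\beta(1-p)^2)}$ per pair, which by a union bound over the $O(M^2)$ pairs is $o(1)$ in the intended regime. Second, conditioning on correct clustering, the probability that a fixed injection $\pi$ is consistent with a uniformly random wrong codebook is $\prod_c 2^{-L(1-p^{|c|})}$; taking expectations and using $\sum_{n\ge 1}q_n(1-p^n) = (1-q_0)(1-\peff)$ (together with standard concentration of the cluster-size statistics) gives $2^{-LM(1-q_0)(1-\peff)+o(ML)}$. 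Third, the number of injections from the $N_c\approx(1-q_0)M$ clusters into $[M]$ is at most $M^{N_c}\le 2^{(1-q_0)ML/\beta+o(ML)}$. Combining these, a wrong message is consistent with probability at most $2^{-(1-q_0)ML(1-\peff-1/\beta)+o(ML)}$, and a further union bound over the $2^{MLR}$ wrong messages gives total error probability $2^{ML[R-(1-q_0)(1-\peff-1/\beta)]+o(ML)}$, which vanishes for every $R<(1-q_0)(1-\peff-1/\beta)$.

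The most delicate step I anticipate is the clustering analysis: the naive pairwise-compatibility bound requires $\beta(1-p)^2$ to exceed a universal constant, which for $p$ close to $1/2$ can be strictly stronger than the theorem's hypothesis $1-2p-2/\beta>0$. To cover the whole regime, the backup plan is to bypass explicit clustering and use a joint decoder that directly searches over all labelings $\pi\colon[N]\to[M]$ of output copies to wrong-codebook indices, organizing the union bound by the type $(k_1,\ldots,k_M)$ describing how many copies receive each label; the probability that a type--labeling pair is consistent is $\prod_i 2^{-L(1-p^{k_i})}$, and a method-of-types calculation shows the dominant contribution comes from labelings respecting the true clustering, giving back the same rate. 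A secondary technical hurdle is the concentration of $\sum_c(1-p^{|c|})$ for general $Q$, which I would handle via a truncation that drops clusters of size exceeding a slowly growing threshold at negligible rate cost, along the lines of Lemma~\ref{lem:conc}.
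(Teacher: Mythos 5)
Your converse sketch (genie reveals the clustering, apply Theorem~\ref{thm:bec} to the post-consensus sequences with erasure rate $\peff$) matches the paper, and using a plain i.i.d.\ random codebook in place of the paper's random linear code $\matG\bt_i$ is an inessential difference. The genuine gap is on the achievability side, and it is exactly at the point you flag as delicate. Your primary decoder is only well-defined when the pairwise compatibility relation has no false merges, which your own union bound over $O(M^2)$ pairs shows holds when $\gamma = -\beta\log\!\bigl(1-\tfrac12(1-p)^2\bigr) > 2$; but for small $p$ (concretely, roughly $p\lesssim 0.3$) the theorem's hypothesis $1-2p-2/\beta>0$ admits $\beta$ values for which $\gamma\leq 2$, so the primary plan does not cover the full stated regime. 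The paper's achievability (Theorem~\ref{thm:multiachievability}) requires only $\gamma>1$, precisely because it does \emph{not} insist that the clustering be unambiguous: instead it enumerates \emph{all} conflict-free partitions of the consistency graph into cliques, and the combinatorial lemma (Lemma~\ref{lem:clusterings}) that there are at most $2^U$ such partitions, where $U$ is the number of edges, combined with the bound $U=O(M)$ for $\gamma>1$ (Lemma~\ref{lem:edges}), is what tames the union bound.

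Your backup plan does not fill this gap as written, because the per-labeling probability $\prod_i 2^{-L(1-p^{k_i})}$ is not a valid union-bound term. It is the probability of consistency \emph{conditioned} on the labeling being conflict-free; for a labeling $\pi$ that groups copies from distinct true inputs, one must further multiply by $\Pr[B_\pi]$, the (tiny) probability that those copies happen to be mutually compatible, and dropping this factor is fatal. To see this concretely, in the Bernoulli case the ``map everything to index $1$'' labeling has $\prod_i 2^{-L(1-p^{k_i})}\approx 2^{-L}$; there are $M$ such labelings, and $M\cdot 2^{-L}\cdot 2^{MLR}$ forces $R\to 0$, even though $\Pr[B_\pi]$ for that labeling is astronomically small. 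So the sum over labelings organized by type, with your probability formula, does \emph{not} localize to labelings respecting the true clustering; you need a structural bound on the number of conflict-free labelings (equivalently, the number of clique-partitions), which is precisely what Lemma~\ref{lem:clusterings} supplies and what is missing from your proposal.
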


\subsection{Achievability via linear schemes} \label{sec:linear}

Unlike in Chapters~\ref{ch:shuffled}~and~\ref{ch:noisy}, to achieve the capacity in this multi-draw setting, we do not utilize an index-based scheme.
Instead, we take advantage of the fact that random linear codes are known to achieve the capacity of the BEC \citep{elias1955coding}, and show that linear schemes also achieve the capacity of the BEC shuffling-sampling channel.

We construct our linear coding scheme with rate $R$ as follows.
For a parameter $B$ to be determined, we first generate a random binary $ML \times B$ matrix $\matG$ with i.i.d.~Bernoulli$(1/2)$ entries.
We then generate $2^{MLR}$ random binary vectors $\bt_i$ of size $B$ (also with i.i.d.~$\Bernoulli(1/2)$ entries).
For $i=1,\ldots,2^{MLR}$, the $i$th codeword is generated by computing $\matG\, \bt_i$ (over $\F_2$), and then breaking the resulting length-$ML$  vector into $M$ binary strings of length $L$ (which serve as channel input).
We point out that this is technically not a \emph{linear code} as the set of codewords is not the entire range of $\matG$ (only the vectors $\matG \bt_i$ for $\bt_1,\dots,\bt_{2^{MLR}}$) and hence the code is not a linear subspace of $\{0,1\}^{ML}$.

A key property of $\matG$ that we need is the following.

\begin{lemma} \label{lem:matrixG}
Let $\matG$ be an $ML \times B$ matrix with  i.i.d.~$\Bernoulli(1/2)$ entries.
Fix any $\delta \in  (0,1)$ and a submatrix $\matG'$ formed by an arbitrary set of $(1-\delta)B$ rows of $\matG$.
Then $\matG'$ is full rank (over the finite field $\F_2$) with probability tending to $1$ as $B \to \infty$.
\end{lemma}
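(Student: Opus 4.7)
The plan is to exploit the fact that the entries of $\matG$ are i.i.d.~$\Bernoulli(1/2)$, so any fixed choice of $(1-\delta)B$ rows yields a random matrix $\matG'$ of size $(1-\delta)B \times B$ whose entries are themselves i.i.d.~$\Bernoulli(1/2)$. Equivalently, the rows of $\matG'$ are i.i.d.~uniformly distributed on $\F_2^B$. Since $\matG'$ has fewer rows than columns, ``full rank'' means its rows span a subspace of dimension exactly $(1-\delta)B$, i.e., the rows are linearly independent over $\F_2$. So the claim reduces to: $k \defeq (1-\delta)B$ i.i.d.~uniform vectors in $\F_2^B$ are linearly independent with probability tending to $1$ as $B \to \infty$.

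To bound this probability, I would reveal the rows sequentially and track the event that each new row is linearly independent of the previous ones. Conditional on the first $i$ rows already being linearly independent, they span a subspace of $\F_2^B$ of size $2^i$, so a fresh uniform row lies in this subspace with probability $2^i / 2^B = 2^{i-B}$. A union bound over $i = 0, 1, \ldots, k-1$ then gives
\al{
\Pr(\matG' \text{ not full rank}) \leq \sum_{i=0}^{k-1} 2^{i-B} \leq 2^{k-B} = 2^{-\delta B}, \nonumber
}
which tends to $0$ as $B \to \infty$ since $\delta > 0$.

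There is no real obstacle here; the argument is a standard counting calculation in $\F_2$. The only subtle point worth articulating in the write-up is the initial reduction: because the distribution of $\matG$ is invariant under permutations of its rows (in fact the entries are all i.i.d.), the \emph{particular} set of $(1-\delta)B$ rows chosen does not matter for the probability computation, as long as that set is fixed before the randomness of $\matG$ is revealed. One could optionally sharpen the bound to the exact expression $\prod_{i=0}^{k-1}(1 - 2^{i-B})$ for the probability that $\matG'$ is full rank, but the union-bound version above is already more than enough to establish the statement.
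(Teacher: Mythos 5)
Your proof is correct and takes the same core approach as the paper: reveal the rows of $\matG'$ sequentially and observe that, given the first $i$ rows span a subspace of size at most $2^i$, a fresh uniform row of $\F_2^B$ falls in that subspace with probability at most $2^{i-B}$. The only difference is in how the bound is finished. The paper computes the exact probability $\prod_{i=\delta B+1}^{B}(1-2^{-i})$ that all rows are independent and argues it tends to $1$ by comparing numerator and denominator against the convergent infinite product $\prod_{i\geq 1}(1-2^{-i})\approx 0.28879$; you instead apply a union bound over the ``first dependency'' events to get the clean explicit bound $\Pr(\text{not full rank}) \leq 2^{-\delta B}$. Your finish is arguably preferable since it avoids the appeal to numerical evaluation of the infinite product and gives an explicit exponential decay rate, but both are correct and the underlying counting argument is identical.
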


While a version of this lemma for a random matrix over the reals is relatively straightforward, for the case of $\F_2$ it requires some work to prove.
We present the proof of Lemma~\ref{lem:matrixG} in Section~\ref{sec:lemproof}.

\subsubsection{Single-draw case}

To illustrate this linear non-index-based scheme, let us first consider the case where each sequence is sampled exactly once (i.e., $q_1 = 1$).
At the output, we observe $M$ sequences, which we would like to use for recovering the vector $\bt_i$, or simply for recovering the message index $i$.

Suppose we knew the correct ordering of the $M$ output strings. 
Then we could concatenate them into a single string $\by$ with length $ML$, and then try to solve the system $\matG \, \bt = \by$.
Since a fraction $p$ of the entries of $\by$ would be erased, we would be able to solve this system (with high probability) as long as the number of remaining equations, which is roughly $(1-p)ML$, is greater than or equal to the number of variables $B$.
We would then be able to set $B = (1-p -\ep) ML$ for any $\ep > 0$. 
We would also be able to choose all $2^B$ binary strings in $\{0,1\}^B$ to be $\bt_1,\ldots,\bt_{2^B}$, instead of choosing them randomly, which results in a code with rate of $R = 1- p -\ep$, which is the capacity of a standard BEC.

However, since in actuality we do not know the ordering of the $M$ output strings, we instead consider all $M!$ possible orderings of the output strings.
Each such ordering gives us a distinct concatenated vector $\by$ with a $p$ fraction of erasures and a corresponding set of useless rows in the matrix $\matG$, as illustrated in Figure~\ref{fig:system}.
\begin{figure}
\vspace{3mm}
     \centering{
      \includegraphics[width=0.63\linewidth]{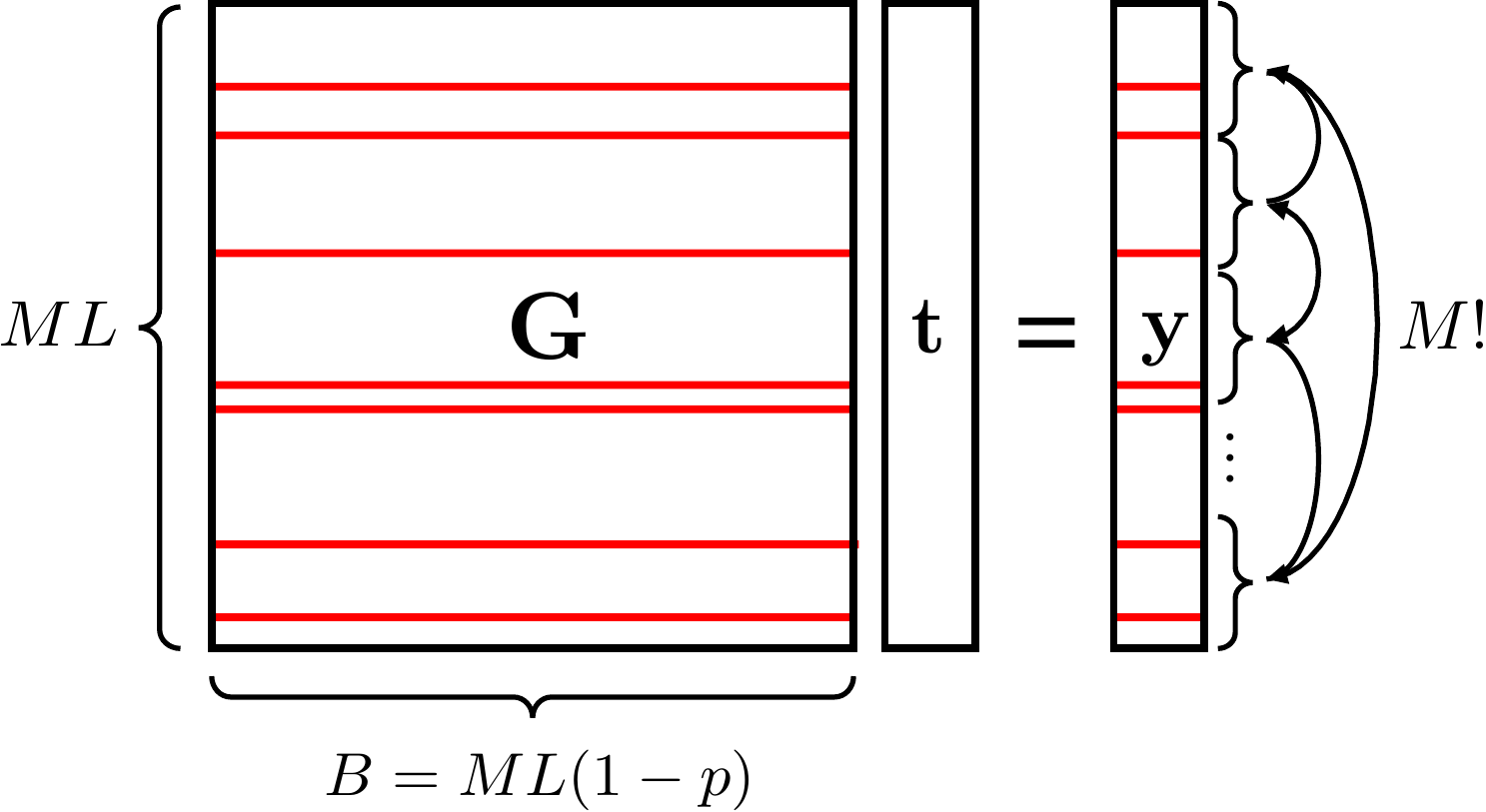}}
     \caption{In the setting where each sequence is observed exactly once at the output, there are $M!$ potential systems of equations, each with $ML(1-p)$ non-erased equations.
     Choosing our rate to be approximately $1-p-1/\beta$ guarantees that, with high probability, only one of these systems of equations will have a solution that corresponds to a codeword.
     \label{fig:system}}
\end{figure}
To guarantee that the matrix $\matG$ with erased rows is still invertible, we set $B = ML(1-p-\ep)$ for some small $\ep > 0$, and Lemma~\ref{lem:matrixG} guarantees that the true system has a unique solution $\bt$.
Moreover, to guarantee correct decoding, we must ensure that only one of the $M!$ systems (the true one) admits a solution $\bt$ that is one of the original vectors $\bt_i$ for $i=1,\ldots,2^{MLR}$.

Assume without loss of generality that message $1$ is sent (i.e., the codeword defined by $\matG\, \bt_1$).
Since all other vectors $\bt_i$, $i \ne 1$, are chosen uniformly at random from $\{0,1\}^B$, by the union bound, the probability that one of the $M! - 1$ incorrect systems has a solution $\bt$ that coincides with some $\bt_i$, $i=2,\ldots,2^{MLR}$, is at most
\aln{
(M! - 1) 2^{MLR} 2^{-B} < 2^{M \log M + MLR - B} = 2^{ML[ R - (1-p-\ep  -1/\beta)]}.
}
By choosing $\ep$ arbitrarily small, we conclude that any rate $R < 1-p - 1/\beta$ can be achieved with a vanishingly small error probability.

While this result simply recovers what the index-based approach in Chapter~\ref{ch:noisy} already achieved, it can be extended to the case of a general sampling distribution $Q$ as we see next.

\subsubsection{Multi-draw case}

Let us now consider the case of a general sampling distribution $Q$.
At the output of the channel, we expect to see roughly $E[N_1]M$ sequences, which we would like to partition into roughly $(1-q_0)M$ clusters.
A major advantage of the BEC setting is that it easier to reason about the clustering task than for the BSC setting, for example. 
Notice that all output strings originated from the same input string are \emph{consistent} with each other; i.e., they agree on any non-erased positions.
Hence, given $N$ output strings $\x_1,\ldots,\x_N$, one can in principle build a graph with $\x_1,\ldots,\x_N$ as vertices, where strings $\x_i$ and $\x_j$ are connected by an edge if they are consistent.
We refer to this graph as the \emph{consistency graph}.
A valid clustering of the output strings corresponds to any partition of $\{\x_1,\ldots,\x_N\}$ such that each group corresponds to a \emph{clique} in the consistency graph.

Since each input string has a probability $q_0$ of not producing an output cluster, in expectation we have $(1-q_0)M$ clusters.
Moreover, an application of Hoeffding's inequality implies that the probability that there are more than $(1-q_0+\ep)M$ or less than $(1-q_0-\ep)M$ true clusters at the output is bounded as
\aln{
\Pr\left( \left| \text{\# true output clusters} - (1-q_0)M \right| > \ep M \right) < 2e^{-2M\ep^2},
}
which tends to $0$ as $M \to \infty$ for any $\ep > 0$.
Therefore, our decoder looks for ways cluster the output strings into at least $(1-q_0-\ep)M$ clusters and at most $(1-q_0+\ep)M$ clusters, for some fixed small $\ep$.
More precisely, our decoder considers \emph{all} valid clusterings of the $N$ output strings into at most $(1-q_0+\ep)M$ and at least $(1-q_0-\ep)M$ clusters.
For each such clustering, a consensus step is performed, effectively converting the clusters into roughly $(1-q_0)M$ strings with a smaller overall erasure rate.
After this point, the decoder proceeds similarly to the case of single draws.
First, each cluster is assigned a distinct index from $\{1,\ldots,M\}$, which can be used to order the consensus strings into a 
vector $\by$ of length roughly $(1-q_0)ML$.
All possible index assignments are considered.
Each index assignment produces a vector $\by$ and a corresponding system of equations, which may yield a solution $\bt$ (provided that the system has a solution).
This clustering-based decoding is illustrated in Figure~\ref{fig:multi-clustering}.

Notice that we need to choose $B$ large enough to guarantee that the true system, obtained from the correct clustering and index assignment, has a unique solution, and $R$ small enough so that only one of the systems (the true one) yields a solution that  coincides with one of the vectors $\bt_i$ used to construct the codebook.
This allows the correct decoding of the message index.

\begin{figure}
     \centering{
      \includegraphics[width=0.8\linewidth]{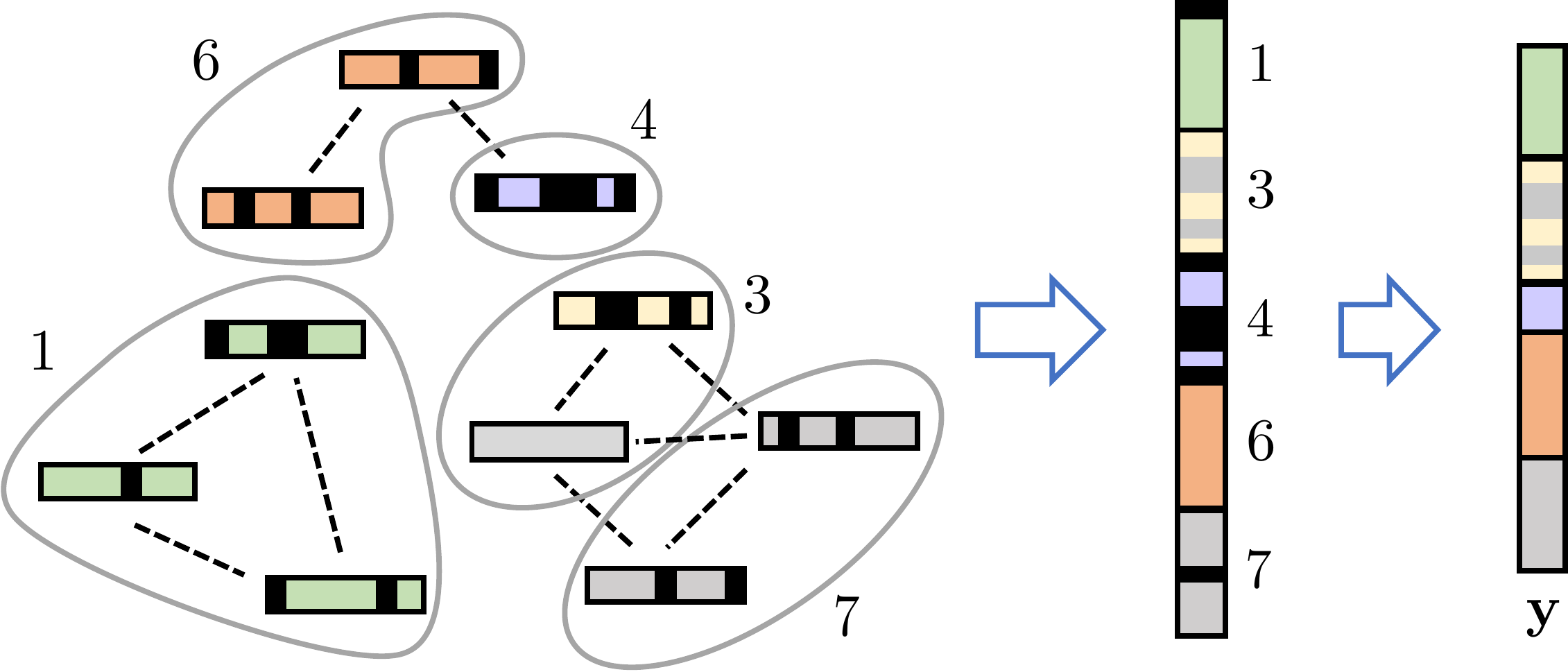}}
     \caption{The decoder first builds a consistency graph between all received sequences, and considers all possible partitions of the graph into cliques, where the total number of  cliques is between $(1-q_0 -\ep)M$ and $(1-q_0 + \ep)M$.
     For each such valid clustering, 
     the decoder considers all possible assignments of the indices $\{1,...,M\}$ to the clusters and uses those indices to order the consensus sequences of each cluster to form the vector 
    $\by$.
    After removing erased entries and the rows of $\matG$ corresponding to erased/missing rows in $\by$, the system $\matG \bt = \by$ can be solved.
     \label{fig:multi-clustering}}
\end{figure}

In order to guarantee that the correct $\bt_i$ can be recovered, we need to make sure that the true system has enough equations after the erasures have been discarded.
Once the consensus step is performed on the true clustering, we end up with at least $M(1-q_0-\ep)$ sequences.
Moreover, the expected erasure rate is given by $\peff$ in (\ref{eq:peff}), and standard concentration inequalities can be used to show that it cannot deviate significantly from that.
Hence, the true system of equations has at least $ML(1-q_0-\ep)(1-\peff -\ep)$ equations with high probability, and if we set $B = ML(1-q_0-\ep)(1-\peff -\ep)(1-\ep)$, by Lemma~\ref{lem:matrixG}, the true system of equations has a unique solution with probability tending to $1$ as $M\to\infty$, which yields the correct solution $\bt_1$.

In order to find the maximum rate $R$ for which this scheme  succeeds with vanishing error probability, we need to bound the number of valid clusterings of the output strings.
To that end, we first analyze the total number of edges in the consistency graph.
Notice that, each true cluster of size $n$ produces ${n \choose 2} \leq n^2/2$ ``correct'' edges in the graph.
Hence, the expected number of correct edges is at most
\al{ \label{eq:correct}
\sum_{n=0}^\infty M q_n \frac{n^2}{2} = \frac{M}{2} E[N_1^2],
}
where $E[N_1^2]$ is the second moment of the sampling distribution $Q$.
Let $Z$ be the total number of incorrect edges in the consistency graph, and 
\aln{
\gamma \defeq - \beta \log \left( 1-\tfrac12(1-p)^2 \right),
}
which is positive for any $p \in (0,1)$.
\begin{lemma} \label{lem:edges}
The number of incorrect edges $Z$ satisfies
\al{
\Pr\left(Z > M^{2-\gamma + \ep}\right) \to 0}
as $M \to \infty$, for any $\ep > 0$. 
\end{lemma}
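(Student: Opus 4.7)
The plan is a first-moment bound on $Z$ followed by Markov's inequality. Throughout, I exploit the random-codebook construction: the codeword $\matG \bt_1$ has i.i.d.~$\Bernoulli(1/2)$ entries conditional on $\bt_1 \neq 0$ (a codebook event of probability $1-2^{-B}$, easily absorbed), so the $M$ input sequences $\x_1, \ldots, \x_M$ may be treated as i.i.d.~uniformly random binary strings of length $L$.

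The key ingredient is the probability that two outputs $\y, \y'$ arising from two distinct input sequences $\x_i, \x_j$ via independent BEC erasures are consistent, meaning they agree at every position where both are non-erased. At each position $t$, inconsistency requires both outputs to be non-erased (probability $(1-p)^2$) \emph{and} the underlying input bits to differ (probability $1/2$, by the uniformity argument above). Thus the per-position consistency probability is $1 - (1-p)^2/2$, and by independence across the $L$ positions,
\aln{
\Pr(\y, \y' \text{ consistent}) = \left(1 - \tfrac{(1-p)^2}{2}\right)^L = M^{-\gamma},
}
using $L = \beta \log M$ and the definition $\gamma = -\beta \log(1-\tfrac12(1-p)^2)$.

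The number of pairs of outputs from distinct input sequences is $\sum_{i<j} N_i N_j$, and each pair forms an incorrect edge iff the two outputs are consistent. Since the sampling counts $\{N_i\}$ are independent of the codebook and the erasure patterns, linearity of expectation yields
\aln{
E[Z] = \binom{M}{2} \, \lambda^2 \, M^{-\gamma} = O(M^{2-\gamma}),
}
where $\lambda = E[N_1]$. Markov's inequality then gives $\Pr(Z > M^{2-\gamma+\ep}) \leq E[Z] \cdot M^{-(2-\gamma+\ep)} = O(M^{-\ep}) \to 0$ as $M \to \infty$. The main subtle point is justifying the $M^{-\gamma}$ consistency probability, which rests on the uniformity of codebook entries across distinct input sequences; this is essentially structural, and heavy tails of $Q$ pose no additional issue since the first-moment calculation only requires $\lambda < \infty$.
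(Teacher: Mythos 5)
Your proof is correct and follows the same high-level strategy as the paper: compute the per-pair consistency probability, bound $E[Z]$ by counting pairs of outputs from distinct inputs, and conclude via Markov's inequality. The main point of divergence is how the per-position independence of $\x_i$ and $\x_j$ is established. You condition on $\bt_1 \neq 0$ (a trivially absorbed event) and exploit the randomness of $\matG$: since each entry of $\matG\bt_1$ is a modulo-$2$ sum of at least one i.i.d.~$\Bernoulli(1/2)$ matrix entry, and distinct output rows use disjoint rows of $\matG$, the entire codeword $\matG\bt_1$ has i.i.d.~$\Bernoulli(1/2)$ entries. This is clean and direct. The paper instead fixes $\matG$ and argues via the randomness of the message vector, writing $[\x_i;\x_j]$ as a block-diagonal matrix $H = \mathrm{diag}(\matG',\matG'')$ applied to $[\bt_i;\bt_j]$ drawn uniformly from $\F_2^{2B}$; but in the actual coding scheme the two blocks are computed from the \emph{same} message vector $\bt_1$, so the vector $[\bt_1;\bt_1]$ is not uniform over $\F_2^{2B}$, and the block-diagonal decomposition as written does not apply (the correct structure is the stacked matrix $[\matG';\matG'']$ applied to a single $\bt_1$). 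Your argument sidesteps this entirely and reaches the same $M^{-\gamma}$ consistency probability, so it is both valid and, if anything, a tighter version of the paper's reasoning.
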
 

Lemma~\ref{lem:edges} (whose proof is presented in Section~\ref{sec:lemproof}) establishes that, as long as $\gamma > 1$, the number of incorrect edges grows slower than $M$ and, hence, will be vanishingly small compared to the number of correct edges in (\ref{eq:correct}).
Next, we bound the number of valid ways to cluster the output strings given the number of edges in a consistency graph.
As it turns out, a coarse bound suffices.
\begin{lemma} \label{lem:clusterings}
Suppose the consistency graph has a total of $U$ edges.
Then there are at most $2^U$ valid ways to cluster the output sequences.
\end{lemma}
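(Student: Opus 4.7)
The plan is to prove the bound by constructing an injection from valid clusterings into the power set of the edge set of the consistency graph. Given any valid clustering $\mathcal{C}=\{C_1,\dots,C_k\}$ of the output strings into cliques, I will associate to it the set
\[
S(\mathcal{C}) \defeq \{ \{u,v\} \colon u \ne v \text{ and } u,v \text{ lie in the same cluster } C_j\}.
\]
Since each $C_j$ is a clique in the consistency graph, every such pair $\{u,v\}$ is actually an edge, so $S(\mathcal{C})$ is a subset of the edge set of the consistency graph, whose size is $U$. Thus $|\{S(\mathcal{C})\}| \leq 2^U$, and the lemma follows once the map $\mathcal{C}\mapsto S(\mathcal{C})$ is shown to be injective.

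To prove injectivity, I would argue that the clustering $\mathcal{C}$ can be reconstructed from $S(\mathcal{C})$. Define a relation $\sim$ on the vertex set by $u \sim v$ iff $u=v$ or $\{u,v\} \in S(\mathcal{C})$. By construction of $S(\mathcal{C})$, two distinct vertices satisfy $u\sim v$ exactly when they belong to the same cluster $C_j$, so $\sim$ is precisely the ``same-cluster'' equivalence relation induced by $\mathcal{C}$. Its equivalence classes recover the partition $\mathcal{C}$, hence distinct clusterings yield distinct sets $S(\mathcal{C})$.

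The main step that needs a little care is the verification that the relation $\sim$ reconstructed from $S(\mathcal{C})$ is actually an equivalence relation (in particular, transitive) without using extra knowledge of $\mathcal{C}$. This is where the fact that clusters are \emph{cliques} (rather than arbitrary connected components) is essential: if $u\sim v$ and $v\sim w$ via $S(\mathcal{C})$, then $u,v,w$ all sit in the same cluster $C_j$, and therefore the edge $\{u,w\}$ exists in the consistency graph and is included in $S(\mathcal{C})$, giving $u\sim w$. This is the only substantive point in the argument; the rest is bookkeeping. Combining injectivity with $|S(\mathcal{C})| \le U$ immediately gives the desired bound of $2^U$ valid clusterings.
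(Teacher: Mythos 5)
Your proof is correct and takes essentially the same approach as the paper's: you map each valid clustering to the subset of intra-cluster edges and argue this map is injective, so the number of clusterings is at most the number of subsets of the $U$ edges. The paper simply asserts the injectivity (``a partition of the consistency graph into cliques is uniquely described by the set of edges that are part of each of the cliques'') while you spell it out by reconstructing the same-cluster equivalence relation from $S(\mathcal{C})$; the transitivity check you flag is a nice sanity check but not strictly necessary, since injectivity already follows from observing that two distinct partitions must disagree on some pair $\{u,v\}$, and that pair is then in exactly one of $S(\mathcal{C}_1)$, $S(\mathcal{C}_2)$.
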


\begin{proof}
Each valid clustering corresponds to a partition of the output sequences such that each group corresponds to a clique in the consistency graph.
Notice that a partition of the consistency graph into cliques is uniquely described by the set of edges that are part of each of the cliques.
Hence, each partition corresponds to a distinct subset of the $U$ edges in the graph.
Since there are at most $2^U$ such subsets, it follows that there are at most $2^U$ valid ways to cluster the output sequences.
\end{proof}
Now from equation~\eqref{eq:correct} and Lemma~\ref{lem:edges}, we know that for $\gamma > 1$, the number of edges $U$ in the consistency graph satisfies $U < \alpha M$ for some $\alpha > 0$, with high probability.
Lemma~\ref{lem:clusterings} thus implies that the number of valid ways to cluster the output strings is at most $2^{\alpha M}$.

The achievability proof is concluded by following the steps from the single-draw case.
\begin{figure}
\vspace{3mm}
     \centering{
      \includegraphics[width=0.8\linewidth]{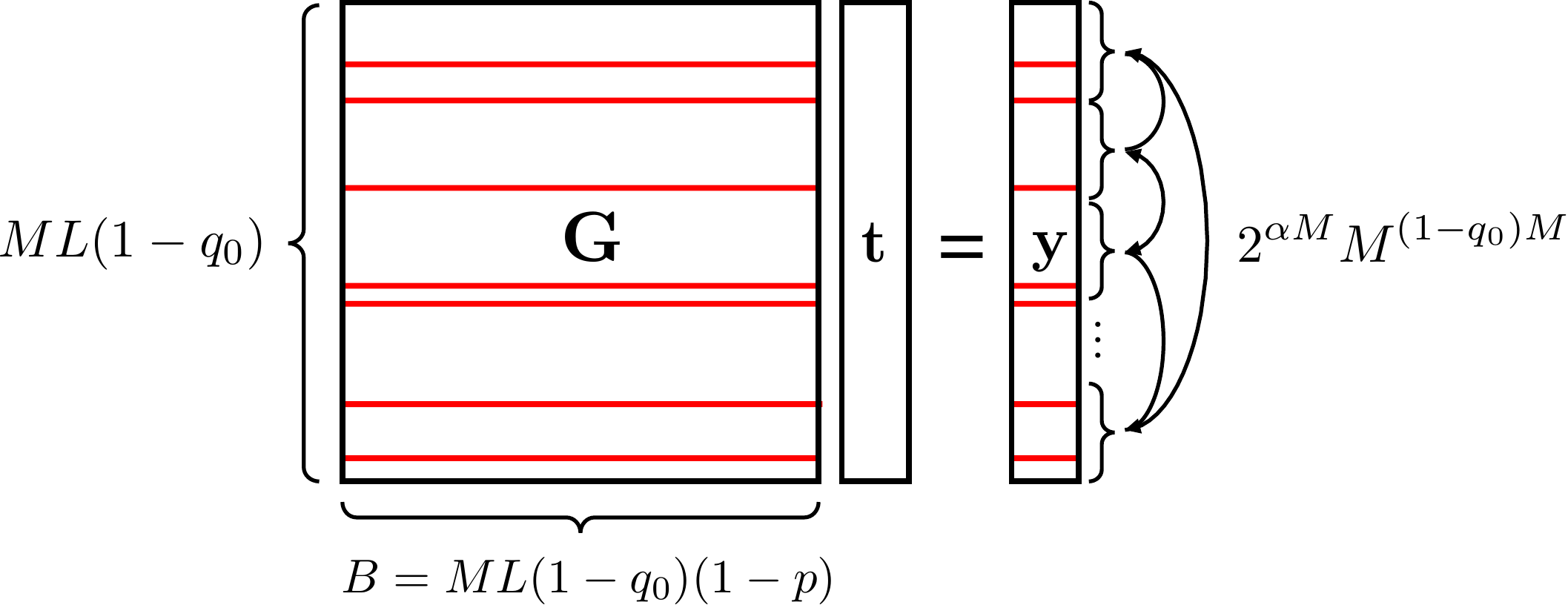}}
     \caption{In the multi-draw setting, there are $2^{\alpha M} M^{(1-q_0)M}$ potential systems of equations, obtained by considering all valid ways to cluster the output strings into $(1-q_0)M$ clusters, and then assigning each cluster to an index.
     The true system (corresponding to the correct clustering and ordering) has $ML(1-q_0)(1-\peff)$ non-erased equations in expectation.
     \label{fig:system_multi}}
\end{figure}
We need to guarantee that no system of equations other than the true one yields $\bt_i$, for some $i=2,\ldots,2^{MLR}$, as a solution.
The total number of systems of equations that the decoder will attempt to solve is upper-bounded by
\al{ \label{eq:numberofsystems}
2^{\alpha M} M^{(1-q_0+\ep)M},
}
because we first need to cluster all the output strings into at most $(1-q_0+\ep)M$ clusters in a valid way and then assign each of them an index in $\{1,\ldots,M\}$ for the ordering.
Therefore, if we assume that message $1$ is selected, the probability that any of the incorrect systems yields a solution $\bt$ that coincides with some $\bt_i$, for $i=2,\ldots,2^{MLR}$, is upper-bounded by
\al{ \label{eq:errormultidraw}
2^{\alpha M} M^{(1-q_0+\ep)M} 2^{MLR} 2^{-B} & = 2^{\alpha M + (1-q_0+\ep) M \log M + MLR - B} \nonumber \\
& = 2^{ML( \alpha/L + (1-q_0+\ep)/\beta + R - B/ML)}.
}
This upper bound goes to zero as $M \to \infty$ as long as
\aln{
R + \frac{\alpha}{\beta \log M} - (1-q_0-\ep)(1-\peff -\ep)(1-\ep) + \frac{1-q_0+\ep}{\beta} < 0.
}
Since $\alpha/\log M \to 0$ and $\ep$ can be chosen arbitrarily small, any rate
\aln{
R < (1-q_0)(1-\peff - 1/\beta)
}
is achievable.
This is formally stated next.
\begin{theorem} \label{thm:multiachievability}
The capacity of the BEC shuffling-sampling channel with multi-draws satisfies
\al{
C_{\text{BEC, multi-draw}} \geq (1-q_0) \left( 1 - \peff - 1/\beta \right),
}
as long as $\gamma = - \beta \log \left( 1-\tfrac12(1-p)^2 \right) > 1$. Here, $\peff$ is the effective erasure probability defined in equation~\eqref{eq:peff}.
\end{theorem}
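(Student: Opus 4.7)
The plan is to formalize the achievability sketch already outlined in the excerpt, combining the random linear code construction with a clustering-based decoder and a careful counting of the candidate systems of equations the decoder must rule out. First, I would fix a small $\epsilon>0$ and construct the code exactly as described: draw an $ML\times B$ matrix $\matG$ with i.i.d.\ $\Bernoulli(1/2)$ entries where $B = ML(1-q_0-\epsilon)(1-\peff-\epsilon)(1-\epsilon)$, and draw $2^{MLR}$ independent uniform vectors $\bt_1,\dots,\bt_{2^{MLR}}\in\{0,1\}^B$; the codeword for message $i$ is the reshaping of $\matG\bt_i$ into $M$ length-$L$ strings. Without loss of generality assume message $1$ is transmitted.

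Next I would show the ``true'' system of equations is uniquely solvable with high probability. By Hoeffding's inequality, the number of non-empty clusters at the output concentrates around $(1-q_0)M$; the consensus operation on a true cluster of size $n$ erases each coordinate independently with probability $p^n$, so by another concentration argument, the number of non-erased positions in the concatenated consensus vector $\by$ is at least $ML(1-q_0-\epsilon)(1-\peff-\epsilon)$ with high probability. The corresponding subsystem of $\matG\bt=\by$ therefore has at least $B/(1-\epsilon)$ equations, and Lemma~\ref{lem:matrixG} guarantees that the submatrix is full rank with probability tending to $1$, yielding a unique solution $\bt_1$.

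The heart of the proof is then the union bound over all \emph{incorrect} systems the decoder considers. The decoder enumerates (i) all valid clusterings of the $N$ output sequences into between $(1-q_0-\epsilon)M$ and $(1-q_0+\epsilon)M$ cliques of the consistency graph, and (ii) all ways of assigning distinct indices in $\{1,\dots,M\}$ to the clusters. By Lemma~\ref{lem:edges}, the hypothesis $\gamma>1$ forces the number of incorrect edges in the consistency graph to be $o(M)$, so together with the expected $O(M)$ correct edges the total edge count is at most $\alpha M$ for some constant $\alpha$ with high probability. Lemma~\ref{lem:clusterings} then bounds the number of valid clusterings by $2^{\alpha M}$, and the number of index assignments is at most $M^{(1-q_0+\epsilon)M}$, giving the total of $2^{\alpha M}M^{(1-q_0+\epsilon)M}$ candidate systems displayed in equation~\eqref{eq:numberofsystems}. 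For a fixed incorrect system, the event that its solution coincides with some $\bt_i$, $i\geq 2$, has probability at most $2^{MLR}2^{-B}$ by independence and the uniform distribution of the $\bt_i$'s; multiplying through and taking logs gives the exponent $\alpha/L+(1-q_0+\epsilon)/\beta+R-B/(ML)$ from equation~\eqref{eq:errormultidraw}, which is strictly negative for all $R<(1-q_0)(1-\peff-1/\beta)$ once $\epsilon$ is chosen small enough. Combining the two steps via the union bound gives vanishing error probability, completing the achievability.

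The main obstacle is the step controlling the number of candidate systems, which is precisely why the condition $\gamma>1$ appears in the theorem: one needs Lemma~\ref{lem:edges} to guarantee that \emph{spurious} cliques in the consistency graph do not proliferate, since otherwise the $2^U$ bound in Lemma~\ref{lem:clusterings} would swamp the $2^{-B}$ factor from the random codebook. Everything else (Hoeffding concentration on the number of clusters, effective erasure concentration in each coordinate, and the full-rank property of the relevant submatrix of $\matG$) is standard once Lemmas~\ref{lem:matrixG}--\ref{lem:clusterings} are in hand.
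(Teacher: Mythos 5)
Your proposal reproduces the paper's achievability proof essentially step for step: the same random-matrix code with $B = ML(1-q_0-\epsilon)(1-\peff-\epsilon)(1-\epsilon)$, the same clustering-via-consistency-graph decoder, the same use of Lemma~\ref{lem:matrixG} for invertibility, Lemmas~\ref{lem:edges} and~\ref{lem:clusterings} to bound candidate clusterings by $2^{\alpha M}$ (which is exactly where $\gamma>1$ is used), and the same union-bound computation yielding the exponent in~\eqref{eq:errormultidraw}. Nothing is missing and no step would fail; this is the paper's argument.
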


In Figure~\ref{fig:regimes}, we compare the parameter regime in which the lower bound in Theorem~\ref{thm:multiachievability} holds with the regime in which the upper bound (\ref{eq:multiconverse}) holds.
We see that the lower bound constraint is strictly weaker than the upper bound constraint, which implies Theorem~\ref{thm:multicapacity}.

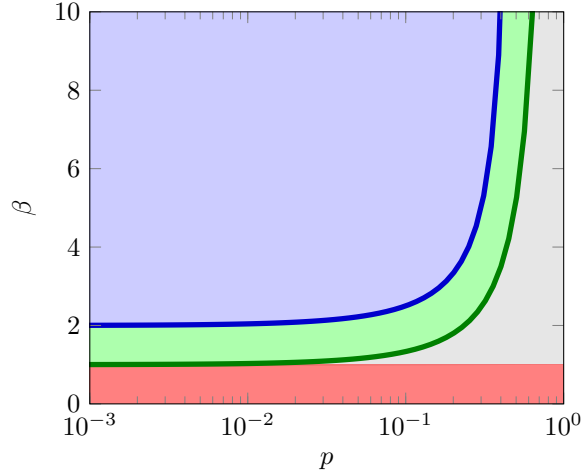
\begin{figure}
\centering 
\vspace{4mm}
\begin{tikzpicture}

    \begin{semilogxaxis}[xlabel=$p$,
        ylabel=$\beta$,
        ylabel near ticks,
        xmin=0.001,
        xmax=1,
        ymin=0,
        ymax=10,
        width=0.65\linewidth,
        axis on top]

   \addplot[name path=redrect, fill=red, fill opacity=0.5, draw=none, mark=none]
coordinates {
    (0.00001, 0)
    (0.00001, 1)
    (1, 1)
    (1, 0)
};

\addplot[name path=fout,color=green!50!black,line width=2pt,mark=none] table[x index=2,y index=3]{./curves.dat};   

\path[name path=axis] (axis cs:0.00001,20) -- (axis cs:0.1,20);
\addplot[green!80!white!100, fill opacity = 0.4] fill between[of=fout and axis];

\addplot[name path=f,color=blue!80!black,line width=2pt,mark=none] table[x index=0,y index=1]{./curves.dat};   
   
\path[name path=axis] (axis cs:0.00001,20) -- (axis cs:0.1,20);
\addplot[blue!20] fill between[of=f and axis];

\path[name path=axis2] (axis cs:0.00001,0.9) -- (axis cs:1,1) -- (axis cs:1,20) -- (axis cs:0.1,20);
\addplot[gray!20] fill between[of=fout and axis2];

\end{semilogxaxis}
\end{tikzpicture}  
\vspace{2mm}
\caption{\label{fig:regimes}
The outer bound from (\ref{eq:multiconverse}) holds in the blue region, which corresponds to $\beta > 1/(1/2-p)$.
The inner bound from Theorem~\ref{thm:multiachievability} holds above the green line, which corresponds to $\beta > -1/\log(1-\tfrac12(1-p)^2)$. 
This characterizes the capacity of the BEC shuffling channel in the blue region as $(1-q_0)(1 - \peff - 1/\beta)$.
The capacity in the red region (i.e., for $\beta < 1$) is $0$ and it is unknown in the gray region.}
\end{figure}

We conclude by noticing that, for natural choices of the sampling distribution $Q$, the effective erasure probability $\peff$ can be computed in closed-form.
For example, if $Q$ is ${\Pois}(\lambda)$, then 
\aln{
\peff = \frac{\sum_{n=1}^\infty q_n p^n}{1-q_0} = \frac{E[p^n] - q_0}{1-q_0} = 
\frac{E[e^{n \ln p}] - q_0}{1-q_0} = 
\frac{e^{- \lambda (1-p) } - e^{-\lambda}}{1-e^{-\lambda}},
}
leading to the capacity expression
\al{
C_{\text{BEC, Poisson$(\lambda)$}} & = (1-e^{-\lambda}) \left( 1 - \frac{e^{- \lambda (1-p) } - e^{-\lambda}}{1-e^{-\lambda}} - \frac{1}{\beta} \right)^+ \nonumber \\
& = \left(1-e^{-\lambda (1-p)} - \frac{1-e^{-\lambda}}{\beta}\right)^+,
\label{eq:becpoisson}
}
which holds in the blue region of Figure~\ref{fig:regimes}, and which provides an achievable region (and possibly the capacity) in the green region.

\subsection{Is performing clustering and decoding separately optimal?}
\label{sec:separation}

From a computational efficiency standpoint, it is natural to separate the clustering and decoding tasks.
Under this approach, one could use standard clustering algorithms to determine which output sequences originate from the same input sequence.
This could then be followed by a consensus step (i.e., a cluster-based error correction), and finally a decoding of the sequences back to the original message.
However, it may be suboptimal to perform the clustering and consensus tasks in a ``code-oblivious'' way; i.e., without taking advantage of the underlying codebook.

Notice that, while the coding scheme presented in Section~\ref{sec:linear} performs the clustering, consensus, and decoding tasks in sequence, the clustering task is not oblivious to the code.
This is because the scheme considers \emph{all} valid clustering partitions and, for each one, attempts to decode the message.
Therefore, in an indirect way, the code is helping with clustering.
In particular, we point out that in the green regime in Figure~\ref{fig:regimes}, where our achievability scheme holds, the number of valid clustering partitions was upper bounded by $2^{\alpha M}$ for some $\alpha > 0$.
Hence, with the aid of the codebook, the decoder is able to identify the correct clustering out of an exponentially large number of valid possibilities.

While this suggests that there may be scenarios where performing clustering and decoding separately is suboptimal, it does not provide any formal evidence.
In particular, our upper bound on the number of ways to cluster is clearly loose (but sufficient to establish capacity).

Moreover, it may be possible to cluster a large fraction of the strings correctly in a code-oblivious way, and this may be sufficient for correct decoding of the message.
In fact, as we discuss next, \citet{lenz2020achieving} proved that the capacity of the BSC shuffling-sampling channel with multi-draws can be achieved in a regime with a scheme that performs clustering and decoding separately.



\subsection{The BSC case} \label{sec:multidrawbsc}

Notice that, using the definition of $\peff$ in (\ref{eq:peff}), the capacity expression in Theorem~\ref{thm:multicapacity} for the BEC multi-draw channel can be rewritten as 
\al{ \label{eq:becalt}
(1-q_0)\left( E[1-p^N| N \geq 1] - 1/\beta \right),
}
and the term $E[1-p^N| N \geq 1]$ can be understood as the expected capacity of a ``multi-draw binary erasure channel,'' where the decoder observes multiple passes of the input string through a BEC, and the number of passes is given by the conditional distribution of the random variable $N$ given $N \geq 1$.

Given this, it is natural to conjecture that, for different choices of the noisy channel $p(y|x)$, the capacity of the multi-draw noisy shuffling-sampling channel is given by
\al{ \label{eq:multiconj}
(1-q_0)\left( E[C_N| N \geq 1] - 1/\beta \right),
}
or equivalently by
\al{ \label{eq:multiconj2}
\sum_{n=1}^\infty q_n \, C_n - \frac{1-q_0}{\beta}.
}
While both expressions are equivalent, (\ref{eq:multiconj}) is a more natural generalization of the capacity expressions from Chapters~\ref{ch:shuffled}~and~\ref{ch:noisy}, since in the case of $\Bernoulli(q)$ sampling, $E[C_N | N \geq 1] = C_1$ is just the capacity of the noisy channel $p(y|x)$.

The capacity expression in (\ref{eq:multiconj}) and (\ref{eq:multiconj2})  in fact holds, in addition to the BEC setting, for the BSC$(p)$ case with Poisson sampling.
More precisely, in \citet{lenz_upper_2019,lenz2020achieving}, 
assuming $q_n = e^{-\lambda} \lambda^n/n!$ for $n \geq 0$, the following result is proven.
\begin{theorem} \label{thm:multibsc}
Provided that $p < \tfrac18$ and $1-H(4p) - 2/\beta > 0$,
the capacity of the BSC shuffling-sampling channel is
\al{
C_{\text{BSC, Poisson$(\lambda)$}} 
& = (1-q_0)\left( E[C_{p,N} | N \geq 1] - 1/\beta \right) \nonumber \\ 
& = \sum_{n=1}^\infty q_n \, C_{p,n} - \frac{1-q_0}{\beta},
}
where $C_{p,n}$ is the capacity of a BSC with $n$ draws, given by
\al{  \label{eq:mitzenmacher_multi}
C_n = 1+ \sum_{k=0}^n {n \choose k} p^k (1-p)^{n-k} \log \frac{1}{1+p^{n-2k}(1-p)^{2k-n}}.
}
\end{theorem}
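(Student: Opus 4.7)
The plan is to establish both achievability and converse, each generalizing the single-draw analysis of Theorem~\ref{thm:bsc}. The target capacity admits the reformulation $\sum_{n\geq 1} q_n(C_{p,n} - 1/\beta)$, which exposes the intended interpretation: each input sequence drawn $n\geq 1$ times at the output contributes $L\,C_{p,n}$ bits of mutual information, of which $\log M$ are consumed by the index used to align the cluster back to its originating input.

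For achievability, I would build on the index-based construction of Section~\ref{sec:bsc_thm} but replace the single inner code with a nested (superposition-style) random code coupled with an MDS/fountain-style outer code. Concretely, each of the $M$ length-$L$ input sequences is drawn at random from a codebook whose structure allows: (i) a $\log M$-bit index to be decoded reliably from a single noisy copy, which is feasible since $1/\beta<C_{p,1}$ in the relevant regime and which enables clustering at the receiver; and (ii) from any cluster of $n$ noisy copies of the same input, an additional progressive payload of size $LC_{p,n}-\log M$ bits to be decoded via joint typicality on the multi-draw BSC$(p)$. The outer code combines the variable-length payloads collected from different clusters into the full message, and concentration of the Poisson counts $\{N_i\}$ guarantees that the aggregate recovered payload is at least $ML\sum_n q_n C_{p,n} - M(1-q_0)\log M$ with high probability, yielding rate $\sum_n q_n C_{p,n}-(1-q_0)/\beta$.

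For the converse, I would follow the same expansion as in Section~\ref{sec:converse}. Fano's inequality together with the chain rule gives
\[
MLR(1-P_e) \leq H(Y^{NL}) - H(S^N) - NL\,H(p) + H(S^N\mid X^{ML},Y^{NL}) + o(ML),
\]
where $S^N$ records the input index from which each output string was sampled. Unlike the single-draw case, $S^N$ now contains \emph{useless} entropy arising from the arbitrary ordering of the noisy copies within each cluster; this portion reappears inside $H(S^N\mid X^{ML},Y^{NL})$ and cancels, so that the effective indexing entropy reduces to the $\log\binom{M}{K}\approx(1-q_0)M\log M$ needed to name the $K\approx(1-q_0)M$ distinct clusters. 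Treating each cluster as a parallel multi-draw BSC$(p)$ with $N_i$ draws then caps its information contribution by $L\,C_{p,N_i}$ (this is the step that produces the $C_{p,n}$ in the final expression), and averaging over the Poisson distribution of $\{N_i\}$ yields the matching upper bound.

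The main obstacle is the multi-draw analogue of Lemma~\ref{lem1}: jointly bounding $H(Y^{NL})+H(S^N\mid X^{ML},Y^{NL})$ in the presence of correlated noisy copies of each stored sequence. The decoder must now disambiguate entire \emph{clusters} rather than individual codewords, and a pairwise confusability analysis shows that two clusters arising from distinct inputs at Hamming distance $d$ become indistinguishable unless $d$ exceeds roughly $4pL$ rather than the $2pL$ sufficient in the single-draw case, which is precisely the source of the more stringent hypotheses $p<1/8$ and $1-H(4p)-2/\beta>0$ in the theorem. Executing this cluster-level resolvability estimate, while relying on the independence of the Poisson counts $\{N_i\}$ for the necessary concentration, is the technical heart of the argument.
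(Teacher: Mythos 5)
The monograph itself does not prove this theorem: it states the result and cites \citet{lenz_upper_2019,lenz2020achieving} for both achievability and converse, while describing the gist of each in Section~\ref{sec:multidrawbsc}. On the converse side, your sketch is directionally aligned with what the paper attributes to those references---extending the single-draw decomposition of Section~\ref{sec:converse}, recognizing that the within-cluster permutation entropy in $H(S^N)$ cancels against the corresponding part of $H(S^N\mid X^{ML},Y^{ML})$, and then exploiting the tension between $H(Y^{NL})$ and $H(S^N\mid X^{ML},Y^{ML})$ as in Lemma~\ref{lem1}. Your observation that the confusability radius grows to roughly $4pL$ (two noisy copies of the \emph{same} input are already $\approx 2pL$ apart) is exactly the right intuition for why the hypotheses tighten to $p<1/8$ and $1-H(4p)-2/\beta>0$.

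The achievability is where the proposal breaks. You describe an index-based scheme: an index decodable from a single copy to enable clustering, plus a progressive payload of $L\,C_{p,n}-\log M$ bits from a cluster of size $n$. This arithmetic does not close. To make the index reliably decodable from a single BSC-corrupted copy, it must consume $(\log M)/C_{p,1}$ nucleotides (not $\log M$), and whatever payload bits you then extract from the remaining nucleotides at rate $C_{p,n}$ give only $(L-(\log M)/C_{p,1})\,C_{p,n}$ bits per cluster, so the overall rate is
\[
(1-q_0)\Bigl(E[C_{p,N}\mid N\geq 1]-\tfrac{\gamma}{\beta}\Bigr),\qquad \gamma \defeq \frac{E[C_{p,N}\mid N\geq 1]}{C_{p,1}}>1,
\]
which is precisely the suboptimal rate derived in \eref{eq:indexachievable} of Section~\ref{sec:general} and falls strictly short of the theorem's target. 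A superposition/nested code does not escape this: for a degraded pair of BSC observations, the common-message rate $R_0>0$ forces $R_0+R_1 < C_{p,n}$ strictly, so there is always an index overhead $\gamma/\beta > 1/\beta$. The actual achievability in \citet{lenz2020achieving} abandons index-based decoding entirely: it shows that in the stated parameter regime a code-oblivious greedy clustering recovers most clusters correctly, and then it introduces a new notion of joint typicality between a \emph{set} of $n$ noisy traces and a candidate input string to decode the message directly, rather than per-sequence. Without that idea---or the linear-algebraic multi-system enumeration used for the BEC in Section~\ref{sec:linear}---the construction you sketch caps out below capacity, so the achievability half of your proposal has a genuine gap.
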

The formula (\ref{eq:mitzenmacher_multi}) for the capacity of a BSC with $n$ draws had been previously characterized by \citet{mitzenmacher2006multi}.
In addition to providing evidence that (\ref{eq:multiconj}) and  (\ref{eq:multiconj2}) may hold for more general channels, the result in \citet{lenz2020achieving} requires the introduction of several new techniques, particularly for the achievability.

Notice that the achievability based on linear codes introduced in Section~\ref{sec:linear} does not generalize in a straightforward manner beyond the erasure setting, since it requires the notion of consistency between output strings in order to build the consistency graph and identify valid clustering partitions.
The approach in \citet{lenz2020achieving} instead focuses on a $(p,\beta)$ parameter regime where a greedy clustering algorithm can be shown to  recover most of the clusters correctly with high probability.
Notice that this is different from the achievability presented in Section~\ref{sec:linear} for the BEC case, which does not formally require a code-oblivious clustering algorithm to recover most of the clusters correctly.


In addition, rather than dealing with solving a linear system of equations as done in Section~\ref{sec:linear} for the BEC case, in order to achieve the capacity in Theorem~\ref{thm:multibsc}, \citet{lenz2020achieving} introduce a new notion of typicality between a set of $n$ output strings and an input string. 

Notice that, as in the BEC case discussed in Section~\ref{sec:bec}, the achievability for the BSC case requires a more sophisticated scheme that goes beyond simple index-based coding.
Intuitively, if one were to use an index-based scheme, each cluster would need to be decoded independently so that the indices corresponding to each cluster could be obtained. 
However, the number of sequences per cluster varies from cluster to cluster.
Hence,
it is unclear what level of error correction needs to be used for each string, since large clusters are easier to decode than small clusters. 
We derive the rate that is achievable by index-based coding in multi-draw settings in (\ref{eq:indexachievable}) in Section~\ref{sec:general}.

\subsection{General Discrete Memoryless Channels}
\label{sec:merhav}

The capacity of multi-draw DNA storage channels where the noisy channel $p(y|x)$ is an arbitrary discrete memoryless channel recently received a careful treatment by~\citet{weinberger2021dna}.
The paper provides a new lower bound for general multi-draw DNA storage channels that is based on a random coding argument and holds for all values of $\beta>1$, unlike the achievability arguments for Theorems~\ref{thm:multiachievability} and \ref{thm:multibsc}, which only cover specific discrete memoryless channels. 
The paper also provides a new capacity upper bound in terms of a single-letter expression that holds for all values of $\beta$. 
The upper bound follows the general approach outlined in Section~\ref{sec:intuition} and elaborated by~\citet{lenz_upper_2019}, but is based on a more careful notion of distance between the sequences.

 \citet{weinberger2021dna} also show that there exists a $\beta_{\rm cr}$ such that, for $\beta > \beta_{\rm cr}$, the upper and lower bounds coincide, establishing the capacity of the multi-draw DNA storage channel for this regime.
 This is similar to the parameter regimes in Theorems~\ref{thm:multicapacity} and ~\ref{thm:multibsc} (which can be equivalently expressed as $\beta > \beta_{\rm cr}(p)$).
 In particular, the paper considers the case of \emph{modulo-additive} channels with input alphabet $\X = \{1,\dots,|\X|\}$ and output alphabet $\Y = \X$, and input-output relationship given by
 \aln{
 Y  = X+Z \bmod{|\X|},
 }
 where $Z$ follows some distribution over $\{1,\dots,|\X|\}$.
 The capacity of the modulo-additive multi-draw DNA storage channel is shown to be
 \al{ \label{eq:modulo_cap}
 C_{\text{mod,$Q$}} 
& = \sum_{n=1}^\infty q_n \, C_{n} - \frac{1-q_0}{\beta},
 }
 as long as $\beta > \beta_{\rm cr}$, where $\beta_{\rm cr}$ can be explicitly computed \citep{weinberger2021dna}, and $C_{n}$ is the capacity of the modulo-additive channel with $n$ draws.
For the special case $\X = \{0,1\}$, the modulo-additive channel reduces to the BSC, and the result in Theorem~\ref{thm:multibsc} is recovered, but with a weaker requirement on $p$ and $\beta$.
 We point out that the capacity formula in \eqref{eq:modulo_cap} can be rewritten as
\aln{
(1-q_0)\left( E[C_N | N \geq 1] - 1/\beta \right),
}
which is the general formula for all capacity expressions discussed in this monograph, providing additional evidence for the conjecture in Section~\ref{sec:general}.

\section{Connections with the trace reconstruction problem}
\label{sec:trace}

If one considers the multi-draw channel studied in this chapter but with a single input sequence ($M=1$), we obtain the problem of reconstructing an original ``seed'' string from several noisy versions of it.
This problem was originally
proposed by \citet{levenshtein1997reconstruction}.
The special case where we observe the output of passing the seed string multiple times through a deletion channel was studied in \citet{batu2004reconstructing} under the name of trace reconstruction problem. 
This problem received considerable attention in the last few years \citep{abroshan2019coding,magner2016fundamental,mao2018models,holenstein2008trace,de2017optimal,hartung2018trace,nazarov2017trace,srinivasavaradhan2018maximum,srinivasavaradhan2019symbolwise,sabary2020reconstruction}, partially due to 
the fact that DNA sequencing technologies and  especially nanopore sequencing technologies  suffer from deletion errors \citep{mao2018models}.
In addition, the trace reconstruction problem has applications in immunogenomics, where one seeks to reconstruct the set of germline genes of an individual by analyzing antibody repertoires.
In this setting, each antibody can be viewed as a trace generated by recombining elements from the germline genes \citep{bhardwaj2020trace}.

Most of the work on the trace reconstruction problem has focused on characterizing the minimum number of traces (i.e., output sequences) needed for reconstructing the seed string correctly.
For example, for a random binary seed string of length $n$, it is known that $\exp( O(\log^{1/3} n))$ traces are sufficient \citep{hartung2018trace}, while for a worst-case binary string, 
$\exp(\tilde{O}(n^{1/5}))$ traces are sufficient~\citep{chase2021new}.

The trace reconstruction problem has a very direct connection with the problem of retrieving data stored on DNA.
As discussed in this chapter, under most proposed DNA storage prototypes, we observe a random number of copies  of each stored DNA molecule.
Hence, provided that one can correctly cluster the observed strings based on string of origin, the problem of reconstructing the sequences can be seen as several instances of the trace reconstruction problem.
Notice, however, that treating each of these instances as independent trace reconstruction problems may be suboptimal from an information-theoretic standpoint, as discussed in Section~\ref{sec:separation}.
Hence, the DNA storage setting can be more appropriately cast as a ``trace reconstruction with multiple seeds,'' as proposed and studied in \citet{bhardwaj2020trace}.

A second and arguably more important distinction between the general trace reconstruction problem (with multiple seeds) and the DNA storage setting is the fact that the trace reconstruction problem is typically concerned with the reconstruction of an arbitrary seed string (or one drawn according to a probability distribution).
On the other hand, DNA molecules in a storage system are chosen as codewords from a codebook.
As such, they can be designed to facilitate the solution of the trace reconstruction problem that the decoder faces.
This key observation led to the proposal of the 
\emph{coded trace reconstruction} problem by
\citet{cheraghchi2020coded}.
For the coded trace reconstruction with a single seed of length $n$, the authors propose marker-based code constructions, which rely on the placing of long runs of zeros throughout the seed string.
These markers can be identified by the decoder and effectively allow the problem to be converted into many small coded trace reconstruction problems.
Based on this idea, it is shown that 
one can build codes with asymptotic rate $1$ for which
the number of traces needed for correct decoding is $\exp\left(O((\log \log n)^{2/3})\right)$.
This is in stark contrast to the 
$\exp(\tilde{O}(n^{1/5}))$ 
traces needed in the case of an arbitrary seed string~\citep{chase2021new}.

We point out that while establishing the exact scaling of the number of traces needed for perfect reconstruction as a function of the seed length $n$ is an important problem, it is less relevant in the short-molecule setting that is the focus of this monograph.
Moreover, from a capacity standpoint, 
the coded trace reconstruction problem can be seen as a generalization of the standard deletion channel (where one observes the output multiple independent times instead of one). 
See the paper by~\citet{haeupler_mitzenmacher_2014} for results on the capacity of a channel where multiple copies of a sequence are passed through a deletion channel. 
As the capacity of the deletion channel is a long-standing open problem \citep{cheraghchi2019capacity}, characterizing the capacity of coded trace reconstruction with a constant number of traces should be just as difficult.
In Section~\ref{sec:deletionconj} we discuss additional open problems in the context of the deletion channel and trace reconstruction that are motivated by DNA storage.

\section{Code constructions for multi-draw channels}
\label{sec:multiadditional}

In addition to the work on the coded trace reconstruction problem discussed in Section~\ref{sec:trace}, several  works have proposed code constructions for the multi-draw settings that arise in the context of DNA storage systems.
Most of these focus on coding-theoretic questions, including the derivation of bounds on the size of codes with specific distance properties, and the development of explicit codes with efficient encoding and decoding.

The notion of \emph{clustering-correcting codes} was proposed by
\citet{shinkar2019clustering}.
These codes are designed so that each of the input sequences has a data field and an index field, designed to facilitate the clustering of the output sequences in the presence of noise.
Clustering is initially performed based on the index fields, but it is shown that the data fields can also be used to improve the clustering quality (and reduce the amount of redundancy needed).
Additionally, 
\citet{lenz2019codingoversets} studied the multi-draw DNA storage channel from a coding-theoretic perspective.
In particular, versions of the Gilbert-Varshamov bound and the sphere-packing bound were derived, providing insight into fundamental limits for explicit code constructions.


\section{Proof of Lemmas} 
\label{sec:lemproof}

\begin{lemmarep}{\ref{lem:matrixG}}
Let $\matG$ be an $ML \times B$ matrix with  i.i.d.~$\Bernoulli(1/2)$ entries.
Fix any $\delta \in  (0,1)$ and a submatrix $\matG'$ formed by an arbitrary set of $(1-\delta)B$ rows of $\matG$.
Then $\matG'$ is full rank (over the finite field $\F_2$) with probability tending to $1$ as $B \to \infty$.
\end{lemmarep}

\begin{proof}
We follow the approach in the lecture notes by \citet{sayir_notes}.
In order for $\matG'$ to be full rank, the $(n+1)$th row must be chosen as a vector that is not in the span of rows $1,\dots,n$.
Note that the space spanned by $n$ linearly independent vectors in $\F_2$ has  exactly $2^n$ distinct elements.
If we assume that the first $n$ rows are linearly independent, then the probability that the $(n+1)$th row (which is a $B$-dimensional vector) is not in the span of the first $n$ rows is $1- 2^{n-B}$.
By induction we see that the probability that all $(1-\delta)B$ rows are linearly independent is 
\al{
\prod_{j=1}^{(1-\delta)B} \left(1-2^{-(B-j+1)}\right) 
& = \prod_{i=\delta B + 1}^{B} \left(1-2^{-i}\right) \nonumber \\
& = \frac{\prod_{i=1}^{B} \left(1-2^{-i}\right)}{\prod_{i=1}^{\delta B} \left(1-2^{-i}\right) }. \label{eq:lemlimit}
}
As $B \to \infty$, both the product in the numerator and in the denominator can be verified (e.g., using numerical software) to converge to
\aln{
\prod_{i=1}^{\infty} \left(1-2^{-i}\right) = 0.28879...
}
which implies that (\ref{eq:lemlimit}) tends to $1$ as $B \to \infty$, proving the lemma.
Notice that, if $\delta = 0$, the probability does not tend to $1$ and instead tends to $0.28879$, which is in contrast to the case of a real-valued matrix with random entries from a continuous distribution, where all square submatrices will be full-rank with high probability.
\end{proof}

\begin{lemmarep}{\ref{lem:edges}}
The number of incorrect edges $Z$ satisfies
\al{
\Pr\left(Z > M^{2-\gamma + \ep}\right) \to 0}
as $M \to \infty$, for any $\ep > 0$. 
\end{lemmarep} 

\begin{proof}
Consider two output strings $\y_i$ and $\y_j$ that are generated from distinct input strings $\x_i$ and $\x_j$.
We show that $\y_i$ and $\y_j$ are consistent 
with probability $M^{-\gamma}$.

Let $\x_i[\ell]$ and $\x_j[\ell]$, for $\ell=1,\ldots,L$ be the individual symbols in the sequences.
Notice that $\x_i$ and $\x_j$ are generated by choosing $\bt_i$ and $\bt_j$ uniformly at random from $\{0,1\}^B$, and computing $\matG'\, \bt_i$ and  $\matG''\, \bt_j$, where $\matG'$ and $\matG''$ are each obtained by taking the $L$ rows from $\matG$ corresponding to the $i$th and $j$th input sequences (note that $\matG \bt_i$ has length $ML$).

First we claim that the $2L$ random variables $\x_i[\ell],\x_j[\ell]$, $\ell=1,\ldots,L$, are mutually independent Bernoulli$(1/2)$.
Treating all vectors as column vectors, we can write
\al{ \label{eq:xgt}
\begin{bmatrix}
\x_i \\ \x_j
\end{bmatrix} = 
\underbrace{
\begin{bmatrix}
\matG' & 0 \\ 0 & \matG''
\end{bmatrix}}_{H}
\underbrace{
\begin{bmatrix}
\bt_i \\ \bt_j
\end{bmatrix}}_{\tilde{\bf t}}.
}
The block diagonal matrix $H$ above has dimension $2L \times 2B$, where $B = ML (1-q_0 -\ep)(1-\peff-\ep)(1-\ep)$.
For $M$ large enough, we have $B > L$, and $H$ is full row-rank, with a null space of dimension $2B - 2L$.
Hence, for any 
$\bc \in \F_2^L$, the number of solutions $\tilde{\bt}$ to $\bc = H {\tilde{\bt}}$ is $2^{2B-2L}$ and, if ${\tilde{\bt}}$ is drawn uniformly at random from $\F_2^{2B}$,
\aln{
\Pr(H {\tilde{\bt}} = {\bc}) = \frac{2^{2B-2L}}{2^{2B}} = 2^{-2L}.
}
This implies that the column vector $\begin{bmatrix}
\x_i \\ \x_j
\end{bmatrix}$ is chosen uniformly at random from $\F_2^{2L}$.
This in turn implies that the entries of $\x_i$ and $\x_j$ are all mutually independent  i.i.d.~$\Bernoulli(1/2)$ random variables.

Given this fact, the event that $\y_i$ and $\y_j$ are consistent is the intersection of $L$ independent events
\aln{
\{ \x_i[\ell] = \x_j[\ell] \text{ or } \x_i[\ell] = \eras \text{ or } \x_j[\ell] = \eras \},
}
for $\ell=1,\ldots,L$.
Each of these events happens with probability
$1-\tfrac12(1-p)^2$,
implying that $\x_i$ and $\x_j$ are consistent with probability 
\aln{
(1-\tfrac12(1-p)^2)^L = 2^{-\gamma \log M} = M^{-\gamma}.
}
Finally, we notice that the expected number of output sequences is $M E[N_1]$, and the expected number of pairs 
of output strings 
is at most $M^2 E[N_1]^2$.
Hence, the expected number of incorrect edges satisfies
\aln{
E[Z] \leq M^2 E[N_1]^2 M^{-\gamma} = E[N_1]^2 M^{2-\gamma}.
}
Finally, using Markov's inequality, we have that 
\aln{
\Pr\left(Z > M^{2-\gamma + \ep}\right)
\leq \frac{E[Z]}{M^{2-\gamma + \ep}} \leq 
\frac{E[N_1]^2 M^{2-\gamma}}{M^{2-\gamma + \ep}}
= E[N_1]^2 M^{-\ep},
}
which tends to $0$ as $M \to \infty$ for any $\ep > 0$.
\end{proof}

\chapter{Coding and Computational Aspects}
\label{ch:coding}

\newcommand\mC{\textbf{C}}
\newcommand\mM{\textbf{M}}
\newcommand\vm{\textbf{m}}
\newcommand\vc{\textbf{c}}


The results on the noisy shuffling-sampling channel in Chapters~\ref{ch:shuffled},~\ref{ch:noisy} and \ref{ch:multi} provide insights into optimal ways to encode data for DNA based storage (for example, the fact that an index-based scheme is optimal in the single-draw setting). 
However, when designing coding schemes for real DNA storage systems, several additional factors need to be taken into consideration.
These include the specific noise profiles of the technologies being used and the computational complexity of the algorithms employed in encoding and decoding.
In this chapter, we discuss practical coding schemes for DNA data storage, as well as their limitations and computational aspects.

The first two demonstrations of DNA storage systems~\citep{church_next-generation_2012,goldman_towards_2013} are based on indexing individual sequences. Both systems did not use modern error correcting codes to protect the information: the scheme by \citet{church_next-generation_2012} partitions the data to be stored into parts, adds an index to each part, and maps the parts to DNA sequences. 
The scheme by \citet{goldman_towards_2013} also first partitions the information into parts, but ensures that neighboring parts overlap, which introduces redundancy. Next, the scheme by \citet{goldman_towards_2013} adds an index to each part, and finally maps the parts into DNA sequences. Since no principled error-correction mechanisms are used, both systems cannot efficiently deal with errors and thus the two demonstrations of DNA storage systems relied on accurate synthesis and on sequencing with large coverage. Both systems could recover the majority of the encoded information through an accurate experimental setup, but could not recover every single bit correctly.

Starting with~\citet{grass_robust_2015}, all subsequent designs of DNA data storage systems used a combination of outer and inner error-correcting codes to protect the information against noise (for example,~\citet{erlich_dna_2016,organick_scaling_2017,blawat_forward_2016,bornholt_dna-based_2016,chandak2019improved}, and others). 
This approach, discussed in Section~\ref{sec:innerouter}, works well if the errors within the sequences are relatively few and mostly substitution errors.

If the error probabilities within sequences are large and dominated by deletions and insertions, then it is necessary to combine the information of many noisy copies of each DNA sequence to recover the information. A natural approach to do this is to first cluster the sequences, second to reconstruct a sequence close to an original sequence from each cluster,  and third to perform subsequent error correcting steps; we discuss this approach along with the associated computational costs in Section~\ref{sec:noisycoding} of this chapter.

\section{Inner-outer coding scheme}
\label{sec:innerouter}

In Section~\ref{sec:bsc_thm} we presented an index-based coding scheme based on combining an inner code and an outer code (see Figure~\ref{fig:innerouter}).
While that scheme enabled us to establish an achievable rate (and the capacity) for the BSC shuffling-sampling channel, it relied on unspecified capacity-achieving codes for the BSC and capacity-achieving codes for an erasure channel over an alphabet of arbitrary size. 
Here, we describe an inner-outer coding scheme that uses existing practical codes, and that works well if individual sequences have few errors. The encoding and decoding steps, illustrated in Figure~\ref{fig:cdscheme}, are explained next.

\paragraph{Encoding:}
The data to be stored is given as a sequence of bits.
We first map the data to $B$ information blocks, denoted by $\mM_b$, $b=1,\ldots,B$, each consisting of $m\times k_O$ symbols in a finite field. The size of the finite field depends on the code used as explained below, and the parameter $m$ depends on the desired final length of the sequences. 
The original data is a file stored in bits, and mapping the data to information blocks with elements in a finite-field simply corresponds to converting a number from one base to another. 

Each row of each information block is encoded with a code we call \emph{outer code}. This yields the outer-encoded block of codewords $\mC_b$, each consisting of $m\times n_O$ symbols, where $n_O$ is the length of the outer code. 
The outer code needs to be able to correct substitution errors and erasures; an example of such a code is a Reed-Solomon code. We discuss choices for the outer code below.

Each of the $n_O$ columns of the information block will correspond to a DNA sequence. Since the DNA sequences are unordered, we add a unique index to each of the columns of each of the information blocks. For example, the $i$th column in the $b$th information block receives the index $n_O\cdot b + i$. 
Note that the outer code can correct damaged or lost DNA sequences, as explained below. 

Next, each column of an information block along with the index, denoted by $\vm_{b,1}, \ldots,\vm_{b,n_O}$, is viewed as a vector of length $k_I$ with symbols in an appropriate finite field, but not necessarily the same field as that used for the outer code. 
Each of the indexed vectors is encoded with an \emph{inner code} yielding the codewords $\vc_{b,1}, \ldots, \vc_{b,n_O}$, each consisting of $n_I > k_I$ symbols. 

Finally, each inner codeword $\vc_{b,j}$ is mapped to a string with symbols in $\{\bA,\bC,\bG,\bT\}$, corresponding to the four nucleotides. Those strings are then synthesized in DNA. 

\paragraph{Decoding:}
In the decoding stage, we start with $\N$ DNA sequences, obtained from sequencing. Those sequences are mapped to the received inner codewords $\tilde \vc_1, \tilde \vc_2, \ldots, \tilde \vc_\N$, where each codeword $\tilde \vc_j$ consists of $n_I$ symbols. 

Those inner codewords are then decoded independently one by one which yields the information vectors $\tilde \vm_1, \tilde \vm_2, \ldots, \tilde \vm_\N$. This step corrects errors within each of the sequences. 
We then sort the vectors $\tilde \vm_j$ by their index, and construct the received outer codewords $\tilde \mC_b$, which consists of $m \times n_O$ symbols. 
Note that for each column of the outer codeword $\tilde \mC_b$ there might be several candidates $\tilde \vm_j$. Out of those candidates we select the one which has the smallest number of errors, if this information is available to us from the inner decoder. 
If there are several candidates with the same number of decoding errors, we choose the one which occurs most frequently. If there are several that occur equally frequently, we simply choose one of the candidates at random. 

Columns of $\tilde \mC_b$ for which we do not have a candidate $\tilde \vm_i$ are marked as erasures. Columns of $\tilde \mC_b$ can be erasures, if the respective sequence has not been sequenced, and substitutions, if the candidate obtained in the previous inner-decoding step contains errors. 
Finally, the rows of $\tilde \mC_b$ are decoded individually with the outer decoder to obtain the received information blocks $\tilde \mM_b$ consisting of $m \times n_O$ many symbols. The received information blocks are then mapped back to a sequence of bits, corresponding to the recovered information.

\begin{figure}
\begin{tikzpicture}[scale=0.72]
\draw[drop shadow,fill=white] (0,0) rectangle (2,1);
\node[] at (1, 0.5){$\mM_b$};

\draw[->, out=40,in =140] (2,1.4) to node[above,swap] {\small encode outer code} (3.5,1.4);

\draw[drop shadow,fill=white] (3.5,0) rectangle (6,1);
\node[] at (4.75, 0.5){$\mC_b$};

\draw[->, out=40,in =140] (6,1.4) to node[above,swap] {\parbox{3cm}{\centering \small add unique index to each column}} (7.5,1.6);

\draw[drop shadow,fill=white] (7.4,1) rectangle (10.1,1.2);
\draw[drop shadow,fill=white] (7.4,0) rectangle (10.1,1);
\node[] at (8.75, 0.5){$\vm_{b,1}..\vm_{b,n_O}$};

\draw[->, out=40,in =140] [->] (10,1.6) to node[above,swap] {\small encode columns } (11.5,1.8);
\draw[drop shadow,fill=white] (11.5,0) rectangle (11.7,1.4);
\draw[drop shadow,fill=white] (11.9,0) rectangle (12.1,1.4);
\node at (12.9,0.6) {$\cdots$};
\draw[drop shadow,fill=white] (13.6,0) rectangle (13.8,1.4);
\node[] at (12.7, -0.4){$\vc_{b,1} \vc_{b,2} \ldots \vc_{b,n_O}$};

\draw[->, out=0,in =90] [->] (14,0.7) to node[above right,swap] {\small synth.} (15.8,-0.6);
\draw[->, out=-90,in =0] [->] (15.8,-3.3) to node[below right,swap] {\small seq.} (14.2,-4.5);

\begin{scope}[xshift=2.8cm,yshift=1.1cm]
\draw[thick, drop shadow, fill=white] (13,-3) circle (1cm);
\draw[very thick, blue] (13.1,-3) to (13.2,-3.3);
\draw[very thick, blue] (13.4,-3) to (13.7,-3);
\draw[very thick, blue] (13.4,-2.7) to (13.5,-2.45);
\draw[very thick, blue] (12.4,-2.7) to (12.5,-2.45);
\draw[very thick, blue] (12.8,-2.7) to (13.1,-2.7);
\draw[very thick, blue] (12.8,-3.8) to (13.1,-3.7);
\draw[very thick, blue] (12.2,-3.4) to (12.5,-3.3);
\draw[very thick, blue] (12.9,-3.4) to (12.65,-3.2);
\draw[very thick, blue] (13,-2.45) to (12.8,-2.2);
\draw[very thick, blue] (13.5,-3.2) to (13.5,-3.5);
\draw[very thick, blue] (12.5,-3.1) to (12.5,-2.8);
\draw[very thick, blue] (12.7,-3.1) to (13.0,-3);
\draw[very thick, blue] (13.7,-3.2) to (13.9,-3);
\draw[very thick, blue] (13,-2.2) to (13.3,-2.3);
\draw[very thick, blue] (12.4,-3.5) to (12.7,-3.6);
\end{scope}

\begin{scope}[yshift=-5.2cm]

\draw[->, out=140,in =40] [->]  (11.5,1.8) to node[above,swap] { \parbox{3cm}{\centering \small decode \\ inner code}} (10,1.6);
\draw[drop shadow,fill=white] (11.5,0) rectangle (11.7,1.4);
\draw[drop shadow,fill=white] (11.9,0) rectangle (12.1,1.4);
\node at (12.9,0.6) {$\cdots$};
\draw[drop shadow,fill=white] (13.6,0) rectangle (13.8,1.4);

\node[] at (12.7, -0.4){$\tilde \vc_{1}\, \tilde \vc_{2}  \;\ldots \; \tilde \vc_{\N}$};
\draw[rounded corners, dashed] (14.3,1.6) rectangle (11.0,-.7);

\draw[drop shadow,fill=white] (7.5,0) rectangle (7.7,1.2);
\draw[drop shadow,fill=white] (7.9,0) rectangle (8.1,1.2);
\node at (8.9,0.6) {$\cdots$};
\draw[drop shadow,fill=white] (9.6,0) rectangle (9.8,1.2);
\node[] at (8.7, -0.4){$\tilde \vm_{1}\, \tilde \vm_{2}  \ldots \tilde \vm_{\N}$};
\draw[rounded corners, dashed] (7,1.4) rectangle (10.3,-.7);

\draw[drop shadow,fill=white] (3.5,0) rectangle (6,1);
\draw[->, out=140,in =40]   (7.5,1.6) to node[above,swap] {\parbox{2.7cm}{\centering \small sort and remove indices}} (6,1.4);
\node[] at (4.75, 0.5){$\tilde \mC_b$};

\draw[drop shadow,fill=white] (0,0) rectangle (2,1);
\draw[<-, out=40,in =140] (2,1.4) to node[above,swap] {\small decode outer code} (3.5,1.4);
\node[] at (1, 0.5){$\tilde \mM_b$};

\end{scope}
\end{tikzpicture}
\caption{\label{fig:cdscheme} 
The inner-outer coding scheme breaks information into blocks $\mM_b$, encodes each row of the blocks with an outer code that can correct erasures and substitution errors, adds a unique index to each column, and finally encodes each column with an inner code. The inner code enables correcting errors within the sequences, and the outer code recovers sequences that are not sequenced, and correct errors made by the inner decoding step.}
\end{figure}
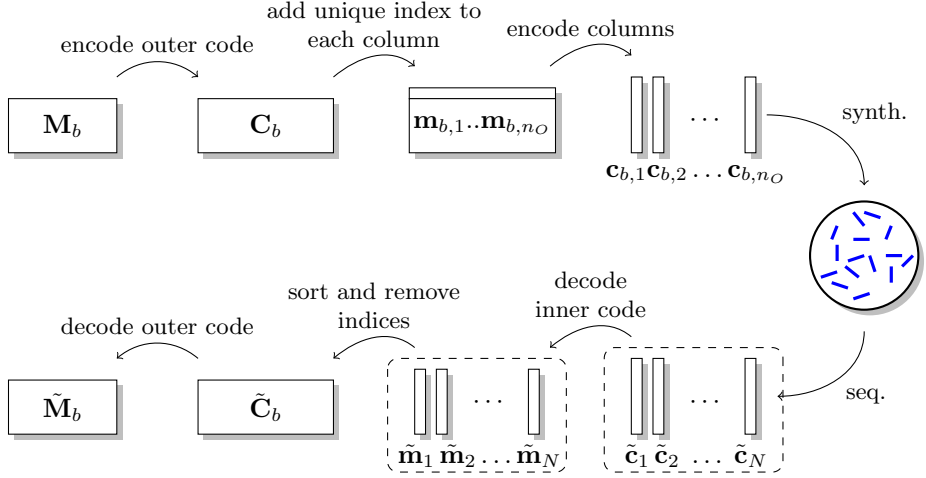

\subsubsection{Choices for the outer code}
The outer code needs to be able to correct substitution errors as well as erasures. Erasures arise from sequences that are lost, and substitution errors occur from errors within sequences that are not corrected by the inner code. 

\citet{grass_robust_2015} proposed to use Reed-Solomon codes as an outer code, and~\citet{organick_scaling_2017} also used Reed-Solomon codes as an outer code. Reed-Solomon codes are algebraic codes that can correct $e$ erasures and $s$ substitutions provided that $e + 2s \leq n-k$, where $n$ is the number of symbols of the codeword, and $k$ is the number of information symbols of the codeword. Reed-Solomon codes are optimal in that they achieve the maximal obtainable minimum distance between two codewords and thus the maximal number of erasures and substitutions that can be corrected. Therefore, Reed-Solomon codes are a natural choice for the outer code in DNA storage. However, Reed-Solomon codes of large block length (i.e., large $n$) are computationally expensive to decode, and in DNA data storage large block length are desirable.

\citet{chandak2019improved} proposed to use LDPC codes as an outer code to correct both erasures (lost sequences) as well as substitution errors in sequences. Contrary to Reed-Solomon codes, LDPC codes of long blocklength can be decoded very efficiently. Thus, designs based on LDPC can be an excellent choice for DNA storage system. 

\citet{erlich_dna_2016} proposed to use Fountain codes~(see \citet{mackay_fountain_2005} for a brief introduction to Fountain codes) as an outer code. Fountain codes are codes for channels with erasures, designed for a setup where a file is transmitted in multiple small packets, and each of those packets is either received without error or is not received. In MacKay's words, Fountain codes ``spray'' packets at the decoder without requiring knowledge of whether the packets are received. Once sufficiently many packets are received (specifically, if slightly more than $k_O$ packets are received where $k_O$ is the number of original information packets), the data can be recovered. 

The key feature of Fountain codes is its ratelessness: the transmitter does not require knowledge of how many packets have been received. 
DNA storage is not a natural setup for rateless codes, because we have to decide at storage time how much redundancy we need to add and generate the number of sequences to be stored accordingly. 
In addition, Fountain codes require $k_O + O(\sqrt{k_O} \log^2(k_O/\delta))$ symbols to recover a message with $k_O$ symbols with probability $1-\delta$. 

While Foutain codes use slightly more symbols than an optimal erasure code (such as a Reed-Solomon code which only requires $k_O$ symbols for recovering a message of length $k_O$), the extra cost is negligible once the codeword becomes long (i.e., becomes negligible in $k_O$). 
However, properly designed Fountain codes are computationally efficient to decode. For example, Luby-transform codes are an efficient choice for Fountain codes, and are decodable at cost $O(k_O \log(k_O))$ with the Luby-Transform. 
The main shortcoming of Fountain codes for DNA storage is that they cannot correct substitution errors in an obvious way, which makes them a sub-optimal choice for a DNA storage setup, as in a DNA storage channel it is difficult to guarantee that all sequences passed to the outer decoder are error free.

\subsubsection{Choices for the inner code}

Ideally, the inner code would be able to correct substitution, deletion, and insertion errors, as all those errors occur during storage. However, currently there are no short codes (say, of length 50-200) that can correct all types of errors simultaneously with little redundancy.

\citet{grass_robust_2015,organick_scaling_2017,erlich_dna_2016} used Reed-Solomon codes as an inner code. The sequences in those setups are are relatively short (60-200 base pairs long), and since short Reed-Solomon codes are computationally efficient for short codelengths they are a good choice for setups where substitution errors dominate. 

Reed-Solomon codes and other standard codes (like Reed-Muller codes) are originally not designed to correct deletions or insertions, albeit linear codes (including Reed-Solomon codes) of large length can correct some deletion and insertion errors~\citep{con_linear_2021}. Being able to correct deletions and insertions would be desirable as those errors can occur during synthesis, storage, and sequencing. 

Channels that introduce deletions and insertions as well as corresponding coding schemes are much less understood than erasure and substitution channels and codes. 
The following is known about deletions and insertions: 
\citet{mitzenmacher2006simple} established that the capacity of a deletion channel, i.e., a channel that deletes each bit with probability $p$, is lower-bounded by $(1-p)/9$ for all $p$. 
For $p\to 0$, the capacity behaves like $1-O(p\log(p))$ as established by~\cite{kanoria_optimal_2013}. 
This establishes that at least in the asymptotic regime, the capacity of a deletion channel is no more than nine times lower than that of an erasure channel which erases each bit with probability $p$, and for small $p$, the capacity is similar to that of a binary symmetric channel. The exact capacity of the deletion channel is still not known to date. 
For adversarial deletions and insertions, \citet{Haeupler_Shahrasbi_2017} gave efficient insertion-deletion correcting codes for large alphabets, and~\citet{cheng_efficient_2021} built on this work by building insertion-deletions correction codes that work with smaller alphabets. See \citet{Haeupler_Shahrasbi_2021} for a recent overview on codes for deletions and insertions. 
To give a few examples, the papers by~\citet{brakensiek_efficient_2018,gabrys_codes_2019,Sima_Bruck_2021,guruswami_explicit_2021} provide codes for correcting a constant number of deletions.


\section{Coding for noisy setups}
\label{sec:noisycoding}

The inner-outer encoding-decoding scheme does not exploit the fact that we typically have multiple noisy copies of each sequence. If the sequences are relatively error-free or only contain few substitution errors, exploiting multiple copies is not necessary. However, if the sequences contain many substitution, insertion, and deletion errors, then the inner-outer encoding-decoding scheme does not work simply because no codes exist that enable correcting sufficiently many deletion, insertion, and substitution errors in short codewords for correctly decoding sufficiently many of the sequences. 

In this section, we discuss a relatively straightforward decoding approach for very noisy sequences, which first clusters the sequences and second performs multiple alignment on the clusters in order to extract a single candidate sequence from each cluster. This candidate sequence has significantly fewer or no errors and can be passed through the previously discussed inner-outer encoding-decoding scheme to recover the information.

The idea of clustering reads before decoding and an efficient clustering scheme has been proposed  by~\citet{rashtchian2017clustering}. \citet{antkowiak_low_2020} built a DNA storage system with low-cost photolithographic synthesis which introduces large error rates in the sequences, with the clustering-multiple-alignment scheme described next. 
 
\paragraph{Introducing randomization:}
Clustering millions of DNA sequences and recovering candidate sequences from the clusters is  in general (i.e., for arbitrary sequences) intractable. 
The problem of recovering a sequence from multiple noisy copies of that sequences, each perturbed by random insertions and deletions, previously discussed in Section~\ref{sec:trace}, is known as trace reconstruction. 

Clustering and trace reconstruction become computationally and theoretically feasible if the fragments are random. \citet{holden_subpolynomial_2018} shows that for random sequences, the trace reconstruction problem can be solved based on only $O(\log^{1/3} L)$ traces/fragments, where $L$ is the length of the sequences. 

In DNA storage, it is possible to design the sequences so that they are pseudorandom for efficient computational clustering and trace-reconstruction. Specifically, we can simply multiply the information in the sequences with a pseudorandom sequence, so that all fragments appear random. 
That guarantees that distinct molecules are far from each other, and that individual sequences look like a pseudorandom sequence. This randomization step converts a difficult clustering problem instance to an ``easy'' instance, since now the true cluster centers are almost orthogonal to each other, and it also converts a potentially difficult trace reconstruction problem into an easy one.

\paragraph{Clustering reads:}
The goal of the clustering step is to 
efficiently cluster millions of short reads. 
Clustering algorithms are based on a notion of distance between DNA sequences or strings.
The natural measure of distance between two sequences---given that the perturbations are deletions, insertions, and substitutions---is the edit distance. The edit distance between two sequences is the minimal number of deletions, insertions, and substitutions required to transform one sequence into the other. The edit distance between two sequences is expensive to compute. 

We next describe an efficient method for clustering the noisy reads. 
The clustering method is based on locality sensitive hashing (LSH), and is inspired by an algorithm proposed for clustering web documents~\citep{haveliwala_scalable_2000}. LSH relies on a cheaper-to-compute proxy for the edit distance, computed using the so-called Min-Hashing (MH) method. 

To compute a proxy for the edit distance, we first split each sequence into overlapping subsequences of length $k$, called $k$-mers (also called shingles of length $k$ or $q$-grams). For example, for $k=2$, the sequence $\bA\bC\bG\bT$ becomes the set $\{\bA\bC,\bC\bG,\bG\bT\}$. 
Now, each sequence is represented by a set of $k$-mers and similarity between the two sets can be measured by the Jaccard similarity of two sets (the Jaccard similarity is defined as the size of the intersection divided by the size of the union of the sample sets). The clustering method we propose does not compute the Jaccard distance for all pairs of sequences, since this is computationally infeasible.
Instead, we use locality sensitive hashing to find similar sequences. Locality sensitive hashing first generates a signature for each sequence, with the Min-Hashing method. Those sequences are likely to be equal if two sequences are similar or the same.
Specifically, we perform the following steps:
\begin{enumerate}
\item Extract pairs of similar sequences: We first generate sets of pairs of sequences by finding all pairs of sequences that have the same signature. 
\item Filter pairs: As a next step, we go through all the pairs and after performing a local pairwise alignment on each pair, we drop a pair from the set if the number of matched characters falls below a threshold. 
\item Generate clusters from similar pairs: 
Finally, we generate clusters from the pairs based on sorting the pairs. 
\end{enumerate}
\citet{organick_scaling_2017} used a similar clustering-based approach as described above to recover data from nanopore reads, which have large insertion and deletion and error rates. 
\citet{antkowiak_low_2020} used the scheme described above to recover data in a system that uses noisy synthesis.

\paragraph{Reconstructing a single sequence from a set of noisy copies:}
After we produced clusters consisting of noisy copies of an original sequence, our goal is to reconstruct the original sequence.
This is known as a trace reconstruction problem.

One approach to reconstruct a sequence is to first align the sequences within a cluster and second perform majority voting to extract a single sequence from the cluster. For aligning the clusters, a number of off-the-shelf algorithms such as the MUSCLE algorithm~\citep{edgar_muscle_2004} are available.

Recent works have proposed efficient codes  trace reconstruction, i.e., reconstruction of a sequence from multiple noisy copies. 
Specifically, \citet{cheraghchi2020coded} proposed marker-based code constructions with logarithmic complexity in the number of traces, and \citet{srinivasavaradhan_trellis_2021} proposed an efficient reconstruction algorithm with complexity that is linear in the number of traces. 
\citet{lenz_concatenated_2021} proposed a concatenated coding scheme, and~\citet{Gabrys_Yaakobi_2018,chrisnata2020optimal} provide codes for correcting deletions from muliple noisy copies. Finally, \citet{Sabary_Yaakobi_Yucovich_2020} studies the correction of two insertions and deletions again from multiple reads.

\paragraph{Alternative approaches:}
\citet{antkowiak_low_2020} used the clustering algorithm described above and trace reconstruction by alignment followed by majority voting to store data reliably in a very noisy setup where the synthesis introduces large insertion and deletion error rates. While intuitive, it is unlikely that the clustering approach is information-theoretically close to optimal, as discussed in Section~\ref{sec:separation}.

\chapter{Extensions and Open Problems}
\label{ch:discussion}

In this monograph we took steps towards understanding the fundamental limits of DNA-based storage systems. 
Our main focus were noisy shuffling-sampling channels, which are a rich class of models for DNA storage systems that capture the fact that molecules are stored in an unordered fashion, are usually short, and are corrupted by individual base errors. 
By considering special cases of noisy shuffling-sampling channels, we built an increasingly more realistic sequence of models and characterized their capacity. 
We presented and discussed several recently proposed techniques, both in terms of achievability and converse, that enabled us to characterize the information-theoretic limits of noisy shuffling-sampling channels and gain insights for the design of practical systems that perform close to what is fundamentally possible.


\section{Generalizing capacity results} 
\label{sec:general}

Several generalizations of the capacity expressions in this monograph are possible. 
Note that the expression
\al{ \label{eq:capconj2}
(1-q_0)\left( E[C_N| N \geq 1] - 1/\beta \right)^+
} discussed in Section~\ref{sec:multidrawbsc}, recovers the capacity expressions of all shuffling-sampling channels we studied as special cases. Here, $C_N$ is the capacity of the respective noisy channel $p(y|x)$ with $N$ draws. 
For the BSC shuffling-sampling channel studied in Section~\ref{sec:bsc_thm}, for example, this follows from the fact that, for $\Bernoulli(q)$ sampling, $E[C_N|N\geq 1] = C_1$ is simply the capacity of the BSC that corrupts each individual sequence.
Therefore, Theorems~\ref{thm:noisefree},~\ref{thm:bsc},~\ref{thm:bec},~\ref{thm:multicapacity}, and~\ref{thm:multibsc} and the generalization by \citet{weinberger2021dna} discussed in Section~\ref{sec:merhav} can all be stated using the capacity expression in (\ref{eq:capconj2}).

\paragraph{Different alphabet sizes:}
While a quaternary alphabet $\Sigma = \{\bA,\bC,\bG,\bT\}$ is the relevant one for DNA-based storage systems, all results in this monograph were stated for binary alphabets for convenience. 
Here we discuss their generalization to an arbitrary alphabet $\Sigma$.


It is straightforward to see that the index-based scheme that achieves capacity in the error-free setting of Chapter~\ref{ch:shuffled}
can be extended to a general alphabet $\Sigma$.
In that case, $\log_{|\Sigma|}M$ symbols from each length-$L$ sequence are needed for a unique index and we can store 
\aln{
\log |\Sigma| \left( L - \log_{|\Sigma|}M \right) = L (\log |\Sigma| - 1/\beta)
}
data bits per sequence.
Using those bits for an outer code that handles the $q_0$ fraction of missing sequences (as done in Section~\ref{sec:noisefree}), we achieve a rate of $(1-q_0)(\log |\Sigma| - 1/\beta)$.

In the case of noisy shuffling-sampling channels, the index-based approach can be similarly generalized.
Specifically, if $C_{\rm noisy}$ is the capacity of a channel $p(y|x)$ with $\Sigma$ as input alphabet, then the rate 
\al{ \label{eq:capgeneralalphabet}
(1-q_0)(C_{\rm noisy} - 1/\beta)^+
}
is achievable in the case of $\Bernoulli(1-q_0)$ sampling.
Notice that $|\Sigma|$ does not appear in \eref{eq:capgeneralalphabet} as it is captured by $C_{\rm noisy}$, which can now be larger than $1$.
Also notice that, in this case the capacity can be positive even for values of $\beta < 1$.

For natural extensions of the BEC and BSC to larger alphabets, the converse argument described in Section~\ref{sec:converse} can be extended in a natural way, establishing \eref{eq:capgeneralalphabet} as the capacity, at least for a certain regime of channel parameters.

\paragraph{A unified capacity expression:}
Since \eref{eq:capconj2} also holds for general alphabet sizes, it is natural to state the following conjecture.

\begin{conjecture} \label{conj:capacity}
The capacity of a noisy shuffling-sampling channel (illustrated in Figure~\ref{fig:generalchannel}) with sampling distribution $N \sim (q_0,q_1,\dots)$ and discrete memoryless noisy channel $p(y|x)$ is given by
\aln{ 
(1-q_0)\left( E[C_N| N \geq 1] - 1/\beta \right)^+,
}
where $C_n$ is the capacity of the noisy channel $p(y|x)$ with $n$ draws.
\end{conjecture}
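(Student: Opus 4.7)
The plan is to establish both directions of the capacity formula $C = (1-q_0)(E[C_N \mid N \geq 1] - 1/\beta)^+$. For achievability, I would generalize the index-based scheme of Section~\ref{sec:bsc_thm}: each of the $M$ length-$L$ codewords dedicates roughly $\log M$ symbols of its ``message budget'' to a unique index, and the remaining capacity is used for data stitched across sequences by an outer erasure code. The key new element, relative to the single-draw BSC case, is that a sequence observed in $n$ copies offers inner-code rate $C_n$ rather than $C_1$. Since the encoder does not know $n$ per sequence in advance, I would use a superposition-style (broadcast) inner code in which an $m$-th layer is recoverable whenever at least $m$ copies are received, so that the effective data rate on a sequence with $n$ draws is $C_n - (\log M)/L$. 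Averaging the number of recoverable data bits over the sampling distribution gives $M(L \cdot E[C_N] - (1-q_0)\log M)$, hence an overall rate of $E[C_N] - (1-q_0)/\beta$, which coincides with the conjectured capacity since $E[C_N] = (1-q_0) E[C_N\mid N\geq 1]$.

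For the converse, I would follow and generalize the approach of Section~\ref{sec:converse}. With $X^{ML}$, $Y^{NL}$, $S^N$ defined as in that section, Fano's inequality yields $MLR \leq I(X^{ML};Y^{NL}) + o(ML)$, and using the independence of $X^{ML}$ and $S^N$ one can decompose
\begin{align*}
I(X^{ML};Y^{NL}) &= H(Y^{NL}) - H(S^N) - H(Y^{NL}\mid X^{ML},S^N) \\
&\quad + H(S^N \mid X^{ML},Y^{NL}).
\end{align*}
The conditional entropy $H(Y^{NL}\mid X^{ML},S^N)$ factorizes into $N$ uses of $p(y\mid x)$ and is handled by single-letterizing over the empirical input distribution. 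The term $H(S^N)$ contributes $(1-q_0)M\log M + o(ML) = (1-q_0)ML/\beta$, essentially by the counting and concentration argument already used in (\ref{eq:bound2}). The remaining task is to bound the pair $H(Y^{NL}) + H(S^N \mid X^{ML},Y^{NL})$ by $ML \cdot E[C_N] + o(ML)$, uniformly over admissible input distributions.

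The main obstacle is precisely this joint bound for arbitrary DMCs. In the BSC and BEC cases, it rests on a carefully chosen notion of distance between length-$L$ input strings (Hamming distance, erasure-agreement pattern) that simultaneously controls two opposing quantities: when the $X_i^L$'s are well-separated under this distance, $H(Y^{NL})$ is close to maximal but $H(S^N\mid X^{ML},Y^{NL})$ is forced to zero, and vice versa. For an arbitrary DMC one would need an analogous geometric functional, most naturally built from the Bhattacharyya or total-variation distance between the output distributions $p(\cdot\mid x_i^L)$ and $p(\cdot\mid x_j^L)$. A secondary obstacle, visible already in Section~\ref{sec:generalchannels}, is showing that the optimizing input is the i.i.d.~product of the single-letter capacity-achieving distribution of $p(y\mid x)$; ruling out gains from correlated input designs across positions or across sequences is what restricts the state-of-the-art converse of~\citet{weinberger2021dna} to modulo-additive channels in the regime $\beta > \beta_{\rm cr}$.
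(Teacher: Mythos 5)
The statement you set out to prove is not a theorem in the paper but an explicit open conjecture: the paper states in Section~\ref{sec:general} that ``even in the special cases of the noisy shuffling-sampling channel where the capacity expression in Conjecture~\ref{conj:capacity} holds, the result was only shown in specific parameter regimes. Hence, even in the case of the BEC and BSC, the conjecture has not been fully verified.'' There is therefore no paper proof to compare against, and your attempt should be read as a sketch that must honestly flag its gaps. Your converse discussion does this well: the two obstacles you name --- finding, for a general DMC, a distance functional between output distributions that mediates the tension between $H(Y^{NL})$ and $H(S^N \mid X^{ML},Y^{NL})$, and ruling out gains from correlated inputs --- are exactly the difficulties that restrict the known results (Theorems~\ref{thm:bsc}, \ref{thm:bec}, \ref{thm:multicapacity}, \ref{thm:multibsc}, and the modulo-additive result of \citet{weinberger2021dna}) to particular channels and to $\beta$ above a critical threshold.

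Your achievability argument, however, contains a genuine error rather than an acknowledged gap. You propose a superposition (degraded-broadcast) inner code in which layer $m$ becomes decodable once $m$ or more draws are available, and you claim the resulting ``effective data rate on a sequence with $n$ draws is $C_n - (\log M)/L$.'' That requires the layer rates to satisfy $R_1 + \cdots + R_n = C_n$ for \emph{every} $n$, i.e., $R_m = C_m - C_{m-1}$, which the degraded broadcast capacity region does not permit. Already for two receivers (one-draw output $Y^{(1)}$, two-draw output $Y^{(2)}$, with $Y^{(1)}$ degraded), the region is $R_1 \le I(U;Y^{(1)})$, $R_2 \le I(X;Y^{(2)}\mid U)$ for some $U \to X \to (Y^{(1)},Y^{(2)})$; demanding $R_1 = C_1 = \max_p I(X;Y^{(1)})$ forces $I(U;Y^{(1)}) = I(X;Y^{(1)})$, hence $U$ is a sufficient statistic for $X$ and the refinement layer collapses, $R_2 \le I(X;Y^{(2)}\mid U) = 0$. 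This is precisely the obstruction the paper itself identifies: equation~\eqref{eq:indexachievable} shows that any index-based scheme with per-sequence decoding achieves only $(1-q_0)\bigl(E[C_N \mid N\ge 1] - \gamma/\beta\bigr)$ with $\gamma = E[C_N \mid N \ge 1]/C_1 > 1$, because the index must be protected at the single-draw rate $C_1$. The known achievability proofs that do hit the conjectured rate abandon independent per-sequence decoding entirely: the BEC multi-draw case (Theorem~\ref{thm:multiachievability}) uses a random linear code together with a consistency-graph decoder that enumerates clusterings while exploiting the codebook, and the BSC multi-draw case (Theorem~\ref{thm:multibsc}, from \citet{lenz2020achieving}) uses greedy clustering followed by a bespoke multi-read typicality decoder. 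A correct achievability proof of the conjecture for general DMCs would need an argument of this ``global decoding'' type, not a layered per-sequence code.
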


We point out that, even in the special cases of the noisy shuffling-sampling channel where the capacity expression in Conjecture~\ref{conj:capacity} holds, the result was only shown in specific parameter regimes.
Hence, even in the case of the BEC and BSC, the conjecture has not been fully verified.

\paragraph{Optimality of index-based coding and independent decoding:}
As we discussed in Section~\ref{sec:index}, for single-draw channels where each sequence is either observed at the output once (with probability $1-q$) or not observed at all (with probability $q$), an index-based coding scheme (that adds a unique index to each sequence) achieves rate
\al{ \label{eq:indexconj}
(1-q_0) (C_{\rm noisy} - 1/\beta).
}
The decoding is straightforward: each output sequence is independently decoded, and the indices are then used to order the sequences and recover the message. The rate above is then achieved by adding an outer code on top of the index-based coding to handle missing output sequences and the vanishingly small fraction of output sequences that are decoded in error, as discussed in \citet{DNAStorageIT}.

In the multi-draw setting, however, a naive use of index-based coding is suboptimal.
The reason for this is that, since the number of sequences in each cluster is random, the indices must be encoded with enough redundancy to be decodable even in clusters of size one.
Therefore, only the data bits take advantage of the multi-draw setting.
More concretely, in each length-$L$ sequence, we need
$(\log M)/ C_1$ bits for the index, where $C_1$ is the capacity of the noisy channel (with one draw).
We are left with $M(L - (\log M)/C_1)$ bits for data. 
The data bits can take advantage of the multi-draw nature of the channel, and can be encoded with rate
\aln{
E[C_N] = (1-q_0) E[C_N| N \geq 1] + q_0 \cdot 0 =
(1-q_0)E[C_N| N \geq 1],
}
leading to an achievable rate of 
\al{ \label{eq:indexachievable}
& \frac{M(L - (\log M)/C_1) (1-q_0)E[C_N| N \geq 1]} {ML} \nonumber \\
& \quad = (1-q_0) \left( E[C_N | N \geq 1] - \frac{\gamma}{\beta} \right),
}
where $\gamma = E[C_N | N \geq 1]/C_1$.
Since $C_i > C_1$ for $i > 1$ for any reasonable channel, $\gamma > 1$, showing that the index-based scheme is suboptimal.
Whenever $\beta \gg \gamma$, simple index-based schemes perform close to optimal.

Despite its suboptimality in the multi-draw setting, index-based coding schemes are still interesting in practice for their simplicity.
Moreover, based on the results of this monograph, we conjecture that they are still optimal in the single-draw setting.
\begin{conjecture} \label{conj:index}
For noisy shuffling-channels with $\Bernoulli(1-q)$ sampling, index-based coding is capacity-optimal.
\end{conjecture}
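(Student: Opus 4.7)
The plan is to prove the converse direction, as achievability of $(1-q)(C_{\rm noisy}-1/\beta)^+$ under $\Bernoulli(1-q)$ sampling already holds for any DMC via the index-based construction of Section~\ref{sec:bsc_thm}. I would follow the template of the BSC converse in Section~\ref{sec:converse}. Let $S^{\N}\in[1:\M]^{\N}$ denote the sampling map, with $S(j)$ the index of the input molecule that produced $Y_j^{\len}$. Under $\Bernoulli(1-q)$ sampling, the entries of $S^{\N}$ are distinct, $S^{\N}$ is independent of $X^{\M\len}$, and $H(S^{\N})=(1-q)\M\log\M+o(\M\len)$, so $H(S^{\N})/\M\len \to (1-q)/\beta$. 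The chain-rule identity
\aln{
I(X^{\M\len},S^{\N};Y^{\N\len}) + H(S^{\N}\mid X^{\M\len},Y^{\N\len}) = I(X^{\M\len};Y^{\N\len}) + H(S^{\N})
}
combined with Fano yields a convenient converse reformulation: the conjecture will follow once we establish, uniformly over input distributions $p(x^{\M\len})$, the joint bound
\aln{
I(X^{\M\len},S^{\N};Y^{\N\len}) + H(S^{\N}\mid X^{\M\len},Y^{\N\len}) \le \N\len\, C_{\rm noisy} + o(\M\len).
}

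The intuition developed in Section~\ref{sec:intuition} makes this joint bound plausible: the two left-hand terms are in direct tension. Pushing $I(X^{\M\len},S^{\N};Y^{\N\len})$ toward its single-letter ceiling $\N\len\, C_{\rm noisy}$ forces the input molecules $X_i^{\len}$ to be spread out across a near-capacity-achieving product distribution, which in turn produces output distributions $p_{Y^{\len}\mid X_i^{\len}}$ that are mutually nearly singular, so that the permutation $S^{\N}$ becomes nearly determined by $(X^{\M\len},Y^{\N\len})$ and $H(S^{\N}\mid X^{\M\len},Y^{\N\len})=o(\M\len)$. Conversely, concentrating the input on a small set of distinct molecules makes $H(S^{\N}\mid X^{\M\len},Y^{\N\len})$ large but incurs a compensating loss in $I(X^{\M\len},S^{\N};Y^{\N\len})$. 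The BSC and BEC converses of Sections~\ref{sec:converse} and~\ref{sec:bec} are special cases in which this tradeoff is made explicit via Hamming-distance concentration and the linear structure of erasures.

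My approach to quantifying this tradeoff for a general DMC would be list-decoding-based. For each $j$, define $\mathcal{L}_j \subseteq [1:\M]$ as the set of input indices $i$ such that the empirical joint type of $(X_i^{\len},Y_j^{\len})$ is channel-typical, and bound $H(S^{\N}\mid X^{\M\len},Y^{\N\len})\le \sum_j \Ex[\log|\mathcal{L}_j|]$. A Gallager/Fano sphere-packing argument, applied conditionally on the empirical input type, relates the expected list size to the slack $\N\len\, C_{\rm noisy} - I(X^{\M\len},S^{\N};Y^{\N\len})$: when this slack is small, most $Y_j^{\len}$ are channel-typical with essentially one input and the lists are short. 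Summing the list-size bound with the mutual-information bound should yield the displayed joint inequality after routine manipulation. This argument single-letterizes cleanly when the input distribution is product across molecules, recovering the results in Theorems~\ref{thm:bsc} and~\ref{thm:bec} as special cases.

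The main obstacle will be handling the supremum over $p(x^{\M\len})$ appearing in~\eqref{eq:sup}, since correlated, non-product input distributions could in principle simultaneously inflate the output entropy and obscure the permutation. A natural route is a two-step reduction: first, exploit the permutation invariance of the channel with respect to the molecule labels to restrict attention to input distributions that are exchangeable across the $\M$ molecules; second, apply a de~Finetti-type representation to express such distributions as mixtures of i.i.d.\ distributions, for which the single-letter joint bound is directly available. An intermediate milestone toward the full conjecture would be to sharpen Theorem~\ref{thm:bsc} by removing the restriction $1-H(2p)-2/\beta>0$, or to extend the modulo-additive result of~\citet{weinberger2021dna} beyond its threshold $\beta>\beta_{\rm cr}$; either sharpening would likely supply the quantitative ingredients needed for the general converse.
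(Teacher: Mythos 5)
This statement is an open conjecture in the paper, not a theorem; the authors explicitly present it without proof and observe only that it would follow from Conjecture~\ref{conj:capacity}. There is therefore no proof in the paper to compare against. What you have written is a research program, not a proof, and you acknowledge as much, so the question is whether the program is sound and whether the identified obstacles are the real ones.

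Your setup is correct and matches the paper's own framing. The chain-rule identity you invoke is exactly the decomposition behind equation~(\ref{eq:sup}) in Section~\ref{sec:generalchannels}, where the authors isolate the term $I(X^{ML},S^M;Y^{ML})/ML + H(S^M\mid X^{ML},Y^{ML})/ML - 1/\beta$ and identify the supremum over $p(x^{ML})$ as the obstruction to a general converse. Your reduction of the converse to the joint bound $I(X^{ML},S^N;Y^{NL}) + H(S^N\mid X^{ML},Y^{NL}) \le NL\,C_{\rm noisy} + o(ML)$ is valid, and the intuitive tension you describe between the two terms is the same intuition sketched in Section~\ref{sec:intuition}.

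The gap is in the actual quantitative argument, and it is not small. First, observe that your joint bound, if established, would hold for \emph{all} $(p,\beta)$, while the paper proves even the BSC and BEC cases only in restricted regimes (e.g., $1-H(2p)-2/\beta>0$ for the BSC). You propose the list-decoding bound $H(S^N\mid X^{ML},Y^{NL})\le \sum_j \Ex[\log|\mathcal{L}_j|]$ as the mechanism, but the hard part is controlling $\Ex[\log|\mathcal{L}_j|]$ when the codewords $X_i^L$ can be close together in Hamming space. That is precisely where the restriction to $1-H(2p)-2/\beta>0$ enters in the BSC proof: it forces random codewords at the relevant rate to have large pairwise distance, killing the confusable lists. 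Your sketch does not explain why a typicality/list argument would give the joint bound outside that regime, and if it did, it would already be a strictly stronger theorem than Theorem~\ref{thm:bsc}. Second, the exchangeability-plus-de~Finetti reduction is on shakier footing than you suggest. Finite de~Finetti (Diaconis--Freedman) gives only an approximate mixture-of-i.i.d.\ representation with total-variation error of order $B^2/M$ for blocks of size $B$, which is too weak to single-letterize a bound that must hold for all input distributions simultaneously. More fundamentally, a mixture of i.i.d.\ distributions is not itself i.i.d., and mutual-information bounds do not commute with mixtures in a way that would let you plug in the i.i.d.\ result and conclude. You would need a genuinely different argument to pass from exchangeable inputs to a single-letter form. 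Until either of these steps is made precise, the proposal is a plausible line of attack rather than a proof, and it does not go meaningfully beyond what the paper itself identifies as the open difficulty.
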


Notice that, if Conjecture~\ref{conj:capacity} holds, the fact that index-based schemes achieve the rate in (\ref{eq:indexconj}) would imply that Conjecture~\ref{conj:index} holds as well.

\section{Insertions, deletions, and trace reconstruction}
\label{sec:deletionconj}

In all settings studied in this monograph, the noisy channel $p(y|x)$ is assumed to be a discrete memoryless channel.
Single-base substitutions are 
the prevalent error source of most current DNA storage systems~\citep{heckel_channel_2019}, which rely on \emph{low-error} synthesis and sequencing technologies that are relatively expensive and limited in speed.
A key idea towards developing the next-generation of DNA storage systems is to employ \emph{high-error}, but cheaper and faster synthesis and sequencing technologies such as light-directed maskless synthesis and nanopore sequencing. 
Such systems induce a significant amount of insertion and deletion errors. 
Thus, an important area of further investigation is to understand the capacity of channels that introduce deletions and insertions as well.

From the point of view of characterizing the capacity, this poses significant challenges, since the capacity of (non-memoryless) channels with insertions and deletions is a long-standing open problem \citep{cheraghchi2019sharp}.
Nevertheless, it may be possible to establish the impact of adding shuffling and sampling on the capacity, without characterizing the capacity of the noisy channel.
Given the results in Chapter~\ref{ch:noisy}, it is reasonable to conjecture the following.

\begin{conjecture} \label{conj:indel}
Consider a noisy shuffling channel where the input sequences are shuffled and independently passed through a insertion-deletion channel.
The capacity of this channel is
\aln{
(C_{\rm indel} - 1/\beta)^+,
}
where $C_{\rm indel}$ is the (unknown) capacity of the insertion-deletion channel.
\end{conjecture}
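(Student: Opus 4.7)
I would prove $C \geq (C_{\rm indel} - 1/\beta)^+$ by a direct index-based construction, mirroring Section~\ref{sec:bsc_thm}. Fix $\epsilon > 0$ and pick a sequence of length-$L$ codes for the insertion-deletion channel with rate $R' = C_{\rm indel} - \epsilon$ and vanishing error probability as $L \to \infty$; such codes exist by the very definition of $C_{\rm indel}$. For $L = \beta \log M$, use $\log M$ of the $LR'$ message bits of each inner codeword as a unique index in $\{1,\ldots,M\}$, and the remaining $L(R' - 1/\beta)$ bits to carry a symbol of an outer code spread across the $M$ sequences. Since each inner decoding fails with vanishing probability, a long Reed-Solomon outer code correcting a small fraction of symbol errors suffices. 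Letting $\epsilon \to 0$ makes the storage rate arbitrarily close to $(C_{\rm indel} - 1/\beta)^+$.

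\textbf{Converse.} I would follow the entropy decomposition of Section~\ref{sec:converse}. Let $S^M$ denote the uniform shuffling permutation, and let $\mathbf{X} = [X_1^L,\ldots,X_M^L]$ and $\mathbf{Y} = [Y_1,\ldots,Y_M]$ denote the input and the (variable-length) output. Starting from Fano's inequality and expanding,
\begin{align*}
MLR - o(ML) \leq I(\mathbf{X}; \mathbf{Y}) = H(\mathbf{Y}) - H(S^M) - \sum_{i=1}^M H(Y_i \mid X_{S(i)}) + H(S^M \mid \mathbf{X}, \mathbf{Y}),
\end{align*}
using $S^M \perp \mathbf{X}$ and the conditional independence $H(\mathbf{Y} \mid \mathbf{X}, S^M) = \sum_i H(Y_i \mid X_{S(i)})$. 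Stirling gives $H(S^M) = ML/\beta + o(ML)$. Sub-additivity $H(\mathbf{Y}) \leq \sum_i H(Y_i)$ together with the per-symbol capacity bound $\sum_i I(X_{S(i)}; Y_i) \leq ML\, C_{\rm indel} + o(ML)$ then yields
\begin{align*}
MLR \leq ML\,C_{\rm indel} - ML/\beta + H(S^M \mid \mathbf{X}, \mathbf{Y}) + o(ML).
\end{align*}
Thus the entire converse reduces to establishing the key estimate $H(S^M \mid \mathbf{X}, \mathbf{Y}) = o(ML)$.

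\textbf{The main obstacle.} The crux is proving $H(S^M \mid \mathbf{X}, \mathbf{Y}) = o(ML)$: given the input codewords and the output sequences, the matching permutation should be essentially determined. This is the indel analogue of the bee-identification picture sketched in Section~\ref{sec:intuition} and made precise by Lemma~\ref{lem1} in the BSC case. The hard part will be porting the argument to the edit-distance geometry rather than the Hamming-distance geometry. Volumes of edit-distance balls are notoriously more delicate than those of Hamming balls, and the minimum pairwise edit distance between codewords drawn from a typical capacity-achieving input distribution must be controlled; a plausible route is to invoke concentration for longest-common-subsequence statistics and show that, after a typical indel-noise pattern, each output $Y_i$ identifies its source $X_{S(i)}$ with overwhelming probability, adapting the noisy-string-identification machinery of \citet{bee}. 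A secondary difficulty, flagged in Section~\ref{sec:generalchannels}, is that one must show the supremum over input distributions in the converse may be taken of product form $p(\mathbf{x}) = \prod_i p(x^L)$, so that the single-letter bound actually matches $C_{\rm indel}$. As in Theorems~\ref{thm:bsc} and~\ref{thm:bec}, I expect the argument to succeed only above some threshold $\beta > \beta_{\rm cr}$ depending on the indel error rate, leaving the short-molecule high-noise regime open.
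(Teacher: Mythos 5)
The statement you are trying to prove is labeled a \emph{conjecture} in the paper, and the authors explicitly say that the achievability half is routine via indexing while the converse is open and ``it is unlikely that the tools presented in this monograph can be used to establish Conjecture~\ref{conj:indel}.'' Your achievability sketch is correct and is exactly the paper's argument. Your converse setup --- the Fano bound, the chain-rule decomposition of $I(\mathbf{X};\mathbf{Y})$ through $S^M$, and the identification of the permutation-entropy term as the crux --- is also the right starting point, parallel to Section~\ref{sec:converse}. You are candid that the key step is unproven; the conjecture remains a conjecture.

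There is, however, a genuine flaw in how you reduce the problem. After applying sub-additivity $H(\mathbf{Y}) \le \sum_i H(Y_i)$ and the per-block capacity bound, you conclude that ``the entire converse reduces to establishing $H(S^M\mid\mathbf{X},\mathbf{Y}) = o(ML)$.'' That target bound is simply false for arbitrary codes with vanishing error: if a codeword repeats the same length-$L$ sequence in all $M$ slots, then $S^M$ is completely unidentifiable from $(\mathbf{X},\mathbf{Y})$ and $H(S^M\mid\mathbf{X},\mathbf{Y}) = \log M! = \Theta(M\log M) = \Theta(ML)$. Such codes have low rate, of course, but your chain of inequalities no longer sees that, because sub-additivity has already discarded the mechanism that couples a low $H(\mathbf{Y})$ to a high $H(S^M\mid\mathbf{X},\mathbf{Y})$. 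This is precisely the tradeoff that the paper's Lemma~\ref{lem1} is designed to exploit: it bounds $H(Y^{NL}) + H(S^N\mid X^{ML},Y^{NL})$ \emph{jointly} by $(1-q)ML + o(ML)$, rather than bounding each term separately, and the discussion in Section~\ref{sec:intuition} explains exactly why the joint bound is necessary (the two terms are anti-correlated). Porting this to the indel channel, your next step should not be to prove $H(S^M\mid\mathbf{X},\mathbf{Y}) = o(ML)$ unconditionally, but rather a joint bound of the form $H(\mathbf{Y}) - \sum_i H(Y_i\mid X_{S(i)}) + H(S^M\mid\mathbf{X},\mathbf{Y}) \le ML\, C_{\rm indel} + o(ML)$ that does not pass through sub-additivity first. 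Even then, the obstacles you flag --- edit-distance geometry and the need to single-letterize the input optimization --- are real and are precisely why the paper states this as a conjecture; indeed, the paper points out that the BEC/BSC converses crucially used the explicit form of $C_{\rm BEC}$ and $C_{\rm BSC}$, a luxury not available for the deletion channel.
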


Notice that, as discussed in Section~\ref{sec:general}, index-based schemes achieve the rate in Conjecture~\ref{conj:indel}, and the challenge is in proving a matching outer bound.
We point out that, in deriving the outer bounds in the BEC and BSC cases, it was important to know the capacity of $C_{\BEC}$ and $C_{\BSC}$ explicitly (see Section~\ref{sec:generalchannels} for a discussion).
Therefore, it is unlikely that the tools presented in this monograph can be used to establish Conjecture~\ref{conj:indel}.

Another set of interesting open questions arises from studying the trace reconstruction problem from a capacity standpoint.
Most of the work on trace reconstruction focuses on characterizing the number of traces needed for perfect reconstruction, as discussed in Section~\ref{sec:trace}.
The recent work on coded trace reconstruction \citep{cheraghchi2020coded} also studies the question of the amount of redundancy that needs to be added to the sequences in order to allow perfect reconstruction.
However, the focus of this work is on the regime where the number of traces is growing, which makes the channel capacity be $1$.

From a capacity standpoint, it would be interesting to study the regime where the number of traces $T$ is constant.
Since the case $T=1$ reduces to a standard deletion channel, whose capacity is unknown, it is unlikely that the case $T > 1$ is any easier.
However, it may still be possible to understand the impact of $T$ on the capacity.
For example, in the case of a $\BEC(p)$ with $T$ ``traces'', it is straightforward to see that the capacity is $1-p^T$, as one can generate a consensus sequence with effective erasure rate $p^T$.
In order to understand the impact of the number of traces in the capacity of the deletion channel, one could generalize existing bounds for the case of general $T$.
Since several known bounds on the capacity of the deletion channel have the form $\alpha - \beta \, p$ for constants $\alpha,\beta >0$, it is reasonable to conjecture that bounds of the form $\alpha - \beta \,  p^T$ can be derived for the $T$-trace setting.


\section{Storing data on very short molecules}
\label{sec:short}
One of the challenges in making DNA storage practically viable is 
that the cost of synthesizing relatively long DNA molecules with few errors is very expensive. 
For example, using the architecture proposed in \citet{erlich_dna_2016}, where the length of the synthesized molecules 
is around 150 nucleotides, 
the estimated cost of synthesizing $1$GB of data 
was $\$3.27$ million \citep[Supplementary Material]{erlich_dna_2016}.

Synthesis costs are significantly lower if one focuses on short molecules and admits higher error rates \citep{kosuri2014large,caruthers2011brief}.
Hence, it is of interest to study the fundamental limits of DNA-based storage systems (and noisy shuffling-sampling channels) in the ``very short'' sequence regime.
When the synthesized molecules are very short, placing a unique index may use up a significant fraction of the synthesized DNA.
Moreover, in this regime it may make sense for the multi-set of stored DNA molecules to contain multiple copies of the same sequence and, in principle, one can 
utilize the frequencies of different molecules in the pool as a way to encode information.

\paragraph{Histograms as codewords and the Poisson channel:}
Consider once again the error-free sampling-shuffling channel discussed in Section~\ref{sec:noisefree}.
Due to the shuffling nature of the channel, the order of the sequences $\x_i$ in a given codeword $[\x_1,\ldots,\x_M]$ does not matter, and each codeword can be equivalently represented as a histogram of size $2^L$, as illustrated in Figure~\ref{fig:histograms}.
\begin{figure}
\center
\includegraphics[width=0.99\linewidth]{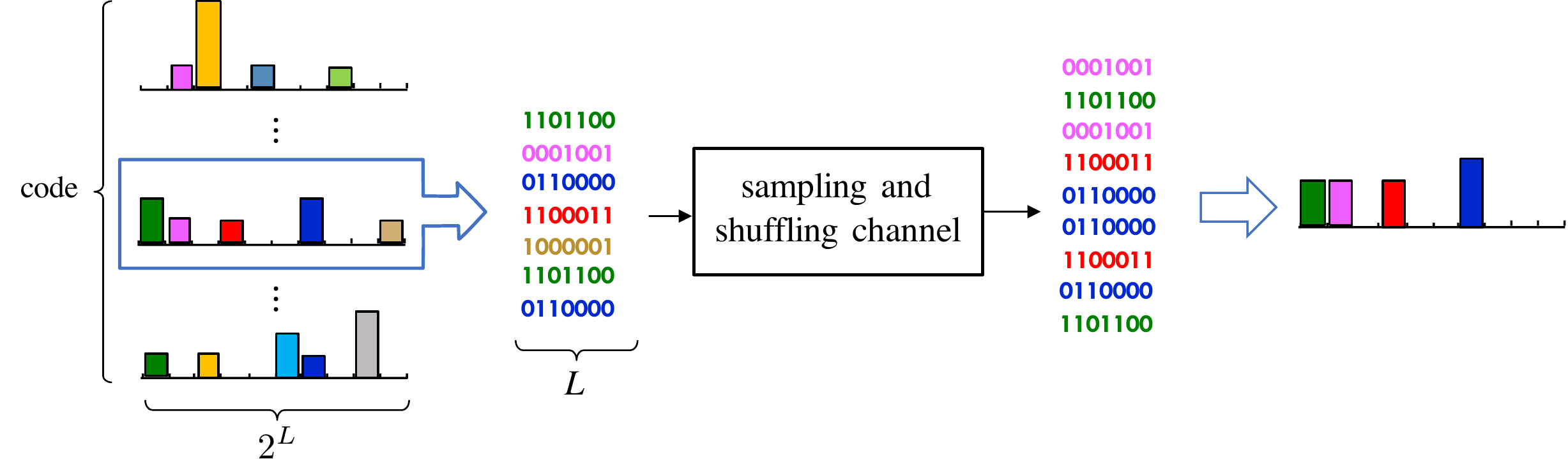}
\caption{
\label{fig:histograms}
When several copies of each molecule are present in the multi-set of input sequences, it is reasonable to view the end-to-end channel as operating on histograms of size $2^L$.
The sampling-shuffling channel modifies each entry of the histogram according to the sampling distribution $Q$.
}
\end{figure}
The effect of the sampling-shuffling channel can then be seen as simply changing the frequency of each of the molecules in the multiset; i.e., corrupting each entry in the histogram independently.


To formalize this representation, we can redefine the input to the channel to be a vector $[x[t]: t \in \{0,1\}^L ]$ with entries in $\Z_+$ indexed by binary strings of length $L$ (i.e., an input histogram, as illustrated in Figure~\ref{fig:histograms}).
If we assume that the sampling distribution $Q$ is $\Pois(\lambda)$, then the channel can be described as
\al{
x[t] \in \Z_+ \quad \longrightarrow \quad  y[t] \sim \Pois(\lambda x[t]),
\label{eq:poisson}
}
for $t \in \{0,1\}^L$.
Since we are only allowed to write $M$ molecules, 
we have the constraint 
\al{
\sum_{t\in \{0,1\}^L} x[t] \leq M \,  \Leftrightarrow \, \frac{1}{2^L}\sum_{t\in \{0,1\}^L} x[t] \leq M 2^{-L} = 
2^{L(1/\beta-1)}. 
\label{eq:avgpower}
}
This channel can be seen as a \emph{discrete-time Poisson channel}, previously studied in the context of optical communications \citep{shamai1993bounds,verdu1999poisson,lapidoth2011discrete,lapidoth2008capacity}.
However, unlike previously studied Poisson channels, the input is constrained to be integer-valued.
Moreover, the \emph{average power constraint} in \eref{eq:avgpower} changes with the block length $2^L$.
This channel provides us with a good model to study the shuffling channel in the ``very short'' molecule regime, as we discuss next.

\paragraph{Encoding information in concentrations:}
Notice that, for all settings considered, when $\beta < 1$, the capacity is zero.
Intuitively, this is because there are only $2^L=M^\beta < M$ distinct binary strings of length $L$.
Hence, in this regime, one is forced to use the same string many times in the input, and the capacity is zero. %
%
%
%
%
%
However, motivated by the fact that 
replicating DNA via PCR is a relatively common and inexpensive biotechnology technique \citep{wong2015rapid,pabinger2014survey,aird2011analyzing}, 
it is interesting to ask whether it is possible to encode information in the frequencies, or concentrations, of very short molecules.

To gain insight into how to optimally encode information in frequencies one can draw from the literature on the Poisson channel.
\citet{lapidoth2008capacity} provide the following  asymptotic characterization of the capacity of a Poisson channel.
\begin{theorem} 
\label{thm:lapidoth}
Consider the discrete-time Poisson channel
\aln{
x[t] \geq 0 \quad \longrightarrow \quad  y[t] \sim \Pois(x[t]),
}
with average power constraint $\Ex\left[x[t]\right] \leq P$. The capacity $C(P)$ satisfies 
\aln{
\lim_{P \to \infty} \left( C(P) - \frac12 \log P \right) = 0.
}
\end{theorem}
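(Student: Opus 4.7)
The plan is to prove matching upper and lower bounds, yielding $C(P) = \frac{1}{2}\log P + o(1)$ as $P \to \infty$. The key technical tool is the variance-stabilizing transformation for the Poisson distribution: if $Y \sim \Pois(\lambda)$ with $\lambda$ large, then the central limit theorem gives $\sqrt{Y} \approx \sqrt{\lambda} + Z/2$, where $Z$ is a standard Gaussian random variable, up to lower-order corrections. This reduces the analysis to an approximately additive-noise channel with input $\sqrt{X}$ and noise variance $1/4$.

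For the achievability lower bound, I would construct an input distribution in which $\sqrt{X}$ has an approximately Gaussian shape (properly truncated or folded to enforce $\sqrt{X}\ge 0$), with second moment matching $\Ex[X] \le P$. After applying the variance-stabilizing transformation, the induced channel from $\sqrt{X}$ to $\sqrt{Y}$ behaves like an AWGN channel with signal variance of order $P$ and noise variance $1/4$. Evaluating the mutual information for this effective channel, while carefully controlling the Poisson-to-Gaussian approximation error, yields an achievable rate of $\frac{1}{2}\log P - o(1)$.

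For the converse I would decompose $I(X;Y) = H(Y) - \Ex[H(Y|X)]$ and bound each term separately. The output entropy is upper bounded by the maximum entropy of a non-negative integer random variable with mean at most $P$, namely a geometric distribution, which gives $H(Y) \le \log P + \log e + o(1)$. For the conditional entropy I would use the asymptotic expansion $H(\Pois(\lambda)) = \frac{1}{2}\log(2\pi e\lambda) + o(1)$ together with a careful separation of the input into a small-$X$ regime (which contributes negligibly to both $H(Y)$ and $I(X;Y)$) and a large-$X$ regime (where the Gaussian approximation is valid). Combining these bounds yields $I(X;Y) \le \frac{1}{2}\log P + O(1)$, and the constant must then be removed by sharper means.

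The hard part will be obtaining the exact $o(1)$ constant in the converse, rather than only $O(1)$. A naive combination of the maximum-entropy bound on $H(Y)$ and a Jensen-type estimate on $\Ex[H(Y|X)]$ leaves a finite constant gap. To close it, I would employ a duality-based argument in the spirit of Csisz\'ar and Tops\o e: instead of maximizing $H(Y)$ directly, choose a reference output distribution $q(y)$ matched to a near-optimal input, and use the bound $I(X;Y) \le \sup_x D(p(\cdot|x)\,\|\,q)$. A suitably chosen $q$ (for instance one whose log-density has Poisson-like tails matching the achievability input) should tighten the bound and remove the $O(1)$ slack, delivering the claimed $\frac{1}{2}\log P + o(1)$ asymptotic.
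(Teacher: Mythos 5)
The paper does not actually prove this theorem; it quotes it directly from \citet{lapidoth2008capacity}, whose original argument is a duality-based upper bound paired with an explicit input distribution for the lower bound. Your overall strategy therefore aligns in spirit with that reference: the variance-stabilizing square-root heuristic is a sound way to guess both the answer and a good achievable input distribution (a half-Gaussian shape on $\sqrt{X}$ indeed gives $\tfrac12\log P + o(1)$), and the duality bound $I(X;Y) \le \sup_x D\bigl(W(\cdot\mid x)\,\|\,q\bigr)$ for a suitably chosen reference output law $q$ is indeed the right instrument for the converse.

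The flaw is in how you characterize the naive converse. You assert that pairing the maximum-entropy bound $H(Y) \le \log P + \log e + o(1)$ with the expansion $H(Y\mid X=x) = H\bigl(\Pois(x)\bigr) \approx \tfrac12\log(2\pi e x)$, after excising a small-$X$ regime, already yields $I(X;Y) \le \tfrac12\log P + O(1)$, with duality needed only to remove a residual constant. That is not what the term-by-term bound gives. To conclude $I \le \tfrac12\log P + O(1)$ from this decomposition you would need $\Ex[H(Y\mid X)] \ge \tfrac12\log P - O(1)$ for \emph{every} feasible input, but $\Ex[H(Y\mid X)] \gtrsim \tfrac12 \Ex[\log X\cdot\one\{X\ge X_0\}]$, and $\Ex[\log X]$ can sit arbitrarily far below $\log\Ex[X] = \log P$: take $X = P/\epsilon$ with probability $\epsilon$ and $X=0$ otherwise, so that $\Ex[X]=P$ yet $\Ex[H(Y\mid X)] = \epsilon\,H(\Pois(P/\epsilon)) \to 0$ as $\epsilon\to 0$. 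Bounding $H(Y)$ and $\Ex[H(Y\mid X)]$ separately over feasible inputs therefore only yields $I \le \log P + O(1)$ --- the wrong leading coefficient, off by a factor of two. The duality argument with a well-chosen reference $q$ (Lapidoth and Moser use one whose log-density is essentially affine in $\sqrt{y}$) is what delivers the factor $\tfrac12$, because it couples the output-entropy gain with the conditional-entropy cost at a single worst-case $x$ rather than optimizing them against each other over different inputs. You should present the duality step as the load-bearing idea of the converse, not as a refinement that closes a constant gap.
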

Notice that the average power power constraint in \eref{eq:avgpower} is $P = 2^{L(1/\beta-1)}$. 
When $\beta < 1$, we have $P \to \infty$, and the approximation $C(P) \approx \frac{1}{2}\log(P)$ suggested by Theorem~\ref{thm:lapidoth} should hold.
Since for \eref{eq:poisson} we have $2^L$ channel uses,  Theorem~\ref{thm:lapidoth} suggests that the maximum achievable rate in bits per nucleotide should be roughly 
\aln{
\frac{2^\len}{\len \M} \frac{1}{2} \log(P)
= \frac{\M^\beta L (1-\beta)}{2 M L \beta } 
= \frac{1 - \beta}{2\beta} \M^{\beta - 1}.
}
This approximation suggests that in the short-molecule regime $\beta < 1$, the number of bits per nucleotide that can be reliably stored scales as $M^{\beta - 1}$.
While this storage rate goes to zero, it suggests that the number of data bits stored can grow as $M^\beta L$, motivating the following conjecture.

 
\begin{conjecture} \label{conj:poissonconj}
Consider an error-free shuffling-sampling channel with $\beta \in (0,1)$.
If $\{\C_M\}$ with $M \to \infty$ is a sequence of codes with vanishing error probabilities, then
\al{ \label{eq:poissonconj}
\limsup_{M\to \infty} \frac{\log |\C_M|}{M^\beta L} \leq \frac{1-\beta}{2\beta}. 
}
Moreover, there exist codes $\{\C_M\}$ that satisfy (\ref{eq:poissonconj}) with equality.
\end{conjecture}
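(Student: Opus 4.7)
Following the histogram perspective illustrated in Figure~\ref{fig:histograms}, I would first recast the channel input as a non-negative integer-valued vector $x\in\mathbb{Z}_+^{2^L}$ indexed by $t\in\{0,1\}^L$, where $x[t]$ counts how many input molecules equal $t$, subject to $\|x\|_1\le M$. With $\Pois(\lambda)$ sampling, the output histogram satisfies $y[t]\sim\Pois(\lambda x[t])$ independently across $t$, so the problem reduces to communication over $2^L = M^\beta$ parallel discrete-time Poisson channels with the joint sum constraint $\sum_t x[t]\le M$, equivalently an average power of $M^{1-\beta}\to\infty$ per coordinate.

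\textbf{Converse.} Let $W$ be a uniform message and $X$ the associated histogram. By Fano's inequality and vanishing error probability, $\log|\C_M|\le I(X;Y) + o(M^\beta L)$. Because the $Y[t]$ are conditionally independent given $X$ and each $Y[t]$ depends on $X$ only through $X[t]$,
\[
I(X;Y) \;=\; H(Y) - \sum_t H(Y[t]\,|\,X[t]) \;\le\; \sum_t I(X[t];Y[t]).
\]
I would then exploit permutation invariance of the code (relabeling $t\mapsto \pi(t)$ preserves the channel law) to symmetrize over $\pi$, which leaves the rate unchanged but enforces $\Ex[X[t]]=M/2^L=M^{1-\beta}$ for every $t$. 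Applying the asymptotic characterization of Theorem~\ref{thm:lapidoth} (absorbing $\lambda$ into the input scale) gives $I(X[t];Y[t])\le \tfrac12\log(\lambda M^{1-\beta})+o(\log M)$, and summing over the $M^\beta$ coordinates yields $I(X;Y)\le \tfrac{1-\beta}{2}M^\beta\log M + o(M^\beta\log M)$. Dividing by $M^\beta L = \beta M^\beta\log M$ produces the claimed $(1-\beta)/(2\beta)$ upper bound.

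\textbf{Achievability.} For the matching construction I would use random coding on the product Poisson channel: draw the coordinates $X[t]$ i.i.d.\ from a (nearly) capacity-achieving law on $\mathbb{Z}_+$ for a scalar Poisson channel with mean $(1-\delta)M^{1-\beta}$. Lapidoth--Moser's analysis supplies such a distribution achieving $\tfrac12\log(\lambda(1-\delta)M^{1-\beta})+o(\log M)$ bits per coordinate. By concentration of $\sum_t X[t]$ around $(1-\delta)M$, the hard constraint $\|X\|_1\le M$ holds with probability $\to 1$; the rare violating codewords can be expurgated with negligible rate loss. Joint-typicality decoding across the $2^L$ independent Poisson outputs then succeeds with vanishing error probability, and sending $\delta\to 0$ delivers $\log|\C_M|/(M^\beta L)\to (1-\beta)/(2\beta)$.

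\textbf{Main obstacle.} The principal difficulty will be making Theorem~\ref{thm:lapidoth}'s asymptotic gap $C(P)-\tfrac12\log P\to 0$ \emph{quantitative and uniform} enough to survive summation over $M^\beta$ parallel uses: the converse needs the $o(\log M)$ correction controlled uniformly over the marginal law of $X[t]$, and the achievability needs a matching approximation that is furthermore valid for integer-valued inputs rather than the continuous inputs in the classical Poisson literature. Converting the real-valued Poisson capacity statements to integer inputs with an $\ell_1$ budget, and coupling random coding with expurgation so that the constraint conversion from average to hard does not dent the leading-order rate, is where the bulk of the technical work will lie; once these uniform-rate approximations are established, the product-channel converse and achievability fit together cleanly.
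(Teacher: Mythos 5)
The statement you were asked to prove is presented in the paper as an open conjecture, not a theorem: the paper offers only the heuristic derivation (the histogram/Poisson reformulation of Figure~\ref{fig:histograms} plus Theorem~\ref{thm:lapidoth}) and then explicitly remarks that ``formally establishing this result will require understanding the non-standard Poisson channel in \eqref{eq:poisson}, where the input is constrained to be integer-valued and the average power constraint changes with the block length.'' Your sketch reproduces exactly that heuristic---the product-Poisson reformulation, single-letterization $I(X;Y)\le\sum_t I(X[t];Y[t])$, symmetrization over relabelings of $\{0,1\}^L$ to force $\Ex[X[t]]\le M^{1-\beta}$, and then Theorem~\ref{thm:lapidoth}---and you correctly flag the same two obstacles the authors leave open. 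So there is no proof in the paper to compare against; your proposal is a well-organized program for attacking a problem the paper labels as open, not a proof of a result the paper establishes.

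On the substance of your sketch: the converse half is in slightly better shape than your closing paragraph suggests. Once you pass to $I(X[t];Y[t])\le C\bigl(\Ex[X[t]]\bigr)$ via the definition of capacity (which needs no uniformity over the marginal law, only that the mutual information never exceeds the capacity at the corresponding average power), Theorem~\ref{thm:lapidoth} gives $C(P)=\tfrac12\log P+\delta(P)$ with $\delta(P)\to0$, and after summing over $M^\beta$ coordinates and dividing by $\beta M^\beta\log M$ the $\delta$-term already vanishes, so the quantitative control you worry about is not actually needed on the converse side. The genuine converse issue is only that Theorem~\ref{thm:lapidoth} concerns the classical real-input Poisson channel (and $\lambda=1$), and one should verify that it transfers to $\Pois(\lambda x)$ with $x$ constrained to a lattice; for a converse this is a fortiori fine, but it must be said. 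The achievability half is where the unresolved technical work lies, and it is real: you need an integer-valued input law achieving $\tfrac12\log P+o(\log P)$ on a scalar Poisson channel, and you need the concentration-plus-expurgation argument to convert the average constraint to the hard $\|x\|_1\le M$ budget without dropping the leading term; neither step is supplied by the continuous-input Poisson literature you cite, and neither is carried out in your proposal. You identify both gaps honestly, but as written your argument matches the state of knowledge in the paper rather than extending it.
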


Formally establishing this result will require understanding the non-standard Poisson channel in \eref{eq:poisson}, where the input is constrained to be integer-valued and the average power constraint changes with the block length.

Once this error-free setting is established, one can seek generalizations to noisy cases, similar to what was done in Chapter~\ref{ch:noisy}, but in the short-molecule regime.
Notice that the noise, say of a BSC, causes sequences from one entry of the histogram to jump to a different entry of the histogram, and the channel is no longer memoryless, further complicating the setting.

\section{New technologies and paradigms}
\label{sec:newtech}

DNA-based storage is an emerging idea and the system-level details depend on the current state of the art in terms of DNA synthesis, storage, and sequencing.
As these technologies evolve, one may need to extend or adapt the theoretical framework proposed in this monograph to different settings.

\paragraph{Storing data on different macromolecules:}
In addition to DNA, synthetic polymers have been considered as a medium for data storage \citep{pattabiraman2019reconstruction,gabrys2020mass}.
In this setting, one can encode binary data using polyphosphodiesters in such a way that the
two bits 0 and 1 are represented by molecules of different
masses that are stitched together into a long string.
The data retrieval is performed using tandem mass spectrometry.
At a high level, many copies of the long polymer are broken down to pieces of random sizes, and the tandem
mass spectrometer provides estimates of the
\emph{masses} of the fragments.
Assuming that the masses of the 0 and 1 molecules are not multiples of each other, from the mass of each fragment one can obtain its \emph{composition}; i.e., the number of zeros and ones in it. 

Based on this setting, \citet{acharya2015string}
introduced the problem of binary string reconstruction from its \emph{substring composition multiset}.
For example, the substring composition multiset of the string 1001 is  $\{1,0,0,1,1^10^1,0^2,1^10^1,1^10^2,1^10^2\}$.
Follow-up work considered the case where the composition multiset may be subject to errors and the case where one observes only prefixes and suffixes of the string \citep{gabrys2020mass,pattabiraman2019reconstruction}.

From the shuffling-sampling channel standpoint taken in this monograph, an interesting research direction would be to consider a setting where the mass spectrometer randomly picks $N$ substrings from the input string and reports their sequence composition.
Unlike in the existing literature, the goal would be to study the capacity of the resulting channel (as a function of the sampling distribution and potential noisy channel).

\paragraph{New synthesis technique:}
New DNA synthesis and sequencing techniques also lead to new questions on the fundamental limits of the respective systems. 
A concrete example is DNA-based storage via combinatorial assembly commercialized by the startup CATALOG and described in~\citet{roquet_combinatorial_2021}. 

Most existing storage systems rely on array-based synthesis where DNA sequences are synthesized in parallel nucleotide-by-nucleotide. \citet{roquet_combinatorial_2021} proposed a combinatorial assembly approach, which generates DNA sequences by assembling pre-synthesized shorter sequences called components. A sequence consists of assembling $T$ components. Each component is a DNA sequence, consisting of a beginning, center, and end part. For each of the $T$ positions of the sequence, $Q$ candidate sequences exists. For example, say a sequence consists of $T=3$ components, and for each position, there are $Q=2$ candidate components, yielding the components $x_{1,a},x_{1,b},x_{2,a},x_{2,b},x_{3,a},$ and $x_{3,b}$. The beginning and ends of the components are chosen to be complementary such that when, for example, the three components $x_{1,a},x_{2,b},x_{3,a}$ are mixed together, they form the sequence $[x_{1,a},x_{2,b},x_{3,a}]$. 

This approach has the advantage that many sequences can be generated
relatively fast, but at the cost of generating a pool of long sequences where each sequence only contains little information. 
For example, \citet{roquet_combinatorial_2021} stored 26~kB with this technology on 97920 sequences, which means each sequence only contains about 2 bits. In contrast, the aforementioned systems based on array-based synthesis store more than 100 bits per sequence, which means that, to read the same amount of information, we have to read 50 times less sequences with the traditional array-based storage systems. However, characterizing the capacity of combinatorial assembly and other new technologies is an interesting problem with the potential of improving DNA storage systems based on them.

\chapter*{Acknowledgements}

The work of Ilan Shomorony was supported in part by the National Science Foundation (NSF) under grant CCF-2007597. 
Reinhard Heckel would like to thank Robert Grass for countless helpful discussions on modeling aspects of DNA storage channels and DNA storage in general. Reinhard Heckel has received partial funding for this project from the Institute of Advanced Studies at the Technical University of Munich and from the European Research Council (ERC) under the European Union?s Horizon 2020 research and innovation programme, Grant agreement No. 964995.



\printbibliography

@article{sabary2020reconstruction,
	title        = {Reconstruction algorithms for DNA-storage systems},
	author       = {Sabary, Omer and Yucovich, Alexander and Shapira, Guy and Yaakobi, Eitan},
	year         = 2020,
	journal      = {bioRxiv},
	publisher    = {Cold Spring Harbor Laboratory}
}

@inproceedings{chase2021new,
	title        = {New lower bounds for trace reconstruction},
	author       = {Chase, Zachary},
	year         = 2021,
	booktitle    = {Annales de l'Institut Henri Poincar{\'e}, Probabilit{\'e}s et Statistiques},
	volume       = 57,
	number       = 2,
	organization = {Institut Henri Poincar{\'e}}
}

@article{levick2021achieving,
	title        = {Achieving the Capacity of a DNA Storage Channel with Linear Coding Schemes},
	author       = {Levick, Kel and Heckel, Reinhard and Shomorony, Ilan},
	year         = 2021,
	journal      = {arXiv preprint arXiv:2112.01630}
}

@article{Organick_Nguyen_McAmis_Chen_Kohll_Ang_Grass_Ceze_Strauss_2021,
	title        = {An Empirical Comparison of Preservation Methods for Synthetic DNA Data Storage},
	author       = {Organick, Lee and Nguyen, Bichlien H. and McAmis, Rachel and Chen, Weida D. and Kohll, A. Xavier and Ang, Siena Dumas and Grass, Robert N. and Ceze, Luis and Strauss, Karin},
	year         = 2021,
	journal      = {Small Methods},
	volume       = 5,
	number       = 5
}

@inproceedings{Sabary_Yaakobi_Yucovich_2020,
	title        = {The Error Probability of Maximum-Likelihood Decoding over Two Deletion/Insertion Channels},
	author       = {Sabary, Omer and Yaakobi, Eitan and Yucovich, Alexander},
	year         = 2020,
	booktitle    = {2020 IEEE International Symposium on Information Theory (ISIT)}
}

@article{Gabrys_Yaakobi_2018,
	title        = {Sequence Reconstruction Over the Deletion Channel},
	author       = {Gabrys, Ryan and Yaakobi, Eitan},
	year         = 2018,
	journal      = {IEEE Transactions on Information Theory},
	volume       = 64,
	number       = 4
}

@inproceedings{cheng_efficient_2021,
	title        = {Efficient Linear and Affine Codes for Correcting Insertions/Deletions},
	author       = {Cheng, Kuan and Guruswami, Venkatesan and Haeupler, Bernhard and Li, Xin},
	year         = 2021,
	booktitle    = {Proceedings of the 2021 ACM-SIAM Symposium on Discrete Algorithms (SODA)},
	series       = {Proceedings}
}

@article{guruswami_explicit_2021,
	title        = {Explicit Two-Deletion Codes With Redundancy Matching the Existential Bound},
	author       = {Guruswami, Venkatesan and Hastad, Johan},
	year         = 2021,
	journal      = {IEEE Transactions on Information Theory},
	volume       = 67,
	number       = 10
}

@article{brakensiek_efficient_2018,
	title        = {Efficient Low-Redundancy Codes for Correcting Multiple Deletions},
	author       = {Brakensiek, Joshua and Guruswami, Venkatesan and Zbarsky, Samuel},
	year         = 2018,
	month        = {May},
	journal      = {IEEE Transactions on Information Theory},
	volume       = 64,
	number       = 5
}

@article{gabrys_codes_2019,
	title        = {Codes Correcting Two Deletions},
	author       = {Gabrys, Ryan and Sala, Frederic},
	year         = 2019,
	journal      = {IEEE Transactions on Information Theory},
	volume       = 65,
	number       = 2
}

@article{Sima_Bruck_2021,
	title        = {On Optimal k-Deletion Correcting Codes},
	author       = {Sima, Jin and Bruck, Jehoshua},
	year         = 2021,
	month        = {Jun},
	journal      = {IEEE Transactions on Information Theory},
	volume       = 67,
	number       = 6
}

@inproceedings{Sima_Raviv_Bruck_2020,
	title        = {Robust Indexing - Optimal Codes for DNA Storage},
	author       = {Sima, Jin and Raviv, Netanel and Bruck, Jehoshua},
	year         = 2020,
	month        = {Jun},
	booktitle    = {2020 IEEE International Symposium on Information Theory (ISIT)}
}

@inproceedings{kiah_efficient_2021,
	title        = {Efficient Bee Identification},
	author       = {Kiah, Han Mao and Vardy, Alexander and Yao, Hanwen},
	year         = 2021,
	month        = {Jul},
	booktitle    = {IEEE International Symposium on Information Theory (ISIT)}
}

@article{weinberger2021dna,
	title        = {The DNA Storage Channel: Capacity and Error Probability},
	author       = {Weinberger, Nir and Merhav, Neri},
	year         = 2021,
	journal      = {arXiv preprint arXiv:2109.12549}
}

@article{anavy_composite_2019,
	title        = {Data storage in DNA with fewer synthesis cycles using composite DNA letters},
	author       = {Anavy, Leon and Vaknin, Inbal and Atar, Orna and Amit, Roee and Yakhini, Zohar},
	year         = 2019,
	month        = {Oct},
	journal      = {Nature Biotechnology},
	volume       = 37,
	number       = 10
}

@article{yim_robust_direct_2021,
	title        = {Robust direct digital-to-biological data storage in living cells},
	author       = {Yim, Sung Sun and McBee, Ross M. and Song, Alan M. and Huang, Yiming and Sheth, Ravi U. and Wang, Harris H.},
	year         = 2021,
	month        = {Mar},
	journal      = {Nature Chemical Biology},
	volume       = 17,
	number       = 3
}

@article{tabatabaei_DNA-punch-cars_2020,
	title        = {DNA punch cards for storing data on native DNA sequences via enzymatic nicking},
	author       = {Tabatabaei, S. Kasra and Wang, Boya and Athreya, Nagendra Bala Murali and Enghiad, Behnam and Hernandez, Alvaro Gonzalo and Fields, Christopher J. and Leburton, Jean-Pierre and Soloveichik, David and Zhao, Huimin and Milenkovic, Olgica},
	year         = 2020,
	month        = {Apr},
	journal      = {Nature Communications},
	volume       = 11,
	number       = 1
}

@inproceedings{haeupler_mitzenmacher_2014,
	title        = {Repeated deletion channels},
	author       = {Haeupler, Bernhard and Mitzenmacher, Michael},
	year         = 2014,
	month        = {Nov},
	booktitle    = {2014 IEEE Information Theory Workshop}
}

@article{song_sequence_2020,
	title        = {Sequence-Subset Distance and Coding for Error Control in DNA-Based Data Storage},
	author       = {Song, Wentu and Cai, Kui and Schouhamer Immink, Kees A.},
	year         = 2020,
	month        = {Oct},
	journal      = {IEEE Transactions on Information Theory},
	volume       = 66,
	number       = 10
}

@article{wei_schwartz_2021,
	title        = {Improved Coding over Sets for DNA-Based Data Storage},
	author       = {Wei, Hengjia and Schwartz, Moshe},
	year         = 2021
}

@article{sabary_solqc_2021,
	title        = {SOLQC: Synthetic Oligo Library Quality Control tool},
	author       = {Sabary, Omer and Orlev, Yoav and Shafir, Roy and Anavy, Leon and Yaakobi, Eitan and Yakhini, Zohar},
	year         = 2021,
	journal      = {Bioinformatics},
	volume       = 37,
	number       = 5
}

@article{edgar_muscle_2004,
	title        = {MUSCLE: multiple sequence alignment with high accuracy and high throughput},
	author       = {Edgar, Robert C.},
	year         = 2004,
	journal      = {Nucleic Acids Research},
	volume       = 32,
	number       = 5
}

@article{roquet_combinatorial_2021,
	title        = {DNA-based data storage via combinatorial assembly},
	author       = {Roquet, Nathaniel and Bhatia, Swapnil P. and Flickinger, Sarah A. and Mihm, Sean and Norsworthy, Michael W. and Leake, Devin and Park, Hyunjun},
	year         = 2021,
	journal      = {bioRxiv}
}

@article{chen_quantifying_2020,
	title        = {Quantifying molecular bias in DNA data storage},
	author       = {Chen, Yuan-Jyue and Takahashi, Christopher N. and Organick, Lee and Bee, Callista and Ang, Siena Dumas and Weiss, Patrick and Peck, Bill and Seelig, Georg and Ceze, Luis and Strauss, Karin},
	year         = 2020,
	journal      = {Nature Communications},
	volume       = 11,
	number       = 1
}

@article{con_linear_2021,
	title        = {Linear and Reed Solomon Codes Against Adversarial Insertions and Deletions},
	author       = {Con, Roni and Shpilka, Amir and Tamo, Itzhak},
	year         = 2021,
	journal      = {arXiv:2107.05699 [cs, math]}
}

@inproceedings{lenz_concatenated_2021,
	title        = {Concatenated Codes for Recovery From Multiple Reads of DNA Sequences},
	author       = {Lenz, Andreas and Maarouf, Issam and Welter, Lorenz and Wachter-Zeh, Antonia and Rosnes, Eirik and Graell i Amat, Alexandre},
	year         = 2021,
	booktitle    = {IEEE Information Theory Workshop (ITW)}
}

@article{srinivasavaradhan_trellis_2021,
	title        = {Trellis BMA: Coded Trace Reconstruction on IDS Channels for DNA Storage},
	author       = {Srinivasavaradhan, Sundara Rajan and Gopi, Sivakanth and Pfister, Henry D. and Yekhanin, Sergey},
	year         = 2021
}

@article{mitzenmacher2006simple,
	title        = {A Simple Lower Bound for the Capacity of the Deletion Channel},
	author       = {Mitzenmacher, M. and Drinea, E.},
	year         = 2006,
	journal      = {IEEE Transactions on Information Theory},
	volume       = 52,
	number       = 10
}

@article{kanoria_optimal_2013,
	title        = {Optimal Coding for the Binary Deletion Channel With Small Deletion Probability},
	author       = {Kanoria, Yashodhan and Montanari, Andrea},
	year         = 2013,
	journal      = {IEEE Transactions on Information Theory},
	volume       = 59,
	number       = 10
}

@inproceedings{Haeupler_Shahrasbi_2017,
	title        = {Synchronization Strings: Codes for Insertions and Deletions Approaching the Singleton Bound},
	author       = {Haeupler, Bernhard and Shahrasbi, Amirbehshad},
	year         = 2017,
	booktitle    = {ACM SIGACT Symposium on Theory of Computing}
}

@article{Haeupler_Shahrasbi_2021,
	title        = {Synchronization Strings and Codes for Insertions and Deletions—A Survey},
	author       = {Haeupler, Bernhard and Shahrasbi, Amirbehshad},
	year         = 2021,
	journal      = {IEEE Transactions on Information Theory},
	volume       = 67,
	number       = 6
}

@inproceedings{haveliwala_scalable_2000,
	title        = {Scalable Techniques for Clustering the Web},
	author       = {Haveliwala, Taher and Gionis, Aristides and Indyk, Piotr},
	year         = 2000
}

@inproceedings{holden_subpolynomial_2018,
	title        = {Subpolynomial trace reconstruction for random strings and arbitrary deletion probability},
	author       = {Holden, Nina and Pemantle, Robin and Peres, Yuval},
	year         = 2018,
	booktitle    = {Conference On Learning Theory}
}

@article{Lee_Kalhor_Goela_Bolot_Church_2019,
	title        = {Terminator-free template-independent enzymatic DNA synthesis for digital information storage},
	author       = {Lee, Henry H. and Kalhor, Reza and Goela, Naveen and Bolot, Jean and Church, George M.},
	year         = 2019,
	month        = {Jun},
	journal      = {Nature Communications},
	volume       = 10,
	number       = 1
}

@article{Koch_Gantenbein_Masania_Stark_Erlich_Grass_2020,
	title        = {A DNA-of-things storage architecture to create materials with embedded memory},
	author       = {Koch, Julian and Gantenbein, Silvan and Masania, Kunal and Stark, Wendelin J. and Erlich, Yaniv and Grass, Robert N.},
	year         = 2020,
	month        = {Jan},
	journal      = {Nature Biotechnology},
	publisher    = {Nature Publishing Group},
	volume       = 38,
	number       = 11
}

@article{Valk_2021,
	title        = {Million-year-old DNA sheds light on the genomic history of mammoths},
	author       = {van der Valk, Tom and Pečnerová, Patrícia and Díez-del-Molino, David and Bergström, Anders and Oppenheimer, Jonas and Hartmann, Stefanie and Xenikoudakis, Georgios and Thomas, Jessica A. and Dehasque, Marianne and Sağlıcan, Ekin and et al.},
	year         = 2021,
	month        = {Mar},
	journal      = {Nature},
	volume       = 591,
	number       = 7849
}

@article{acharya2015string,
	title        = {String reconstruction from substring compositions},
	author       = {Acharya, Jayadev and Das, Hirakendu and Milenkovic, Olgica and Orlitsky, Alon and Pan, Shengjun},
	year         = 2015,
	journal      = {SIAM Journal on Discrete Mathematics},
	publisher    = {SIAM},
	volume       = 29,
	number       = 3
}

@inproceedings{pattabiraman2019reconstruction,
	title        = {Reconstruction and error-correction codes for polymer-based data storage},
	author       = {Pattabiraman, Srilakshmi and Gabrys, Ryan and Milenkovic, Olgica},
	year         = 2019,
	booktitle    = {2019 IEEE Information Theory Workshop (ITW)},
	organization = {IEEE}
}

@article{cheraghchi2019capacity,
	title        = {Capacity upper bounds for deletion-type channels},
	author       = {Cheraghchi, Mahdi},
	year         = 2019,
	journal      = {Journal of the ACM (JACM)},
	publisher    = {ACM New York, NY, USA},
	volume       = 66,
	number       = 2
}

@article{Schirmer_DAmore_Ijaz_Hall_Quince_2016,
	title        = {Illumina error profiles: resolving fine-scale variation in metagenomic sequencing data},
	author       = {Schirmer, Melanie and D’Amore, Rosalinda and Ijaz, Umer Z. and Hall, Neil and Quince, Christopher},
	year         = 2016,
	month        = {Mar},
	journal      = {BMC Bioinformatics},
	volume       = 17
}

@article{Singh-Gasson_Green_Yue_Nelson_Blattner_Sussman_Cerrina_1999,
	title        = {Maskless fabrication of light-directed oligonucleotide microarrays using a digital micromirror array},
	author       = {Singh-Gasson, Sangeet and Green, Roland D. and Yue, Yongjian and Nelson, Clark and Blattner, Fred and Sussman, Michael R. and Cerrina, Franco},
	year         = 1999,
	month        = {Oct},
	journal      = {Nature Biotechnology},
	volume       = 17,
	number       = 1010
}

@inproceedings{chandak2019improved,
	title        = {Improved read/write cost tradeoff in DNA-based data storage using LDPC codes},
	author       = {Chandak, Shubham and Tatwawadi, Kedar and Lau, Billy and Mardia, Jay and Kubit, Matthew and Neu, Joachim and Griffin, Peter and Wootters, Mary and Weissman, Tsachy and Ji, Hanlee},
	year         = 2019,
	booktitle    = {2019 57th Annual Allerton Conference on Communication, Control, and Computing (Allerton)},
	organization = {IEEE}
}

@inproceedings{abroshan2019coding,
	title        = {Coding for deletion channels with multiple traces},
	author       = {Abroshan, Mahed and Venkataramanan, Ramji and Dolecek, Lara and i F{\`a}bregas, Albert Guill{\'e}n},
	year         = 2019,
	booktitle    = {2019 IEEE International Symposium on Information Theory (ISIT)},
	organization = {IEEE}
}

@article{cheraghchi2020coded,
	title        = {Coded trace reconstruction},
	author       = {Cheraghchi, Mahdi and Gabrys, Ryan and Milenkovic, Olgica and Ribeiro, Joao},
	year         = 2020,
	journal      = {IEEE Transactions on Information Theory},
	publisher    = {IEEE}
}

@inproceedings{de2017optimal,
	title        = {Optimal mean-based algorithms for trace reconstruction},
	author       = {De, Anindya and O'Donnell, Ryan and Servedio, Rocco A},
	year         = 2017,
	booktitle    = {Proceedings of the 49th Annual ACM SIGACT Symposium on Theory of Computing}
}

@inproceedings{hartung2018trace,
	title        = {Trace reconstruction with varying deletion probabilities},
	author       = {Hartung, Lisa and Holden, Nina and Peres, Yuval},
	year         = 2018,
	booktitle    = {2018 Proceedings of the Fifteenth Workshop on Analytic Algorithmics and Combinatorics (ANALCO)},
	organization = {SIAM}
}

@article{levenshtein1997reconstruction,
	title        = {Reconstruction of objects from a minimum number of distorted patterns},
	author       = {Levenshtein, VI},
	year         = 1997,
	journal      = {Doklady Mathematics},
	volume       = 55,
	number       = 3,
	organization = {Pleiades Publishing, Ltd.}
}

@article{magner2016fundamental,
	title        = {Fundamental bounds for sequence reconstruction from nanopore sequencers},
	author       = {Magner, Abram and Duda, Jaros{\l}aw and Szpankowski, Wojciech and Grama, Ananth},
	year         = 2016,
	journal      = {IEEE Transactions on Molecular, Biological and Multi-Scale Communications},
	publisher    = {IEEE},
	volume       = 2,
	number       = 1
}

@inproceedings{nazarov2017trace,
	title        = {Trace reconstruction with exp (O(n1/3)) samples},
	author       = {Nazarov, Fedor and Peres, Yuval},
	year         = 2017,
	booktitle    = {Proceedings of the 49th Annual ACM SIGACT Symposium on Theory of Computing}
}

@inproceedings{srinivasavaradhan2018maximum,
	title        = {On maximum likelihood reconstruction over multiple deletion channels},
	author       = {Srinivasavaradhan, Sundara Rajan and Du, Michelle and Diggavi, Suhas and Fragouli, Christina},
	year         = 2018,
	booktitle    = {2018 IEEE International Symposium on Information Theory (ISIT)},
	organization = {IEEE}
}

@inproceedings{srinivasavaradhan2019symbolwise,
	title        = {Symbolwise map for multiple deletion channels},
	author       = {Srinivasavaradhan, Sundara R and Du, Michelle and Diggavi, Suhas and Fragouli, Christina},
	year         = 2019,
	booktitle    = {2019 IEEE International Symposium on Information Theory (ISIT)},
	organization = {IEEE}
}

@inproceedings{holenstein2008trace,
	title        = {Trace reconstruction with constant deletion probability and related results},
	author       = {Holenstein, Thomas and Mitzenmacher, Michael and Panigrahy, Rina and Wieder, Udi},
	year         = 2008,
	booktitle    = {Proceedings of the nineteenth annual ACM-SIAM symposium on Discrete algorithms},
	organization = {Citeseer}
}

@article{batu2004reconstructing,
	title        = {Reconstructing strings from random traces},
	author       = {Batu, Tugkan and Kannan, Sampath and Khanna, Sanjeev and McGregor, Andrew},
	year         = 2004,
	journal      = {roceedings of the 15th Annual ACM-SIAM Symposium on Discrete Algorithms (SODA)}
}

@article{song2020sequence,
	title        = {Sequence-subset distance and coding for error control in DNA-based data storage},
	author       = {Song, Wentu and Cai, Kui and Immink, Kees A Schouhamer},
	year         = 2020,
	journal      = {IEEE Transactions on Information Theory},
	publisher    = {IEEE},
	volume       = 66,
	number       = 10
}

@inproceedings{makur2018information,
	title        = {Information Capacity of BSC and BEC Permutation Channels},
	author       = {Makur, Anuran},
	year         = 2018,
	booktitle    = {2018 56th Annual Allerton Conference on Communication, Control, and Computing (Allerton)},
	organization = {IEEE}
}

@inproceedings{chrisnata2020optimal,
	title        = {Optimal reconstruction codes for deletion channels},
	author       = {Chrisnata, Johan and Kiah, Han Mao and Yaakobi, Eitan},
	year         = 2020,
	booktitle    = {2020 International Symposium on Information Theory and Its Applications (ISITA)},
	organization = {IEEE}
}

@inproceedings{kiah2020coding,
	title        = {Coding for sequence reconstruction for single edits},
	author       = {Kiah, Han Mao and Nguyen, Tuan Thanh and Yaakobi, Eitan},
	year         = 2020,
	booktitle    = {2020 IEEE International Symposium on Information Theory (ISIT)},
	organization = {IEEE}
}

@article{ahlswede1987optimal,
	title        = {Optimal coding strategies for certain permuting channels},
	author       = {Ahlswede, Rudolf and Kaspi, A},
	year         = 1987,
	journal      = {IEEE transactions on information theory},
	publisher    = {IEEE},
	volume       = 33,
	number       = 3
}

@book{benjamin1975coding,
	title        = {Coding for a Noisy Channel with Permutation Errors},
	author       = {Benjamin, Theodore William},
	year         = 1975,
	publisher    = {Cornell University}
}

@article{aird2011analyzing,
	title        = {Analyzing and minimizing PCR amplification bias in Illumina sequencing libraries},
	author       = {Aird, Daniel and Ross, Michael G and Chen, Wei-Sheng and Danielsson, Maxwell and Fennell, Timothy and Russ, Carsten and Jaffe, David B and Nusbaum, Chad and Gnirke, Andreas},
	year         = 2011,
	journal      = {Genome biology},
	publisher    = {BioMed Central},
	volume       = 12,
	number       = 2
}

@article{bee,
	title        = {The Bee-Identification Problem: Bounds on the Error Exponent},
	author       = {Tandon, Anshoo and Tan, Vincent YF and Varshney, Lav R},
	year         = 2019,
	journal      = {IEEE Transactions on Communications},
	publisher    = {IEEE}
}

@misc{caruthers2011brief,
	title        = {A brief review of {{DNA}} and RNA chemical synthesis},
	author       = {Caruthers, Marvin H},
	year         = 2011,
	publisher    = {Citeseer}
}

@inproceedings{chee2019efficient,
	title        = {Efficient and explicit balanced primer codes},
	author       = {Chee, Yeow Meng and Kiah, Han Mao and Wei, Hengjia},
	year         = 2019,
	booktitle    = {2019 IEEE International Symposium on Information Theory (ISIT)}
}

@inproceedings{chee2019linear,
	title        = {Linear-Time Encoders for Codes Correcting a Single Edit for {{DNA}}-Based Data Storage},
	author       = {Chee, Yeow Meng and Kiah, Han Mao and Nguyen, Tuan Thanh},
	year         = 2019,
	booktitle    = {2019 IEEE International Symposium on Information Theory (ISIT)},
	organization = {IEEE}
}

@article{cheraghchi2019sharp,
	title        = {Sharp analytical capacity upper bounds for sticky and related channels},
	author       = {Cheraghchi, Mahdi and Ribeiro, Jo{\~a}o},
	year         = 2019,
	journal      = {IEEE Transactions on Information Theory},
	publisher    = {IEEE}
}

@article{erlich_dna_2016,
	title        = {{{DNA}} Fountain enables a robust and efficient storage architecture},
	author       = {Erlich, Yaniv and Zielinski, Dina},
	year         = 2017,
	journal      = {Science},
	publisher    = {American Association for the Advancement of Science},
	volume       = 355,
	number       = 6328
}

@article{horse,
	title        = {Story one: Ancient horse's {{DNA}} fills in picture of equine evolution: A 700,000-year-old fossil proves astoundingly well preserved},
	author       = {Saey, Tina Hesman},
	year         = 2013,
	journal      = {Science News},
	publisher    = {Wiley Online Library},
	volume       = 184,
	number       = 2
}

@article{kosuri2014large,
	title        = {Large-scale de novo {{DNA}} synthesis: technologies and applications},
	author       = {Kosuri, Sriram and Church, George M},
	year         = 2014,
	journal      = {Nature methods},
	publisher    = {Nature Publishing Group},
	volume       = 11,
	number       = 5
}

@article{kovavcevic2018asymptotically,
	title        = {Asymptotically optimal codes correcting fixed-length duplication errors in {{DNA}} storage systems},
	author       = {Kova{\v{c}}evi{\'c}, Mladen and Tan, Vincent YF},
	year         = 2018,
	journal      = {IEEE Communications Letters},
	publisher    = {IEEE},
	volume       = 22,
	number       = 11
}

@article{kovavcevic2018codes,
	title        = {Codes in the space of multisets---Coding for permutation channels with impairments},
	author       = {Kova{\v{c}}evi{\'c}, Mladen and Tan, Vincent YF},
	year         = 2018,
	journal      = {IEEE Transactions on Information Theory},
	publisher    = {IEEE},
	volume       = 64,
	number       = 7
}

@article{kovavcevic2019runlength,
	title        = {Runlength-Limited Sequences and Shift-Correcting Codes: Asymptotic Analysis},
	author       = {Kova{\v{c}}evi{\'c}, Mladen},
	year         = 2019,
	journal      = {IEEE Transactions on Information Theory},
	publisher    = {IEEE}
}

@article{lapidoth2008capacity,
	title        = {On the capacity of the discrete-time Poisson channel},
	author       = {Lapidoth, Amos and Moser, Stefan M},
	year         = 2008,
	journal      = {IEEE Transactions on Information Theory},
	publisher    = {IEEE},
	volume       = 55,
	number       = 1
}

@article{lapidoth2011discrete,
	title        = {The discrete-time Poisson channel at low input powers},
	author       = {Lapidoth, Amos and Shapiro, Jeffrey H and Venkatesan, Vinodh and Wang, Ligong},
	year         = 2011,
	journal      = {IEEE Transactions on Information Theory},
	publisher    = {IEEE},
	volume       = 57,
	number       = 6
}

@inproceedings{lenz2019anchor,
	title        = {Anchor-Based Correction of Substitutions in Indexed Sets},
	author       = {Lenz, Andreas and Siegel, Paul H and Wachter-Zeh, Antonia and Yaakobi, Eitan},
	year         = 2019,
	booktitle    = {2019 IEEE International Symposium on Information Theory (ISIT)},
	organization = {IEEE}
}

@article{motahari2013information,
	title        = {Information theory of {{DNA}} shotgun sequencing},
	author       = {Motahari, Abolfazl S and Bresler, Guy and David, NC},
	year         = 2013,
	journal      = {IEEE Transactions on Information Theory},
	publisher    = {IEEE},
	volume       = 59,
	number       = 10
}

@article{pabinger2014survey,
	title        = {A survey of tools for the analysis of quantitative PCR (qPCR) data},
	author       = {Pabinger, Stephan and R{\"o}diger, Stefan and Kriegner, Albert and Vierlinger, Klemens and Weinh{\"a}usel, Andreas},
	year         = 2014,
	journal      = {Biomolecular Detection and Quantification},
	publisher    = {Elsevier},
	volume       = 1,
	number       = 1
}

@article{pacbio_errorRate,
	title        = {Comprehensive comparison of Pacific Biosciences and Oxford Nanopore Technologies and their applications to transcriptome analysis},
	author       = {Weirather, Jason L and de Cesare, Mariateresa and Wang, Yunhao and Piazza, Paolo and Sebastiano, Vittorio and Wang, Xiu-Jie and Buck, David and Au, Kin Fai},
	year         = 2017,
	journal      = {F1000Research},
	publisher    = {Faculty of 1000 Ltd},
	volume       = 6
}

@article{shamai1993bounds,
	title        = {Bounds on the capacity of a spectrally constrained Poisson channel},
	author       = {Shamai, Shlomo and Lapidoth, Amos},
	year         = 1993,
	journal      = {IEEE Transactions on Information Theory},
	publisher    = {IEEE},
	volume       = 39,
	number       = 1
}

@inproceedings{shinkar2019clustering,
	title        = {Clustering-Correcting Codes},
	author       = {Shinkar, Tal and Yaakobi, Eitan and Lenz, Andreas and Wachter-Zeh, Antonia},
	year         = 2019,
	booktitle    = {2019 IEEE International Symposium on Information Theory (ISIT)},
	organization = {IEEE}
}

@inproceedings{sima2019coding,
	title        = {On coding over sliced information},
	author       = {Sima, Jin and Raviv, Netanel and Bruck, Jehoshua},
	year         = 2019,
	booktitle    = {2019 IEEE International Symposium on Information Theory (ISIT)},
	organization = {IEEE}
}

@article{van2018third,
	title        = {The third revolution in sequencing technology},
	author       = {van Dijk, Erwin L and Jaszczyszyn, Yan and Naquin, Delphine and Thermes, Claude},
	year         = 2018,
	journal      = {Trends in Genetics},
	publisher    = {Elsevier},
	volume       = 34,
	number       = 9
}

@article{verdu1999poisson,
	title        = {Poisson communication theory},
	author       = {Verd{\'u}, Sergio},
	year         = 1999,
	journal      = {International Technion Communication Day in Honor of Israel Bar-David},
	volume       = 66
}

@article{wong2015rapid,
	title        = {A rapid and low-cost PCR thermal cycler for low resource settings},
	author       = {Wong, Grace and Wong, Isaac and Chan, Kamfai and Hsieh, Yicheng and Wong, Season},
	year         = 2015,
	journal      = {PLoS one},
	publisher    = {Public Library of Science},
	volume       = 10,
	number       = 7
}

@inproceedings{anderson2018glass,
	title        = {Glass: A new media for a new era?},
	author       = {Anderson, Patrick and Black, Richard and Cerkauskaite, Ausra and Chatzieleftheriou, Andromachi and Clegg, James and Dainty, Chris and Diaconu, Raluca and Drevinskas, Rokas and Donnelly, Austin and Gaunt, Alexander L and others},
	year         = 2018,
	booktitle    = {10th $\{$USENIX$\}$ Workshop on Hot Topics in Storage and File Systems (HotStorage 18)}
}

@inproceedings{gabrys2020mass,
	title        = {Mass error-correction codes for polymer-based data storage},
	author       = {Gabrys, Ryan and Pattabiraman, Srilakshmi and Milenkovic, Olgica},
	year         = 2020,
	booktitle    = {2020 IEEE International Symposium on Information Theory (ISIT)},
	organization = {IEEE}
}

@article{laure2016coding,
	title        = {Coding in 2D: using intentional dispersity to enhance the information capacity of sequence-coded polymer barcodes},
	author       = {Laure, Chlo{\'e} and Karamessini, Denise and Milenkovic, Olgica and Charles, Laurence and Lutz, Jean-Fran{\c{c}}ois},
	year         = 2016,
	journal      = {Angewandte Chemie International Edition},
	publisher    = {Wiley Online Library},
	volume       = 55,
	number       = 36
}

@article{bhardwaj2020trace,
	title        = {Trace Reconstruction Problems in Computational Biology},
	author       = {Bhardwaj, Vinnu and Pevzner, Pavel A and Rashtchian, Cyrus and Safonova, Yana},
	year         = 2020,
	journal      = {IEEE Transactions on Information Theory},
	publisher    = {IEEE}
}

@article{rashtchian2017clustering,
	title        = {Clustering Billions of Reads for DNA Data Storage},
	author       = {Rashtchian, Cyrus and Makarychev, Konstantin and Racz, Miklos and Ang, Siena and Jevdjic, Djordje and Yekhanin, Sergey and Ceze, Luis and Strauss, Karin},
	year         = 2017,
	journal      = {Advances in Neural Information Processing Systems},
	volume       = 30
}

@article{lenz2019codingoversets,
	title        = {Coding over sets for DNA storage},
	author       = {Lenz, Andreas and Siegel, Paul H and Wachter-Zeh, Antonia and Yaakobi, Eitan},
	year         = 2019,
	journal      = {IEEE Transactions on Information Theory},
	publisher    = {IEEE},
	volume       = 66,
	number       = 4
}

@article{tamir_bee,
	title        = {Error Exponents in the Bee Identification Problem},
	author       = {Tamir, Ran and Merhav, Neri},
	year         = 2020,
	journal      = {arXiv preprint arXiv:2011.09799}
}

@article{bee2,
	title        = {The bee-identification error exponent with absentee bees},
	author       = {Tandon, Anshoo and Tan, Vincent YF and Varshney, Lav R},
	year         = 2020,
	journal      = {IEEE Transactions on Information Theory},
	publisher    = {IEEE},
	volume       = 66,
	number       = 12
}

@article{bokulich2013quality,
	title        = {Quality-filtering vastly improves diversity estimates from Illumina amplicon sequencing},
	author       = {Bokulich, Nicholas A and Subramanian, Sathish and Faith, Jeremiah J and Gevers, Dirk and Gordon, Jeffrey I and Knight, Rob and Mills, David A and Caporaso, J Gregory},
	year         = 2013,
	journal      = {Nature methods},
	publisher    = {Nature Publishing Group},
	volume       = 10,
	number       = 1
}

@inproceedings{elias1955coding,
	title        = {Coding for two noisy channels},
	author       = {Elias, Peter},
	year         = 1955,
	booktitle    = {Information Theory, 3rd London Symposium, London, England, Sept. 1955}
}

@inproceedings{tpc-globecom,
	title        = {Communicating over the Torn-Paper Channel},
	author       = {Shomorony, Ilan and Vahid, Alireza},
	year         = 2021,
	booktitle    = {IEEE Global Communications Conference (IEEE GLOBECOM)}
}

@inproceedings{tpc-isit,
	title        = {Capacity of the Torn Paper Channel with Lost Pieces},
	author       = {Ravi, Aditya N. and Vahid, Alireza and Shomorony, Ilan},
	year         = 2021,
	booktitle    = {IEEE International Symposium on Information Theory (ISIT)}
}

@incollection{pomraning2012library,
	title        = {Library preparation and data analysis packages for rapid genome sequencing},
	author       = {Pomraning, Kyle R and Smith, Kristina M and Bredeweg, Erin L and Connolly, Lanelle R and Phatale, Pallavi A and Freitag, Michael},
	year         = 2012,
	booktitle    = {Fungal Secondary Metabolism},
	publisher    = {Springer}
}

@article{mao2018models,
	title        = {Models and information-theoretic bounds for nanopore sequencing},
	author       = {Mao, Wei and Diggavi, Suhas N and Kannan, Sreeram},
	year         = 2018,
	journal      = {IEEE Transactions on Information Theory},
	publisher    = {IEEE},
	volume       = 64,
	number       = 4
}

@inproceedings{shin2020capacity,
	title        = {Capacity of the Erasure Shuffling Channel},
	author       = {Shin, Seiyun and Heckel, Reinhard and Shomorony, Ilan},
	year         = 2020,
	booktitle    = {ICASSP 2020-2020 IEEE International Conference on Acoustics, Speech and Signal Processing (ICASSP)},
	organization = {IEEE}
}

@article{DNAStorageIT,
	title        = {DNA-based storage: Models and fundamental limits},
	author       = {Shomorony, Ilan and Heckel, Reinhard},
	year         = 2021,
	journal      = {IEEE Transactions on Information Theory},
	publisher    = {IEEE},
	volume       = 67,
	number       = 6
}

@inproceedings{lenz_upper_2019,
	title        = {An upper bound on the capacity of the {{DNA}} storage channel},
	author       = {Lenz, Andreas and Siegel, P. and Wachter-Zeh, Antonia and Yaakobi, Eitan},
	year         = 2019,
	booktitle    = {IEEE Information Theory Workshop}
}

@inproceedings{lenz2020achieving,
	title        = {Achieving the Capacity of the DNA Storage Channel},
	author       = {Lenz, Andreas and Siegel, Paul H and Wachter-Zeh, Antonia and Yaakohi, Eitan},
	year         = 2020,
	booktitle    = {ICASSP 2020-2020 IEEE International Conference on Acoustics, Speech and Signal Processing (ICASSP)},
	organization = {IEEE}
}

@article{antkowiak_low_2020,
	title        = {Low cost {DNA} data storage using photolithographic synthesis and advanced information reconstruction and error correction},
	author       = {P. L. Antkowiak and J. Lietard and M. Zalbagi Darestani and M. Somoza and W. J. Stark and R. Heckel and R. N. Grass},
	year         = 2020,
	journal      = {Nature Communications}
}

@article{LanderWaterman,
	title        = {Genomic mapping by fingerprinting random clones: a mathematical analysis},
	author       = {Lander, Eric S and Waterman, Michael S},
	year         = 1988,
	journal      = {Genomics},
	publisher    = {Elsevier},
	volume       = 2,
	number       = 3
}

@inproceedings{mitzenmacher2006multi,
	title        = {On the theory and practice of data recovery with multiple versions},
	author       = {Mitzenmacher, Michael},
	year         = 2006,
	booktitle    = {2006 IEEE International Symposium on Information Theory},
	organization = {IEEE}
}

@article{heckel_channel_2019,
	title        = {A Characterization of the DNA Data Storage Channel},
	author       = {Heckel, Reinhard and Mikutis, Gediminas and Grass, Robert N.},
	year         = 2019,
	journal      = {Scientific Reports},
	volume       = 9,
	number       = 1
}

@article{church_next-generation_2012,
	title        = {Next-Generation Digital Information Storage in {{DNA}}},
	author       = {Church, G. M. and Gao, Y. and Kosuri, S.},
	year         = 2012,
	journal      = {Science},
	volume       = 337,
	number       = 6102
}

@article{baum_building_1995,
	title        = {Building an associative memory vastly larger than the brain},
	author       = {Baum, E. B.},
	year         = 1995,
	journal      = {Science},
	volume       = 268,
	number       = 5210
}

@misc{sayir_notes,
	title        = {Lecture Notes for Advanced Communications and Coding: Binary Linear Codes over the Erasure Channel},
	author       = {Jossy Sayir},
	year         = 2014
}

@article{goldman_towards_2013,
	title        = {Towards practical, high-capacity, low-maintenance information storage in synthesized {{DNA}}},
	author       = {Goldman, N. and Bertone, P. and Chen, S. and Dessimoz, C. and {LeProust}, E. M. and Sipos, B. and Birney, E.},
	year         = 2013,
	journal      = {Nature},
	volume       = 494,
	number       = 7435
}

@article{grass_robust_2015,
	title        = {Robust chemical preservation of digital information on {{DNA}} in silica with error-correcting codes},
	author       = {Grass, R. and Heckel, R. and Puddu, M. and Paunescu, D. and Stark, W. J.},
	year         = 2015,
	journal      = {Angewandte Chemie International Edition},
	volume       = 54,
	number       = 8
}

@inproceedings{bornholt_dna-based_2016,
	title        = {A {{DNA}}-{Based} {Archival} {Storage} {System}},
	author       = {Bornholt, J. and Lopez, R. and Carmean, D. M. and Ceze, L. and Seelig, G. and Strauss, K.},
	year         = 2016,
	booktitle    = {Proc. of {ASPLOS}},
	publisher    = {ACM},
	address      = {New York, NY, USA}
}

@article{yazdi_rewritable_2015,
	title        = {A rewritable, Random-Access {{DNA}}-based storage system},
	author       = {Yazdi, H. T. and Yuan, Y. and Ma, J. and Zhao, H. and Milenkovic, O.},
	year         = 2015,
	journal      = {Sci. Rep.},
	volume       = 5
}

@article{kiah_codes_2016,
	title        = {Codes for {{DNA}} sequence profiles},
	author       = {Kiah, H. M. and Puleo, G. J. and Milenkovic, O.},
	year         = 2016,
	journal      = {IEEE Trans. on Information Theory},
	volume       = 62,
	number       = 6
}

@inproceedings{gabrys_asymmetric_2015,
	title        = {Asymmetric {Lee} distance codes: {New} bounds and constructions},
	author       = {Gabrys, R. and Kiah, H. M. and Milenkovic, O.},
	year         = 2015,
	booktitle    = {2015 {IEEE} {Information} {Theory} {Workshop} ({ITW})}
}

@article{neiman_fundamental_1964,
	title        = {Some fundamental issues of microminiaturization},
	author       = {Mikhail S. Neiman},
	year         = 1964,
	journal      = {Radiotekhnika},
	volume       = 1,
	number       = 1
}

@article{organick_scaling_2017,
	title        = {Random access in large-scale DNA data storage},
	author       = {Organick, Lee and Ang, Siena Dumas and Chen, Yuan-Jyue and Lopez, Randolph and Yekhanin, Sergey and Makarychev, Konstantin and Racz, Miklos Z. and Kamath, Govinda and Gopalan, Parikshit and Nguyen, Bichlien and et al.},
	year         = 2018,
	journal      = {Nature Biotechnology}
}

@article{blawat_forward_2016,
	title        = {Forward {Error} {Correction} for {DNA} {Data} {Storage}},
	author       = {Blawat, Meinolf and Gaedke, Klaus and Huetter, Ingo and Chen, Xiao-Ming and Turczyk, Brian and Inverso, Samuel and Pruitt, Benjamin W. and Church, George M.},
	year         = 2016,
	journal      = {Procedia Computer Science},
	series       = {International {Conference} on {Computational} {Science} 2016, {ICCS} 2016, 6-8 {June} 2016, {San} {Diego}, {California}, {USA}},
	volume       = 80,
	number       = {Supplement C}
}

@article{sala_insertions_2016,
	title        = {Exact reconstruction from insertions in synchronization codes},
	author       = {Sala, F. and Gabrys, R. and Schoeny, C. and Dolecek, L.},
	year         = 2017,
	journal      = {IEEE Transactions on Information Theory}
}

@article{yazdi_portable_2017,
	title        = {Portable and error-free {DNA}-based data storage},
	author       = {Yazdi, Hossein Tabatabaei and Gabrys, Ryan and Milenkovic, Olgica},
	year         = 2017,
	journal      = {Scientific Reports},
	volume       = 7,
	number       = 1
}

@article{mackay_fountain_2005,
	title        = {Fountain codes},
	author       = {MacKay, D. J. C.},
	year         = 2005,
	journal      = {IEE Proceedings - Communications},
	volume       = 152,
	number       = 6
}

\end{document}